\documentclass[11pt]{article}

\newif\ifpreprint   \preprinttrue
\newif\ifbiblatex   \biblatexfalse

\usepackage{amsthm,amsmath,amssymb,mathtools}
\usepackage{graphicx,booktabs}
\usepackage[dvipsnames]{xcolor}
\usepackage{microtype}
\usepackage{subfig}
\usepackage{algorithm}
\usepackage{algpseudocode}
\usepackage{tikz}

\ifpreprint
  \usepackage[margin=1in]{geometry}
  \usepackage{authblk}
  \usepackage{newtxtext,newtxmath}   
\else
  \providecommand{\affil}[2][]{}     
\fi

\ifbiblatex
\else
  \usepackage{natbib}
  \let\parencite\citep
  \let\textcite\citet
\fi

\usepackage{my-macros}

\DeclareMathOperator{\Cov}{Cov}
\DeclareMathOperator{\var}{var}

\ifpreprint
  \usepackage[colorlinks=true,linkcolor=Maroon,
    urlcolor=MidnightBlue,citecolor=MidnightBlue]{hyperref}
\else
  \usepackage[colorlinks=false]{hyperref}
\fi
\usepackage[capitalise,noabbrev]{cleveref}

\providecommand{\keywords}[1]{\small\textbf{\textit{Keywords---}} #1}

\title{Experimental Design When N Equals One}

\author{Wenxuan Guo}
\author{Tengyuan Liang}

\affil{The University of Chicago}

\date{\today}

\begin{document}

\maketitle

\begin{abstract}
N-of-1 trials, or time-series experiments, are widely used in clinical research and online platforms. Yet the theoretically optimal design for estimating many treatment effects remains unclear. We propose a simple Markovian framework for experimental design in which the treatment assignment process is governed by possibly time-varying transition matrices. This formulation encompasses many existing N-of-1 designs and provides a principled way to control temporal dependence in treatment assignment through Markov transition probabilities. Under a finite-order impulse-response model, we formulate the design objective as minimizing the estimation error of ordinary least squares estimators for target treatment effects, and propose practical design optimization procedures. To characterize the optimal temporal structure, we focus on two structured design classes, random-switch and cycle-switch designs, and establish a complete large-$T$ asymptotic theory for the optimal designs in both classes. Our results justify the robustness of i.i.d. Bernoulli designs in N-of-1 trials and quantify how the optimal design depends on the target estimand, including cumulative and lag-specific treatment effects. Simulations demonstrate the effectiveness and robustness of the proposed designs across multiple scenarios.
\end{abstract}

\keywords{causal inference, Markov chain, experimental design, time series.}

\tableofcontents


\section{Introduction}\label{sec:intro}
N-of-1 trials, also referred to as time-series experiments, have been widely applied in clinical research \parencite{lillie2011n}, online platforms \parencite{bojinov2023design}, and other domains \parencite{mirza2017history,hawksworth2024review}. In such trials, treatment assignment is repeatedly randomized over time for a single experimental unit. N-of-1 trials allow treatment assignments and outcomes to evolve over time, making them particularly attractive in modern applications where repeated interventions are feasible, experimental units are limited, and the scientific objective is to estimate cumulative or lagged treatment effects. Moreover, N-of-1 trials are naturally suited for individualized treatment evaluation, allowing investigators to assess treatment effects at the level of a single experimental unit \parencite{bojinov2020avoid,liang2023randomization}.

Despite their growing popularity, the choice of optimal design in N-of-1 experiments remains to be elucidated. A central difficulty is that different target effects favor different forms of temporal dependence in the treatment assignment path. For example, independent Bernoulli assignment is commonly used for estimating lag-specific treatment effects \parencite{lin2025unifying,liang2023randomization}, whereas switchback designs, which hold treatment fixed over consecutive time windows, are often used for estimating cumulative or carryover effects. These examples suggest that design should be tailored to the target estimand, but they do not provide a general principle on experimental design. In particular, with a few exceptions \parencite{bojinov2023design,chen2025efficient}, optimal designs for a given estimand remain poorly characterized. The problem becomes even less clear when the target is a class of estimands, such as multiple lag-specific effects or cumulative effects over different time windows. Thus, there is a need for a general design framework that can flexibly target different treatment effects while providing optimality guarantees.

In this paper, we adopt a Markovian viewpoint and model the design of N-of-1 trials through Markov transition matrices. Specifically, we model the treatment assignment path $Z_1,\dots,Z_T$ as a Markov chain, with the transition from $Z_t$ to $Z_{t+1}$ governed by
\[
P_t = \begin{bmatrix}
    1 - \rho_t & \rho_t \\
    \gamma_t & 1 - \gamma_t \\
\end{bmatrix}\;, \quad \rho_t, \gamma_t \in [0, 1]\;.
\]
The two states of the Markov chain correspond to control and treatment, respectively. The Markovian viewpoint provides a characterization of the treatment assignment distribution, and encompasses the existing designs as special cases. For instance, by setting $\rho_t + \gamma_t = 1$, we recover the independent time-series experiments in \textcite{liang2023randomization} and \textcite{lin2025unifying} (Example \ref{ex:indep}). By enforcing a periodic structure on $P_t$, we recover the switchback experiments in \textcite{bojinov2023design,chen2025efficient} (Example \ref{ex:regular}).

To define the design objective and search for the optimal design, we follow the literature on dynamic treatment effects and temporal interference \parencite{granger1969,sims1972money,liang2023randomization} and adopt a finite-order impulse-response model. Specifically, we assume that outcomes depend additively on the current and past treatment assignments through
\begin{equation}\label{eq:model}
Y_t = \alpha + \sum_{k=1}^K \beta_k Z_{t-k+1} + \epsilon_t\;.
\end{equation}
Here, $\{Y_t\}_{t=K}^T$ are the (potentially detrended) outcomes, $\{\epsilon_t\}_{t=K}^T$ are i.i.d. errors, and $\beta_k$ captures the effect of a treatment assignment after $k-1$ periods. The impulse-response model enables the estimation of a large class of treatment effects from linear combinations. For a given estimand vector $w=(w_1,\dots,w_K)^\top \in \mathbb{R}^K$, we define the target treatment effect and its estimator as
\[
\tau_w = \sum_{k=1}^K w_k \beta_k\;,
\qquad \widehat{\tau}_w = \sum_{k=1}^K w_k \widehat{\beta}_k\;.
\]
Here, $\widehat{\beta}_k$, $k=1, \dots, K$ are ordinary least squares (OLS) estimates from Model \eqref{eq:model}. Then, our goal is to find the optimal design under the Markovian modeling that minimizes the estimation error $\var(\widehat{\tau}_w)$.

The Markovian framework, combined with the impulse-response model, offers several advantages for addressing the experimental design problem in N-of-1 trials. First, it allows experimenters to evaluate the estimation error---namely, $\var(\widehat{\tau}_w)$---of candidate designs before running the experiment. Second, by imposing structured restrictions on the candidate design class, the design problem can be formulated as an optimization problem over Markov parameters, yielding a tractable approach to design selection. Last, when the experimenter is interested in multiple treatment effects (through multiple $w$), the design selection procedure can be naturally extended to control multiple estimation variances through a minimax optimization criterion.

We illustrate the three advantages through a visualization of the design selection procedure. In Figure~\ref{fig:demo}, each row represents a candidate design parameterized by $\{(\rho_t,\gamma_t)\}_{t=1}^{T-1}$, and each column represents a target effect. The color of each cell indicates the estimation error for the corresponding design-target pair, with the smallest error in each column marked by a red circle. The final column aggregates the target-specific errors by their worst-case value, following the minimax design perspective \parencite{basse2023minimax,cox2000theory}.

\begin{figure}[h]
\centering
\begin{tikzpicture}[x=1cm,y=1cm]
\def\m{5}
\def\cellw{0.8}
\def\cellh{0.8}
\def\gap{1.3}
\def\dotgap{1.5}

\pgfmathsetmacro{\xA}{0}
\pgfmathsetmacro{\xB}{\xA+\cellw+\gap}
\pgfmathsetmacro{\xC}{\xB+\cellw+\gap}
\pgfmathsetmacro{\xDots}{\xC+\cellw+\dotgap}
\pgfmathsetmacro{\xD}{\xDots+\dotgap}
\pgfmathsetmacro{\xE}{\xD+\cellw+\gap}
\pgfmathsetmacro{\xF}{\xE+\cellw+\gap}

\node[font=\small] at (\xA+0.5*\cellw,0.5) {effect 1};
\node[font=\small] at (\xB+0.5*\cellw,0.5) {effect 2};
\node[font=\small] at (\xC+0.5*\cellw,0.5) {effect 3};
\node[font=\small] at (\xDots,0.5) {$\cdots$};
\node[font=\small] at (\xD+0.5*\cellw,0.5) {effect $J\!-\!1$};
\node[font=\small] at (\xE+0.5*\cellw,0.5) {effect $J$};
\node[font=\small] at (\xF+0.5*\cellw,0.5) {worst-case};

\newcommand{\drawcell}[3]{%
    \pgfmathsetmacro{\y}{-#2*\cellh}%
    \filldraw[draw=black, fill=blue!#3!white]
        (#1,\y) rectangle (#1+\cellw,\y+\cellh);
}

\def\vmin{11}
\def\vmax{72.8}
\foreach \i/\sA/\sB/\sC/\sD/\sE/\sF in {
    1/61.6/61.6/61.6/61.6/61.6/61.6,
    2/49.1/29.6/68.7/55.4/68.7/68.7,
    3/30.3/18.9/75.8/50.7/75.8/75.8,
    4/49.1/13.6/82.9/45.2/82.9/82.9,
    5/61.6/10.0/90.0/40.3/90.0/90.0
}{
    \pgfmathsetmacro{\y}{-\i*\cellh}

    \node[anchor=east,font=\small] at (-0.2,\y+0.5*\cellh)
        {design $\i$};

    \node[font=\large] at (\xDots,\y+0.5*\cellh) {$\cdots$};

    \drawcell{\xA}{\i}{\sA}
    \drawcell{\xB}{\i}{\sB}
    \drawcell{\xC}{\i}{\sC}
    \drawcell{\xD}{\i}{\sD}
    \drawcell{\xE}{\i}{\sE}
    \drawcell{\xF}{\i}{\sF}
}

\draw[thick] (\xA,-\m*\cellh) rectangle (\xA+\cellw,0);
\draw[thick] (\xB,-\m*\cellh) rectangle (\xB+\cellw,0);
\draw[thick] (\xC,-\m*\cellh) rectangle (\xC+\cellw,0);
\draw[thick] (\xD,-\m*\cellh) rectangle (\xD+\cellw,0);
\draw[thick] (\xE,-\m*\cellh) rectangle (\xE+\cellw,0);
\draw[thick] (\xF,-\m*\cellh) rectangle (\xF+\cellw,0);

\foreach \x/\r in {
    \xA/3,
    \xB/5,
    \xC/1,
    \xD/5,
    \xE/1,
    \xF/1
}{
    \pgfmathsetmacro{\yc}{-\r*\cellh + 0.5*\cellh}
    \draw[red,very thick] (\x+0.5*\cellw,\yc) circle[radius=0.6];
}

\pgfmathsetmacro{\yopt}{-3*\cellh + 0.5*\cellh}
\pgfmathsetmacro{\xright}{\xF+\cellw}
\end{tikzpicture}
\caption{Illustration of minimax design optimization. Each row represents a candidate design and each column represents an estimand. The omitted middle estimands are indicated by dots.}
\label{fig:demo}
\end{figure}

To gain insight into the temporal dependence structure of the optimal design, we further study the theoretical behavior of the optimal design parameters. Specifically, we focus on two structured classes of designs and examine optimal temporal dependence through two complementary features: the switching probabilities over time and the lengths of persistent treatment and control periods.

\begin{enumerate}
  \item \textbf{Optimal design theory for switching probabilities (Section \ref{sec:homog}):} We consider a structured class of \emph{random-switch designs} with a constant Markov transition matrix, that is, $\rho_t=\rho$ and $\gamma_t=\gamma$. Under such designs, the temporal treatment dependence is fully characterized by the switching probabilities $\rho$ and $\gamma$. We analyze the behavior of optimal design parameters $\rho^\star$ and $\gamma^\star$ for a variety of target treatment effects. 
  \item \textbf{Optimal design theory for persistent treatment cycles (Section \ref{sec:exact}):} We consider another structured class of \emph{cycle-switch designs} as deterministic treatment assignment processes in which the treatment status switches every $l$ time points. We characterize the optimal block length $l^\star$ for different target effects.
\end{enumerate}

Based on the theoretical analysis, Table \ref{tab:opt} highlights several preferred design patterns in terms of switching probabilities and block lengths. From the switching-probability perspective, the optimal switching probability is $1/2$ for estimating lag-specific treatment effects and for constructing robust designs, which corresponds to the i.i.d. Bernoulli(1/2) randomization. By contrast, for cumulative treatment effects, the optimal switching probability moves to the boundary, leading to ill-behaved limiting designs. From the cycle-length perspective, the optimal block length is approximately $K+\sqrt{K}$ for cumulative treatment effects, while it is $K$ for estimating lag-specific effects and for robust designs. The optimal design parameters for general target estimands are also characterized in the paper.

\begin{table}[h]
  \centering
  \begin{tabular}{c|c|c|c}
    \hline
     & cumulative effect & lag-specific effect & robust \\
    \hline
    switching probability & - & $\rho^\star = \gamma^\star = 1/2$ &  $\rho^\star = \gamma^\star = 1/2$ \\
    block length & $l^\star \approx K-1+\sqrt{K-1}$ & $l^\star \approx K$ & $l^\star = K$\\
    \hline
  \end{tabular}
  \caption{Informal summary of optimal design parameters for different estimation targets. $(\rho, \gamma)$ and $l$ are the design parameters under random-switch and cycle-switch designs. $K$ is the total number of lags in Model \eqref{eq:model}. The switching-probability results follow from Example \ref{ex:lag} and Theorem \ref{thm:robust_random}, and the block-length results follow from Corollaries \ref{cor:cumulative} and \ref{cor:lag}, and Theorem \ref{thm:robust2}.}
  \label{tab:opt}
\end{table}

In Section \ref{sec:simu}, we implement the proposed designs and compare their numerical performance with state-of-the-art designs for N-of-1 trials and switchback experiments. The results demonstrate the effectiveness of our procedures under the impulse-response model. In addition, under model misspecification, random-switch designs exhibit robustness and provide reasonable treatment effect estimates across multiple practical scenarios.

\subsection{Related Work}

In longitudinal studies, Robins and his collaborators developed the $g$-methods framework for causal inference with time-varying treatments and covariates \parencite{robins1986new,robins1999estimation,robins2000marginal}, allowing current outcomes to depend on past treatments, outcomes, and covariates; see also \textcite{zeger1986longitudinal,blackwell2018make,hernan2020causal}. A related line of work studies dynamic treatment regimes and adaptive treatment strategies, where treatment decisions are allowed to depend on an individual's evolving history \parencite{murphy2003optimal,robins2004optimal,murphy2001marginal,murphy2005experimental}. The causal effect in such studies is usually a population average over a cohort of experimental units. Since the number of possible treatment paths and counterfactual outcomes grows exponentially over time, parsimonious models are often introduced to make estimation and inference feasible; our impulse-response model \eqref{eq:model} follows the same principle.

Recent years have witnessed a growing literature on the design and analysis of N-of-1 trials and time-series experiments. Unlike classical longitudinal studies, these experiments typically observe a single experimental unit over a potentially long time horizon, and much of the existing analysis is design-based. \textcite{bojinov2019time} studied a general class of causal effects and developed design-based methods for estimation and inference. Building on this work, \textcite{bojinov2023design,chen2025efficient} focused on cumulative treatment effects and studied optimal regular switchback experiments under different causal mechanisms. \textcite{lin2025unifying} estimated lag-specific treatment effects using regression-based estimators, while \textcite{liang2023randomization} studied a broad class of causal estimands under a linear impulse-response model and developed method-of-moments estimators that allow arbitrary noise structures. Both works suggest the robustness of independent treatment assignments in N-of-1 trials. Our work is model-based and complements the design-based approaches discussed above. This distinction reflects a tradeoff between generality and interpretability. By focusing on the impulse-response model, the treatment effects reduce to linear functionals of model parameters that are easy to interpret. Moreover, the model enables a unified design optimization procedure that accommodates a broad range of estimands, including cumulative and lag-specific treatment effects. In addition, Remark~\ref{rmk:est} and Section~\ref{sec:connection} provide a design-based interpretation of our approach, connecting it to existing work such as \textcite{lin2025unifying}.

The study of optimal experimental design has a long history. Early work focused on specified outcome models and introduced optimality criteria such as D-, G-, and E-optimality \parencite{kiefer1959optimum,cox2000theory}. In particular, E-optimality minimizes the worst-case variance over a class of linear combinations of model coefficients, so our minimax design objective can be viewed as a generalization of E-optimality to N-of-1 trials. The outcome models considered in this literature are often formulated as response surfaces, with low-order polynomials of experimental factors used as regressors \parencite{box1959basis,dean2015handbook}. More recent work has adapted classical optimal design ideas to modern online experimentation; for example, \textcite{bhat2020near} used $D_A$-optimality, a variant of D-optimality, to construct near-optimal designs for online A/B experiments.

A complementary line of work studies optimal experimental design from a model-agnostic and design-based perspective. \textcite{harshaw2024balancing} characterized the tradeoff between covariate balance and robustness and proposed Gram-Schmidt Walk designs to navigate this tradeoff. \textcite{guo2025gauss} studied optimal covariate balance by modeling treatment assignments through latent Gaussian variables, thereby converting the design problem into an optimization problem in Gaussian space. See \textcite{zhao2024experimental} for a comprehensive review of related work. Compared with this literature, our work is model-based in the spirit of response surface methodology, but focuses on lagged treatment assignments as regressors for a single experimental unit.

\section{Parametrization of the Design Space}\label{sec:param}
Throughout the paper, we use the uppercase letter \(Z_t\) to denote the treatment assignment generated by an experimental design. We use the lowercase letter \(z_t\in\{0,1\}\) to denote a fixed treatment value.

In this section, we parametrize the space of probability distributions in designing time-series experiments using Markov transition probabilities. This formulation yields a flexible class of designs that includes independent assignments, regular switchback experiments, and deterministic switching as special cases. We define this class as follows. 

\begin{definition}\label{def:markov}
The treatment path $\{Z_t\}_{t=1}^T$ is said to follow a Markovian assignment design with parameters $\{(\rho_t,\gamma_t)\}_{t=1}^{T-1}$ if it is a Markov chain on $\{0,1\}$ with transition matrix
\[
P_t =
\begin{bmatrix}
    1-\rho_t & \rho_t \\
    \gamma_t & 1-\gamma_t
\end{bmatrix},
\qquad t=1,\ldots,T-1\;.
\]
\end{definition}
In Definition~\ref{def:markov}, we leave the initial treatment probability, $\Pr(Z_1=1)$, unrestricted. We specify the initial treatment probability later when necessary.
The parameters $\rho_t$ and $\gamma_t$ determine the switching probabilities, since
\[
\Pr(Z_{t+1}=1\mid Z_t=0)=\rho_t\;,\qquad
\Pr(Z_{t+1}=0\mid Z_t=1)=\gamma_t \;.
\]
Different choices of $(\rho_t,\gamma_t)$ recover several treatment-assignment strategies commonly used in N-of-1 trials. We give a few examples below.
\begin{example}[Independent treatment assignment]\label{ex:indep}
If $\rho_t+\gamma_t=1$, then $\gamma_t=1-\rho_t$ and
\[
P_t =
\begin{bmatrix}
    1-\rho_t & \rho_t \\
    1-\rho_t & \rho_t
\end{bmatrix}\;.
\]
In this case, the transition probabilities do not depend on the current treatment status. Hence the chain forgets its history after one step, and $Z_{t+1}$ is assigned independently with treatment probability $\rho_t$. This recovers the independent treatment assignments studied by \textcite{liang2023randomization,lin2025unifying}.
\end{example}

\begin{example}[Regular switchback experiments]\label{ex:regular}
Let $1=t_0<t_1<\cdots<t_M\le T$ denote the randomization times, and let $(q_0,\ldots,q_M)\in(0,1)^{M+1}$. Set $\Pr(Z_1=1)=q_0$. For
$t=1,\dots,T-1$, define
\[
P_t =
\begin{bmatrix}
    1 & 0 \\
    0 & 1
\end{bmatrix}
\qquad
\text{if } t\notin\{t_1-1,\ldots,t_M-1\}\;,\qquad P_{t_m-1} =
\begin{bmatrix}
    1-q_m & q_m \\
    1-q_m & q_m
\end{bmatrix},
\qquad m=1,\ldots,M\;.
\]
Under this specification, the treatment status is held fixed between successive randomization times, while at each randomization time $t_m$ it is redrawn independently with treatment probability $q_m$. Thus the Markov assignment design
recovers regular switchback experiments \parencite{bojinov2023design,chen2025efficient}. These designs can be viewed as cluster randomization \parencite{fisher1926arrangement} in the time dimension, with each time window forming a
cluster.
\end{example}

\begin{example}[Deterministic designs]\label{ex:deterministic}
Two boundary choices of the Markov parameters yield deterministic assignment rules. If $\rho_t=\gamma_t=1$, then $P_t =
\begin{bmatrix}
    0 & 1 \\
    1 & 0
\end{bmatrix}$, so the treatment status switches deterministically at every time point. This is the alternating design. Conversely, if $\rho_t=\gamma_t=0$, then $P_t =
\begin{bmatrix}
    1 & 0 \\
    0 & 1
\end{bmatrix}$,
so the treatment status is held fixed and never switches. Thus deterministic alternation and deterministic persistence arise as two limiting cases of the Markovian assignment design.
\end{example}

The Markovian viewpoint yields closed-form expressions for the moments of the treatment-assignment process. Define the product kernel
\[
A(s,t) \coloneqq
\begin{cases}
\prod_{j=s}^t (1-\gamma_j-\rho_j), & \text{if } s \leq t, \\
1, & \text{if } s > t .
\end{cases}
\]
The following proposition characterizes the first and second moments of the design. These moment expressions will be used in Sections~\ref{sec:homog} and~\ref{sec:exact} to formulate design objectives. The proof is given in Section~\ref{sec:proof} of the Appendix.

\begin{proposition}\label{prop:moments}
For any $1 \leq s < t \leq T$,
\begin{align*}
    \Pr(Z_s=1) &= A(1, s-1) \Pr(Z_1 = 1)+ \sum_{i=1}^{s-1} A(i+1, s-1) \rho_i\;, \\
    \Pr(Z_t=1 \mid Z_s=1) &= A(s, t-1) + \sum_{i=s}^{t-1} A(i+1, t-1) \rho_i \;.
\end{align*}
\end{proposition}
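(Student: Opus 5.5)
Both identities follow from a one-step linear recursion for the probability that the chain is in state $1$, which can then be unrolled. Fix any conditioning event $E$ that is measurable with respect to $(Z_1,\dots,Z_j)$ and has positive probability; $E$ will be either the trivial event or $\{Z_s=1\}$ with $s\le j$. Write $p_{j}=\Pr(Z_j=1\mid E)$. By the Markov property and Definition~\ref{def:markov},
\[
p_{j+1}=\Pr(Z_{j+1}=1\mid Z_j=1)\,p_j+\Pr(Z_{j+1}=1\mid Z_j=0)\,(1-p_j)=(1-\gamma_j)p_j+\rho_j(1-p_j),
\]
that is, $p_{j+1}=(1-\gamma_j-\rho_j)\,p_j+\rho_j$. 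This is an affine recursion with multiplier $c_j\coloneqq 1-\gamma_j-\rho_j$, and the product kernel $A(s,t)$ is exactly $\prod_{j=s}^t c_j$.

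The key step is the unrolling claim: for $a\le b$,
\[
p_b = A(a,b-1)\,p_a+\sum_{i=a}^{b-1}A(i+1,b-1)\rho_i .
\]
I will prove this by induction on $b$. The base case $b=a$ reads $p_a=A(a,a-1)p_a$, which holds since the sum is empty and $A(a,a-1)=1$ by the convention for $s>t$. For the inductive step, apply the recursion to get $p_{b+1}=c_b p_b+\rho_b$. Substituting the inductive hypothesis, two identities close the argument: $c_bA(a,b-1)=A(a,b)$ and $c_bA(i+1,b-1)=A(i+1,b)$ for $i\le b-1$. The new term $\rho_b$ is the $i=b$ summand, because $A(b+1,b)=1$.

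Applying the claim gives both statements. For the first, take $E$ trivial, $a=1$ and $b=s$. For the second, take $E=\{Z_s=1\}$, $a=s$ and $b=t$, and use $p_s=1$, so that $A(s,t-1)p_s=A(s,t-1)$. For the conditional version I must justify that the Markov property still yields $\Pr(Z_{j+1}=1\mid Z_j, Z_s=1)=\Pr(Z_{j+1}=1\mid Z_j)$ when $j\ge s$. This holds because $\{Z_s=1\}$ lies in the past $\sigma$-field at time $j$. I should also assume $\Pr(Z_s=1)>0$ so the conditioning is well defined.

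I expect no real obstacle. The only delicate points are the index bookkeeping at the boundary conventions of $A$ (the empty-product cases) and the case $\Pr(Z_j=1\mid E)\in\{0,1\}$. In that case one of the conditionals in the recursion is undefined, but it enters with zero weight, so the recursion still holds.
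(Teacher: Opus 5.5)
Your proof is correct and is essentially the paper's argument: the one-step affine recursion $p_{j+1}=(1-\gamma_j-\rho_j)p_j+\rho_j$, unrolled by induction using the product-kernel conventions. The only cosmetic difference is how $\Pr(Z_s=1)$ is obtained. The paper derives it via total probability over $Z_1$, combining the conditional formulas given $Z_1=1$ and $Z_1=0$. You instead unroll the same recursion for the unconditional probabilities starting from $p_1=\Pr(Z_1=1)$, which avoids the separate $Z_s=0$ formula and is slightly more economical.
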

Proposition~\ref{prop:moments} immediately yields the required first and second moments since
\[
  \E Z_s = \Pr(Z_s=1)\;, \qquad
  \E Z_s Z_t = \Pr(Z_s=1)\Pr(Z_t=1\mid Z_s=1)\;.
\]
In the next section, we introduce the finite-order impulse-response model. Under this model, the estimation variance can be written in terms of these treatment-assignment moments. Proposition~\ref{prop:moments} therefore provides the main building block for deriving tractable design optimization criteria in our work.

\section{Estimand, Estimator, and Design Objective}
In this section, we specify the outcome model and introduce design objectives that quantify estimation error under the working model. Following \textcite{liang2023randomization}, we consider the finite-order impulse-response model
\begin{equation}\label{eq:main_model}
Y_t  = \alpha + \sum_{k=1}^K \beta_k Z_{t-k+1} + \epsilon_t\;,
\qquad
\epsilon_t \stackrel{\mathrm{iid}}{\sim} \mathcal{N}(0,\sigma^2)\;, 
\qquad 
t = K, \dots, T\;.
\end{equation}
Here, $\alpha$ denotes the baseline mean, $\beta_k$ is the effect of treatment received at lag $k-1$, and $\epsilon_t$ is an idiosyncratic error term. We assume that the errors are independent of the treatment-assignment process, so that treatment assignments are exogenous and do not depend on past outcomes. Throughout the paper, we treat the lag length $K$ as fixed while allowing the time horizon $T$ to grow to infinity.

The model specification above departs from the design-based approach commonly used in N-of-1 trials \parencite{liang2023randomization,bojinov2019time,bojinov2023design,lin2025unifying}. This departure reflects a tradeoff between generality and interpretability. On the one hand, design-based approaches are nonparametric and can deliver valid estimators under weak outcome-model assumptions. On the other hand, the resulting estimands in N-of-1 trials often take the form of complex time averages of potential outcomes, and commonly used estimands can be design-dependent \parencite{bojinov2019time,lin2025unifying}. Our model-based approach sacrifices some generality in causal effect estimation, but yields simpler and design-invariant estimands. It also provides a unified framework for design optimization in N-of-1 trials, as described in Section \ref{sec:intro}. In addition, in Remark~\ref{rmk:est} we provide a design-based interpretation of our approach.

The model-based approach remains connected to the Neyman-Rubin potential outcome framework \parencite{neyman1923,rubin1974estimating}. Let $Y_t(z_1,\ldots,z_T)$ denote the potential outcome at time $t$ under the full treatment path
$(z_1,\ldots,z_T)\in\{0,1\}^T$. Our model imposes that
\[
Y_t(z_1, \dots, z_T) = Y_t(z_{t-K+1},\ldots,z_t) =
\alpha+\sum_{k=1}^K \beta_k z_{t-k+1}+\epsilon_t\;.
\]
This representation makes two restrictions explicit. First, the model imposes finite carryover: the outcome at time $t$ depends only on the current and previous $K-1$ treatment assignments. This assumption is common in the N-of-1 literature
\parencite{bojinov2019time,bojinov2023design,chen2025efficient}. Infinite carryover has also been studied \parencite{liang2023randomization,hu2025geometric}, and here we focus on finite carryover to obtain tractable design optimizations. Second, the model satisfies non-anticipation: the potential outcome at time $t$ does not depend on future treatment assignments \parencite{bojinov2019time,liang2023randomization,lin2025unifying}.

\begin{remark}[Model misspecification and detrended outcomes]
To reduce the risk of model misspecification, we interpret $y_t$ as a detrended outcome of interest, obtained after adjusting for predictable time variation. Such adjustment is feasible, for example, when pre-experiment historical outcomes are available, as is often the case in modern online experiments \parencite{xiong2024data}. In addition, our numerical experiments in Section~\ref{sec:simu} show that the optimal designs obtained under our framework exhibit robustness to several forms of model misspecification.
\end{remark}

\subsection{Estimand and Estimator}
Given Model \eqref{eq:main_model}, we consider treatment effects that are linear combinations of the impulse-response coefficients. For a given vector $w\in\mathbb{R}^K$, define
\[
\tau_w = \sum_{k=1}^K w_k\beta_k = w^\top\beta \;.
\]
This class includes meaningful causal effects for N-of-1 trials. Throughout the paper, we focus primarily on the following two choices of $w$.
\begin{example}[Cumulative treatment effects]
If $w=(1,\ldots,1)^\top$, then $\tau_w=\sum_{k=1}^K \beta_k$ is the cumulative effect of treatment over the current period and the previous $K-1$ periods. Under Model~\eqref{eq:main_model}, it is the contrast between sustained treatment and sustained control over a length-$K$ treatment history. Cumulative effects serve as a key causal quantity to study in longitudinal experiments \parencite{robins1986new} and modern time-series experiments \parencite{bojinov2023design,chen2025efficient}.
\end{example}

\begin{example}[Lag-specific treatment effects]
Let $e_k$ denote the $k$-th canonical basis vector in $\mathbb{R}^K$. Setting
$w=e_k$ gives $\tau_w=\beta_k$, which captures the effect of treatment received at lag $k-1$ on the current outcome. This corresponds to the lag-specific treatment effect studied by \textcite{lin2025unifying}.
\end{example}

We estimate $\tau_w$ using ordinary least squares (OLS). Based on Model \eqref{eq:main_model}, define the lagged treatment vector
\[
X_t=(Z_t,Z_{t-1},\dots,Z_{t-K+1})^\top \in \R^K\;,
\qquad t=K,\dots,T\;.
\]
Let $T_\text{eff}=T-K+1$ denote the effective number of observations after forming the $K$ lagged treatment variables, and define $\bar{X} = \sum_{t=K}^T X_t / T_\text{eff}$. Then, the OLS estimator for $\beta$ can be expressed as
\begin{equation}\label{eq:bhat}
\widehat\beta =
\Bigl( \frac1{T_\text{eff}}\sum_{t=K}^T (X_t-\bar X)(X_t-\bar X)^\top \Bigr)^{-1} \Bigl( \frac1{T_\text{eff}}\sum_{t=K}^T (X_t-\bar X)Y_t \Bigr)\;.
\end{equation}
The corresponding OLS estimator for $\tau_w$ is
\[
\widehat{\tau}_w = w^\top \widehat{\beta}\;.
\]

\begin{remark}[Interpretation of $\widehat{\beta}$ under design-based inference]\label{rmk:est}
From a design-based perspective, the OLS estimator need not rely on the linear outcome model as a literal data-generating model. Without imposing such a model, the OLS estimator can instead be interpreted as estimating a best linear projection of the observed outcomes onto the treatment history. More concretely, in Section \ref{sec:connection}, we show that under design-based inference, $\widehat{\beta}$ converges to a vector of conditional-outcome contrasts induced by the experimental design. Moreover, under i.i.d. Bernoulli randomization or linear additive potential outcomes, this limiting projection target coincides with the lagged treatment effect estimand studied by \textcite{lin2025unifying}. Thus, even beyond the model-based interpretation, the OLS estimator retains a design-based causal interpretation. We refer readers to Section \ref{sec:connection} of the Appendix for further details.
\end{remark}

\begin{remark}[Alternative design-based estimators]
Our Markovian designs can also be used with other design-based estimands for time-series experiments. For example, the Horvitz-Thompson-type estimators of \textcite{bojinov2019time} can be computed directly under our random-switch designs by using the corresponding Markov transition probabilities. When the transition probabilities are bounded away from zero and one, these estimators remain unbiased for their associated causal estimands \parencite{bojinov2019time}. The distinction is that our design objective in the following section is derived from the finite-order linear model. Therefore, the theoretical precision gains of the optimal designs are guaranteed for the model-based objective, but they do not automatically extend to alternative design-based estimands.
\end{remark}

\subsection{Design Objective and Optimization}
Under the linear impulse-response model \eqref{eq:main_model}, it is easy to verify that 
\[
\widehat{\beta} = \beta + \Bigl( \frac1{T_\text{eff}}\sum_{t=K}^T (X_t-\bar X)(X_t-\bar X)^\top \Bigr)^{-1} \Bigl( \frac1{T_\text{eff}}\sum_{t=K}^T (X_t-\bar X)\epsilon_t \Bigr)\;.
\]
Therefore, 
\[
\E (\widehat{\beta} \mid \{Z_t\}_{t=1}^T) = \beta\;, \quad \Cov(\widehat{\beta}\mid \{Z_t\}_{t=1}^T) \propto \Sigma^{-1}\;, \quad \Sigma = \frac1{T_\text{eff}}\sum_{t=K}^T (X_t-\bar X)(X_t-\bar X)^\top\;.
\]
That is, conditional on the realized treatment assignments, the OLS estimate is unbiased and its variance is governed by the inverse of the Gram matrix $\Sigma$, which is the scaled Fisher information for $\beta$ after partialling out the intercept. Consequently, $\widehat{\tau}_w$ is an unbiased estimator of $\tau_w$. By the law of total variance, $\widehat{\tau}_w$ has the variance as below:
\[
\var(\widehat{\tau}_w) \propto w^\top \E(\Sigma^{-1}) w\;.
\]
This leads to the definition of the design objective function.
\begin{definition}\label{def:obj}
For a given vector $w\in \R^{K}$, we define the asymptotic design objective as
\begin{equation}\label{eq:obj}
\cL_{asy}(\{(\rho_t, \gamma_t)\}_{t=1}^{T-1}; w) = w^\top (\E \Sigma)^{-1} w\;.
\end{equation}
\end{definition}
Our objective $\cL_{asy}(\cdot)$ can be interpreted as an asymptotic measure of the estimation variance $w^\top \E(\Sigma^{-1}) w$. To see this, suppose that the matrix $\Sigma$ converges in probability to a limit $\Sigma_* \in \R^{K\times K}$, a positive definite matrix. 
Then, under suitable inverse-moment conditions and the continuous mapping theorem, we have
\[
\lim_{T\to\infty} (\E \Sigma)^{-1} =
\lim_{T\to\infty}
\E (\Sigma^{-1}) = \Sigma_*^{-1}\;.
\]
Thus, when the normalized Gram matrix converges to a positive definite limit, $w^\top \E(\Sigma^{-1}) w$ and $w^\top (\E\Sigma)^{-1} w$ are asymptotically equivalent. 

Notably, the matrix $\E\Sigma$ in Equation~\eqref{eq:obj} may not always be invertible, in which case the design objective is ill-defined. We handle such singular cases using the following convention:
\begin{equation}\label{eq:inv}
w^\top (\E\Sigma)^{-1} w
\coloneqq 
\begin{cases}
w^\top (\E\Sigma)^\dagger w, & \text{if } w\in\operatorname{Im}(\E \Sigma)\\
\infty, & \text{if } w\notin\operatorname{Im}(\E \Sigma)
\end{cases}\;,
\end{equation}
where $A^\dagger$ denotes the Moore-Penrose inverse of a matrix $A$. In other words, for $w\notin\operatorname{Im}(\E\Sigma)$, the corresponding treatment effect is not estimable under the design, and we set the objective value to $\infty$.

The objective \(\cL_{asy}\) therefore provides a surrogate loss for measuring estimation variance. Its main advantage is computational tractability: it only requires inverting the expected Gram matrix, \((\E\Sigma)^{-1}\), rather than computing the expectation of the inverse, \(\E(\Sigma^{-1})\), as in the exact variance expression. Moreover, combined with the moment formula in Proposition \ref{prop:moments}, the asymptotic objective yields theoretical insights into the optimal design, as discussed later in Sections~\ref{sec:homog} and~\ref{sec:exact}. In finite samples, \(\cL_{asy}\) can also be interpreted as an optimistic objective: since the map \(\Sigma\mapsto w^\top\Sigma^{-1}w\) is convex on the positive definite cone, Jensen's inequality gives
\[
\cL_{asy} = w^\top (\E \Sigma)^{-1} w \le w^\top \E (\Sigma^{-1}) w\;.
\]

Suppose that we are given a finite set of $n$ candidate designs $\cD = \{d_i\}_{i=1}^n$, where each design $d_i = \{(\rho_t^i,\gamma_t^i)\}_{t=1}^{T-1}$ is constructed using, for instance, grid search or domain knowledge. Also, we consider a finite set of estimands of interest $\cW\subset\R^K$. Under the design objective $\cL_{asy}$, we propose the design optimization procedure in Algorithm \ref{alg:design_opt}. 

\begin{algorithm}[h]
\caption{Markovian Design Optimization}
\label{alg:design_opt}
\begin{algorithmic}[1]
\Require Candidate design class \(\cD=\{d_i\}_{i=1}^n\), estimand class \(\cW\)

\For{\(i=1,\dots,n\)}
    \State Compute the expected Gram matrix \(\E\Sigma^i\) under design $d_i = \{(\rho_t^i,\gamma_t^i)\}_{t=1}^{T-1}$.
    \State Evaluate the worst-case asymptotic objective
    \[
    V_i
    \coloneqq
    \max_{w\in\cW}
    \cL_{asy}(d_i;w)
    =
    \max_{w\in\cW}
    w^\top(\E\Sigma^i)^{-1}w.
    \]
\EndFor

\State Choose
\[
i^\star \in \arg\min_{i=1,\dots,n} V_i.
\]
\State \Return \(d_{i^\star}\).
\end{algorithmic}
\end{algorithm}

Algorithm~\ref{alg:design_opt} formalizes the high-level idea illustrated in Figure~\ref{fig:demo}. When targeting a single estimand, the estimand class is the singleton $\cW=\{w\}$ and the optimization reduces to a targeted design optimization: 
\begin{equation}\label{eq:target}
\text{Targeted Design Optimization}: \min_{i=1, \dots, n} w^\top (\E \Sigma^i)^{-1} w\;.
\end{equation}
This corresponds to selecting the optimal design for each specific estimand shown in the columns of Figure~\ref{fig:demo}. When targeting a class of estimands, we seek a design that is robust for estimating all effects in $\cW$: 
\begin{equation}\label{eq:robust}
\text{Robust Design Optimization}: \min_{i=1, \dots, n} \max_{w\in \cW} w^\top (\E \Sigma^i)^{-1} w\;.
\end{equation}
This yields a minimax design procedure for N-of-1 trials \parencite{basse2023minimax,cox2000theory}, corresponding to the design selected under the worst-case criterion in the rightmost column of Figure~\ref{fig:demo}. Importantly, the algorithm depends only on the design parameters and the estimand class, and can therefore be implemented before the experiment is conducted. Computationally, the key challenge lies in Step 2 of evaluating the expected Gram matrix $\E \Sigma^i$. This can be done using Monte Carlo approximation or, alternatively, by deriving closed-form expressions through moment calculations (Proposition~\ref{prop:moments}). We develop the latter approach in detail in the next section.

While Algorithm \ref{alg:design_opt} provides a general procedure for selecting an optimal design, it is formulated for a specified set of candidate designs and target estimands. As a result, it offers limited insight on the temporal treatment dependence of an optimal design. In the following sections, we theoretically characterize the optimal temporal structure through two important design parameters: the switching probability at each time and the lengths of persistent treatment/control periods. To understand how these features should be optimized, it is both practically and theoretically useful to study structured design classes over a continuum of design parameters or estimands. In Sections~\ref{sec:homog} and~\ref{sec:exact}, we study two such classes, random-switch designs and cycle-switch designs, and use them to theoretically characterize optimal design patterns.

\section{Optimal Design Theory for Switching Probabilities}\label{sec:homog}
In this section, we establish theoretical results on optimal temporal dependence in $\{Z_t\}_{t=1}^T$ under the large-$T$ regime. We focus on random-switch designs, a subclass of Markovian assignment designs.
\begin{definition}\label{def:soft}
A random-switch design with parameters $(\rho,\gamma)$ is a Markovian assignment design satisfying
\[
\Pr(Z_1 = 1) = \frac{\rho}{\rho+\gamma}\;, \quad
P_t = P = 
\begin{bmatrix}
1-\rho & \rho \\
\gamma & 1-\gamma
\end{bmatrix}
\quad \text{for } t=1,\dots,T-1\;.\footnote{In the case $\rho = \gamma = 0$, we manually define $\frac{\rho}{\rho + \gamma} = 1/2$.}
\]
\end{definition}
By focusing on random-switch designs, designing temporal dependence reduces to selecting the switching probabilities $\rho$ and $\gamma$. Although such designs form only a subclass of Markovian assignment designs, they provide theoretical insight into how the switching probability can be optimized over the experimental horizon. Moreover, the random-switch design reduces to i.i.d. Bernoulli(1/2) randomization when $\rho = \gamma = 1/2$ (Example \ref{ex:indep}). Thus, optimizing over this design class allows us to assess potential precision gains relative to the Bernoulli benchmark.

In Definition \ref{def:soft}, we set the initial treatment probability to $\rho/(\rho+\gamma)$ which corresponds to the stationary distribution $(\gamma/(\rho+\gamma),\rho/(\rho+\gamma))$ of the homogeneous Markov transition matrix $P$. This ensures that the Markov chain $\{Z_t\}_{t=1}^T$ is stationary and simplifies the analysis. 
In the following, we analyze the behavior of the optimal design parameters $\rho^\star$ and $\gamma^\star$ for the targeted design optimization \eqref{eq:target} and the robust design optimization \eqref{eq:robust}, over a continuum of design parameters.

\subsection{Targeted Design Optimization}
First, we consider the design optimization by targeting a treatment effect under a specified estimand vector $w$. We solve the targeted design optimization \eqref{eq:target} within random-switch designs:  
\begin{equation}\label{eq:target_soft}
\min_{\rho, \gamma\in [\delta, 1 - \delta]} w^\top (\E\Sigma)^{-1} w\;,
\end{equation}
where $w$ is pre-specified to target a particular estimand. For technical reasons that will be discussed later, we exclude estimand vectors satisfying either $w_k = w_{k+1}$ for all $k=1,\dots,K-1$ or $w_k = -w_{k+1}$ for all $k=1,\dots,K-1$. These vectors lead to boundary optima and include important estimands such as the cumulative treatment effect $(1,\dots,1)$. Accordingly, we restrict attention to the compact class of candidate designs with $\rho, \gamma \in [\delta, 1 - \delta]$. We handle the boundary cases later in Section \ref{sec:exact} using cycle-switch designs. The following theorem characterizes the asymptotic solution behavior as $T$ goes to infinity. The proof can be found in Section \ref{sec:proof_homog} of the Appendix. 

\begin{theorem}\label{thm:target_random}
Suppose that $K$ is fixed with $K\ge 2$. Consider an estimand vector $w$ that satisfies neither $w_k = w_{k+1}$ nor $w_k = - w_{k+1}$ for all $k = 1, \dots, K-1$. Then, there exists a unique solution $(\rho^\star, \gamma^\star)$ in $(0, 1)^2$ satisfying
\begin{gather}
\rho = \gamma\;, \label{eq:alpha}\\
4 \rho^2 \sum_{k=1}^{K-1} w_k w_{k+1} + (2\rho-1) \sum_{k=1}^{K-1}(w_k-w_{k+1})^2 = 0\;. \label{eq:rho}
\end{gather}
Set $\delta\in(0, 1/2)$ to be a sufficiently small constant such that $(\rho^\star, \gamma^\star) \in (\delta, 1 - \delta)^2$, and let $(\rho_T^\star, \gamma_T^\star)$ be any minimizer of the targeted design optimization \eqref{eq:target_soft} under time $T$. Then we have
\[
\lim_{T\to\infty} (\rho_T^\star, \gamma_T^\star) = (\rho^\star, \gamma^\star)\;.
\]
\end{theorem}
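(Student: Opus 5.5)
The plan is to compute the large-$T$ limit of $\E\Sigma$ in closed form for the stationary random-switch chain, minimize the limiting objective explicitly, and then transfer the result to the finite-$T$ minimizers by a uniform-convergence (argmin-continuity) argument on the compact set $[\delta,1-\delta]^2$.

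\emph{Step 1: the limiting Gram matrix.} Write $\pi=\rho/(\rho+\gamma)$ and $\lambda=1-\rho-\gamma$. Definition~\ref{def:soft} makes the chain stationary. Proposition~\ref{prop:moments} with $A(s,t)=\lambda^{t-s+1}$ then gives $\E Z_s=\pi$ and $\Pr(Z_t=1\mid Z_s=1)=\pi+(1-\pi)\lambda^{t-s}$. Hence
\[
\Cov(Z_s,Z_t)=\pi(1-\pi)\lambda^{|t-s|}.
\]
Expanding $\Sigma$, the centering by $\bar X$ contributes only $\Cov(\bar X)$-type terms. These are $O(1/T)$ uniformly over $|\lambda|\le 1-2\delta$, because the geometric sums $\sum_h|\lambda|^h$ are bounded by $1/(2\delta)$. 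So $\E\Sigma=\pi(1-\pi)R(\lambda)+O(1/T)$ uniformly, where $R(\lambda)=(\lambda^{|i-j|})_{i,j=1}^K$ is the AR(1) correlation matrix. Its eigenvalues are bounded below uniformly on the compact parameter set, so $w^\top(\E\Sigma)^{-1}w$ converges uniformly to
\[
\cL_\infty(\rho,\gamma)=\frac{w^\top R(\lambda)^{-1}w}{\pi(1-\pi)}.
\]

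\emph{Step 2: minimize $\cL_\infty$.} Use the classical tridiagonal inverse of $R(\lambda)$. It has diagonal $(1,1+\lambda^2,\dots,1+\lambda^2,1)$ and off-diagonal $-\lambda$, all divided by $1-\lambda^2$. This gives
\[
w^\top R^{-1}w=\frac{a+b\lambda^2-2c\lambda}{1-\lambda^2},
\]
with $a=\sum_k w_k^2$, $b=\sum_{k=2}^{K-1}w_k^2$ and $c=\sum_{k=1}^{K-1}w_kw_{k+1}$. For any fixed $\lambda\in(-1,1)$, the choice $\pi=1/2$ is feasible, namely $\rho=\gamma=(1-\lambda)/2$. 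It strictly minimizes $1/(\pi(1-\pi))$, which yields \eqref{eq:alpha}.

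Next set $f(\lambda)=N(\lambda)/(1-\lambda^2)$. Note that $N(\pm1)=\sum_k(w_k\mp w_{k+1})^2$. Both values are strictly positive precisely under the excluded-cases hypothesis, so $f\to\infty$ at $\lambda\to\pm1$ and the minimum is interior. A direct computation gives
\[
f'(\lambda)\propto -2\bigl(c\lambda^2-(a+b)\lambda+c\bigr).
\]
The roots of this quadratic have product $1$, and $a+b>2|c|$ because $N(\pm1)>0$. Hence exactly one root lies in $(-1,1)$; if $c=0$ that root is $\lambda=0$. This proves uniqueness. Substituting $\lambda=1-2\rho$ and simplifying $c\lambda^2-(a+b)\lambda+c$ should reproduce \eqref{eq:rho}, using $a+b-2c=\sum(w_k-w_{k+1})^2$. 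This algebra is routine but should be checked.

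\emph{Step 3: convergence of minimizers.} Uniqueness of the minimizer of $\cL_\infty$ on the compact set $[\delta,1-\delta]^2$, which contains $(\rho^\star,\gamma^\star)$ in its interior, combines with continuity of $\cL_\infty$ and the uniform convergence from Step 1. The standard argmin argument then applies: any limit point of $(\rho_T^\star,\gamma_T^\star)$ minimizes $\cL_\infty$, so the whole sequence converges to $(\rho^\star,\gamma^\star)$.

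I expect the main obstacle to be Step 1. It requires a careful, uniform-in-parameters bound on the centering and edge corrections to $\E\Sigma$, together with a uniform lower bound on $\lambda_{\min}(R(\lambda))$ over $|\lambda|\le 1-2\delta$ and $\pi\in[\delta/(1-\delta)\cdot\tfrac12,\,1]$-type ranges. These are what justify uniform convergence of the inverses.
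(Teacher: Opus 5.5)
Your proposal is correct and follows essentially the same route as the paper. The shared steps are: the uniform approximation $\E\Sigma=\pi(1-\pi)R(\lambda)+O(1/T)$ on $[\delta,1-\delta]^2$, the tridiagonal KMS inverse, the split into a $\pi$-part and a $\lambda$-part, the first-order condition $c\lambda^2-(a+b)\lambda+c=0$ with root product $1$ and $a+b>2|c|$, and the argmin theorem. The substitution you left to check does work, since $c(1-2\rho)^2-(a+b)(1-2\rho)+c=4c\rho^2+(2\rho-1)(a+b-2c)$, which is \eqref{eq:rho}.

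The differences are minor. You bound the centering correction directly via $\E\Sigma=\pi(1-\pi)R(\lambda)-\Cov(\bar X)$ rather than through the exact closed form of Lemma~\ref{lem:homog_M}. You also check explicitly that $\pi=1/2$ is feasible for every attainable $\lambda$, which the paper's separation step leaves implicit. The relevant range in your final remark is simply $\pi\in[\delta,1-\delta]$, which gives $\pi(1-\pi)\ge\delta(1-\delta)$.
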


Based on Theorem~\ref{thm:target_random}, the optimal design exhibits two limiting features:
\begin{itemize}
  \item \textit{Stationary treatment probability of $1/2$.} According to Equation~\eqref{eq:alpha}, the marginal stationary treatment probability 
  $\frac{\rho}{\rho + \gamma}$ converges to $1/2$. This is consistent with Theorem 1 of \textcite{bojinov2023design}, which establishes the optimality of fair coin flipping. Our result recovers the same optimal marginal treatment probability while allowing for temporal dependence among treatment assignments.
  \item \textit{Estimand-specific switching probability.} According to Equation \eqref{eq:rho}, the optimal switching probability $\rho^\star$ is determined by a quadratic equation involving the estimand vector $w$. Thus, the optimal degree of temporal treatment dependence is determined by the target estimand.
\end{itemize}

In the examples below, we apply Theorem \ref{thm:target_random} to different choices of $w$ and derive the limiting optimal designs.
\begin{example}\label{ex:lag}[Lag-specific treatment effect]
For $w  = e_k$ for any $k = 1, \dots, K$, $\rho^\star$ solves the equation $2 \rho - 1 = 0$.
Therefore, the limiting optimal design is $\rho^\star = \gamma^\star = 1/2$, corresponding to the i.i.d. Bernoulli(1/2) randomization.
\end{example}

\begin{example}\label{ex:contrast}[Contemporaneous vs. lagged treatment effects]
For $w_1 = 1$ and $w_k = -1$ for $k = 2, \dots, K$, $\rho^\star$ solves the equation $(K-3) \rho^2 + 2 \rho - 1  = 0$.
From the quadratic equation, the limiting optimal design is $\rho^\star = \gamma^\star = {1}/ (1+\sqrt{K-2})$. Figure \ref{fig:loss_glob} validates the convergence of the optimal design parameters under $K = 5$. 
\end{example}
\begin{figure}[th!]
\centering
\subfloat[$T = 10$]{\includegraphics[width = 0.33\linewidth]{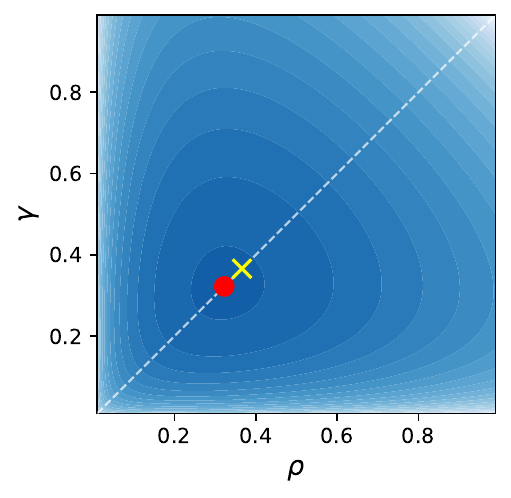}} 
\subfloat[$T = 20$]{\includegraphics[width = 0.33\linewidth]{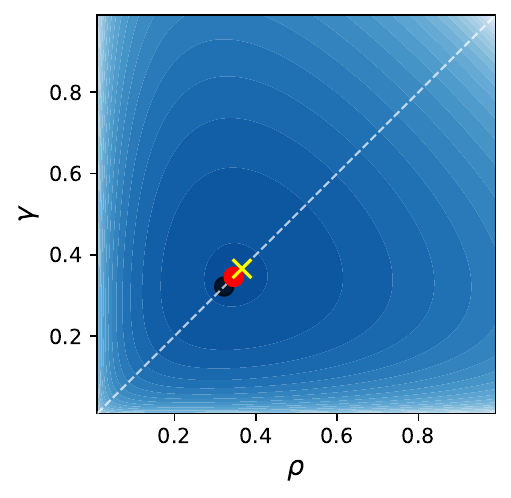}} 
\subfloat[$T = 100$]{\includegraphics[width = 0.33\linewidth]{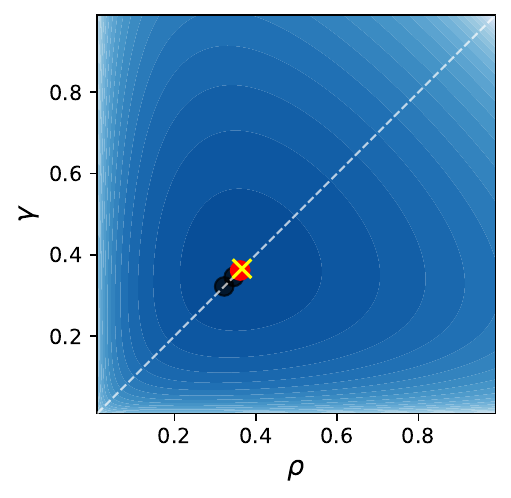}}
\caption{Loss landscape of $\cL_{asy}(\rho,\gamma; w)$ for $w=(1,-1,-1,-1,-1)$. For each $T$, the optimal solution $(\rho_T^\star, \gamma_T^\star)$ is highlighted in red. The limiting solution ($\rho^\star = \gamma^\star \approx 0.366$ from Example \ref{ex:contrast}) is marked by the yellow cross, and the trajectory of finite-sample solutions is shown in black.}
\label{fig:loss_glob}
\end{figure}

Theorem \ref{thm:target_random} excludes two boundary cases: $w_k = w_{k+1}$ for all $k=1,\dots,K-1$, and $w_k = -w_{k+1}$ for all $k=1,\dots,K-1$. For instance, for the cumulative treatment effect with $w_k = 1$, a heuristic application of Theorem \ref{thm:target_random} yields $\rho^\star = \gamma^\star = 0$. Intuitively, because all lag coefficients enter with the same sign, the design favors long uninterrupted treatment and control periods rather than frequent switching. In the large-$T$ limit, this pushes the switching probability to zero and leads to perfect temporal dependence. However, this limiting design can be practically infeasible: when switches become too rare, the realized assignment path may contain no switches, making the OLS estimator ill-defined. In Section~\ref{sec:exact}, we introduce cycle-switch designs, which explicitly control the lengths of treatment and control periods and yield feasible, powerful designs for these boundary cases.

\subsection{Robust Design Optimization}
Next, we consider the robust design optimization problem
\begin{equation}\label{eq:robust_soft}
  \min_{\rho, \gamma\in [\delta, 1 - \delta]} \max_{\|w\|\le 1}  w^\top (\E \Sigma)^{-1} w \;.
\end{equation}
The criterion in \eqref{eq:robust_soft} is robust with respect to the estimand direction $w$: it seeks a design that performs well uniformly over all normalized linear combinations of the lag-specific treatment effects. Thus, rather than targeting a particular treatment effect, the robust design protects against the least precisely estimated direction in the parameter space. This criterion is closely related to E-optimality \parencite{cox2000theory,pukelsheim2006optimal} in the theory of optimal experimental design.
\begin{remark}[Connection to E-optimality]
Let $\lambda_{\max}(\cdot)$ and $\lambda_{\min}(\cdot)$ denote the largest and
smallest eigenvalues of a matrix, respectively. Since
\begin{align*}
\max_{\|w\|\le 1} w^\top (\E\Sigma)^{-1} w
=
\|(\E \Sigma)^{-1}\|
=
\frac{1}{\lambda_{\min}(\E \Sigma)}\;,
\end{align*}
the robust design optimization \eqref{eq:robust_soft} is equivalent to $\max_{\rho,\gamma\in[\delta,1 - \delta]} \lambda_{\min}(\E \Sigma)$. Therefore, our design objective coincides with the E-optimality criterion in optimal experimental design \parencite{cox2000theory,pukelsheim2006optimal}. The distinction is that we study this criterion in N-of-1 experiments whose assignment process follows random-switch designs.
\end{remark}

The following theorem characterizes the asymptotic solution in the large-$T$ regime. Its proof can be found in Section \ref{sec:proof_homog} of the Appendix.
\begin{theorem}\label{thm:robust_random}
Suppose that $K, \delta$ are fixed with $K \ge 2$ and $\delta\in(0, 1/2)$. Let $(\rho_T^\star, \gamma_T^\star)$ be any minimizer of the robust design optimization \eqref{eq:robust_soft} under time $T$. We have
\[
\lim_{T\to\infty} (\rho_T^\star, \gamma_T^\star) = \Bigl(\frac{1}{2}, \frac{1}{2}\Bigr)\;.
\]
\end{theorem}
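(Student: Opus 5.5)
The plan is to replace the finite-$T$ objective by an explicit limit, locate the unique minimizer of that limit, and transfer the minimizer back to finite $T$ by an argmin-continuity argument.

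\textbf{Step 1: closed form for $\E\Sigma$.} Write $\lambda = 1-\rho-\gamma$ and $\pi = \rho/(\rho+\gamma)$. Under a random-switch design the chain is stationary. Proposition~\ref{prop:moments} with constant parameters gives, for $s<t$,
\[
\E Z_s = \pi, \qquad \E Z_s Z_t = \pi\bigl(\lambda^{t-s} + (1-\lambda^{t-s})\pi\bigr),
\]
and hence
\[
\Cov(Z_s,Z_t) = \pi(1-\pi)\lambda^{|t-s|}.
\]
Since $X_t$ is a stationary vector sequence, I will write
\[
\E\Sigma = \Gamma - \Cov(\bar X), \qquad \Gamma = \pi(1-\pi)\,R(\lambda), \qquad R(\lambda)_{jk} = \lambda^{|j-k|}.
\]
On $[\delta,1-\delta]^2$ we have $|\lambda| \le 1-2\delta < 1$, and $\pi(1-\pi)$ is bounded. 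The entries of $\Cov(\bar X)$ are averages of geometrically decaying covariances, so
\[
\|\Cov(\bar X)\| \le C(K,\delta)/T
\]
uniformly on the compact parameter set. I expect this uniform remainder bound to be the main technical point; the geometric-series summation should handle it directly.

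\textbf{Step 2: uniform convergence of the objective.} $R(\lambda)$ is the Kac--Murdock--Szeg\H{o} (AR(1) correlation) matrix, which is positive definite for $|\lambda|<1$. Its smallest eigenvalue depends continuously on $\lambda$ and is therefore bounded below by a positive constant on $|\lambda|\le 1-2\delta$; $\pi(1-\pi)$ is likewise bounded below on the set. By the robust-design remark, the objective equals $1/\lambda_{\min}(\E\Sigma)$. Weyl's inequality then shows that it converges uniformly on $[\delta,1-\delta]^2$ to
\[
g(\rho,\gamma) = \frac{1}{\pi(1-\pi)\,\lambda_{\min}(R(\lambda))}.
\]
For large $T$ the objective is also finite, since $\E\Sigma$ is invertible there.

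\textbf{Step 3: unique minimizer of the limit.} $R(\lambda)$ has trace $K$, so $\lambda_{\min}(R(\lambda)) \le 1$. Equality holds iff every eigenvalue equals $1$, i.e.\ $R(\lambda)=I$, which (because $K\ge 2$ and the $(1,2)$ entry is $\lambda$) happens iff $\lambda=0$. Also $\pi(1-\pi)\le 1/4$, with equality iff $\pi=1/2$, i.e.\ $\rho=\gamma$. Hence $g\ge 4$, with equality iff $\rho=\gamma$ and $\rho+\gamma=1$. So $(1/2,1/2)$ is the unique minimizer, and it lies in the interior of $[\delta,1-\delta]^2$ for every $\delta\in(0,1/2)$.

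\textbf{Step 4: convergence of minimizers.} This is the standard argmin argument on a compact set, using uniform convergence and a continuous limit with a unique minimizer. Suppose $(\rho_T^\star,\gamma_T^\star)$ has a subsequence converging to some point $p \neq (1/2,1/2)$. Uniform convergence gives
\[
g(p) \le \liminf_T \cL_T(\rho_T^\star,\gamma_T^\star) \le \lim_T \cL_T(1/2,1/2) = g(1/2,1/2) = 4,
\]
where $\cL_T$ denotes the finite-$T$ robust objective in \eqref{eq:robust_soft}. This contradicts uniqueness of the minimizer of $g$. Compactness of $[\delta,1-\delta]^2$ then forces the whole sequence to converge to $(1/2,1/2)$.
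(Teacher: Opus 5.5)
Your proposal is correct and follows essentially the same route as the paper. Both arguments proceed by uniform approximation of $\E\Sigma$ by $\alpha(1-\alpha)\Sigma_K(q)$ on $[\delta,1-\delta]^2$, then Weyl's inequality for uniform convergence of $1/\lambda_{\min}(\E\Sigma)$, then the bounds $\alpha(1-\alpha)\le 1/4$ and $\lambda_{\min}(\Sigma_K(q))\le \operatorname{tr}(\Sigma_K(q))/K=1$ (with equality iff $q=0$), and finally the argmin theorem. The differences are cosmetic: you bound the finite-$T$ correction as $\Cov(\bar X)=O(1/T)$ via geometric decay rather than through the explicit remainder $R_{ij}(q)$ of Lemma~\ref{lem:homog_M}, and you spell out the subsequence argument instead of citing the argmin theorem.
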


By the definition of random-switch designs, the limiting solution $\rho^\star=\gamma^\star=1/2$ corresponds to i.i.d. Bernoulli$(1/2)$ randomization (Example \ref{ex:indep}). Therefore, Theorem~\ref{thm:robust_random} provides a formal justification for the optimality of i.i.d. Bernoulli randomization across normalized linear combinations of lag-specific effects in N-of-1 trials. This conclusion is consistent with the observation of \textcite{liang2023randomization} and with classical robustness results in cross-sectional settings \parencite{wu1981robustness,harshaw2024balancing}. Under the robust criterion above, introducing temporal dependence does not improve estimation precision.

\section{Optimal Design Theory for Treatment Cycles}\label{sec:exact}
As discussed in the previous section, the random-switch analysis has two limitations. First, its theoretical characterization is obtained under interior regularity conditions and excludes important boundary estimands, such as the cumulative effect. These boundary cases are difficult because the estimand vector exhibits strong persistence; for example, $w=(1,\dots,1)$ satisfies $w_{k+1}=w_k$ for all $k$. This is analogous to the unit root problem in time-series analysis, where persistence becomes so strong that the behavior of estimators can differ qualitatively from that in regular setups. Second, an optimal switching probability does not immediately translate into optimal lengths of treatment and control periods in the realized assignment path. Motivated by these limitations, this section turns to cycle-switch designs, which specify the treatment and control cycle lengths directly and thereby provide an exact way to study optimal design in boundary regimes.

Concretely, we consider extreme Markov transition matrices that directly incorporate periodic switches in the design process. Recall that $P_t$ is the Markov transition from $t$ to $t+1$, determined by parameters $\rho_t$ and $\gamma_t$.
\begin{definition}[Cycle-switch design]
Let $t_m = 1+ml$ for $m=0,\dots,M$ with $M=\lfloor \frac{T-1}{l}\rfloor$. The cycle-switch design is given by $Z_1=1$, and for $t=1,\dots,T-1$,
\[
P_t =
\begin{bmatrix}
    1 & 0 \\
    0 & 1
\end{bmatrix}
\qquad
\text{if } t\notin\{t_1-1,\ldots,t_M-1\}\;,\qquad P_{t_m-1} =
\begin{bmatrix}
    0 & 1 \\
    1 & 0
\end{bmatrix}\;, \qquad m=1,\ldots,M\;.
\]
\end{definition}

Intuitively, the cycle-switch design generates the deterministic treatment sequence
\[
(Z_1,\dots,Z_T)
=
(\underbrace{1,\dots,1}_{l},
\underbrace{0,\dots,0}_{l},
\underbrace{1,\dots,1}_{l},
\underbrace{0,\dots,0}_{l},
\dots)\;.
\]
Under cycle-switch designs, the design of temporal treatment dependence boils down to the selection of the design parameter $l$. The length $l$ of treatment and control periods plays a role analogous to $(\rho,\gamma)$ in the random-switch designs. The cycle-switch design may also be viewed as a deterministic counterpart of regular switchback experiments (Example~\ref{ex:regular}), where the cycle length is fixed and a switch is enforced at the beginning of each new cycle.

Since the design is deterministic, the design objective in Definition \ref{def:obj} satisfies
\[
\cL_{asy}(\{(\rho_t, \gamma_t)\}_{t=1}^{T-1}; w) = w^\top (\E\Sigma)^{-1} w = w^\top \Sigma^{-1} w = w^\top \E(\Sigma^{-1}) w\;.
\]
Therefore, the asymptotic design objective is equal to the exact estimation variance up to a constant multiplier.

The matrix $\Sigma$ can be precisely characterized under cycle-switch designs, which serves as the building block for the theoretical analysis in this section. To simplify notation, define
\begin{equation}\label{eq:fq}
f(d) = \frac{|d\bmod 2l - l|}{2l}\;,
\end{equation}
where $a \bmod b$ denotes the least nonnegative residue after dividing $a$ by $b$. 
The following proposition gives an explicit expression for $\Sigma$. The proof can be found in Section \ref{sec:proof_exact} of the Appendix.

\begin{proposition}\label{prop:sigma_hard}
Suppose that $T_\text{eff}$ is divisible by $2l$. Then $\Sigma_{ij} = f(i-j) - \frac{1}{4}$.
\end{proposition}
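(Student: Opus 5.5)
The plan is to compute $\Sigma_{ij}$ directly from the deterministic path. We have $X_t=(Z_t,\dots,Z_{t-K+1})^\top$, so the $(i,j)$ entry of $\Sigma$ is
\[
\Sigma_{ij}=\frac{1}{T_\text{eff}}\sum_{t=K}^T Z_{t-i+1}Z_{t-j+1}-\bar X_i\bar X_j .
\]
Under the cycle-switch design the sequence is periodic with period $2l$: $Z_s=1$ if $(s-1)\bmod 2l<l$ and $Z_s=0$ otherwise. The hypothesis that $T_\text{eff}$ is divisible by $2l$ says the summation window $t=K,\dots,T$ consists of exactly $T_\text{eff}/(2l)$ full periods. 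This holds for each shifted index $t-i+1$, because shifting a window of full periods only permutes the residues mod $2l$. Every average over the window can therefore be replaced by an average over one period $\{0,\dots,2l-1\}$ of residues.

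The first step handles the means. Each coordinate $\bar X_i$ is the fraction of ones in a full period, which is exactly $1/2$. So $\bar X_i\bar X_j=1/4$, and this gives the $-1/4$ term in the statement.

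The second step computes the cross moment. Set $d=j-i$ (or $i-j$; the answer will be symmetric). We need the fraction of residues $r\in\{0,\dots,2l-1\}$ with both $r$ and $r+d$ lying in $\{0,\dots,l-1\}$ modulo $2l$. Write $d'=d\bmod 2l\in\{0,\dots,2l-1\}$. The indicator of the treated block is $\mathbf 1\{r\bmod 2l<l\}$, and the overlap of the set $[0,l)$ with its cyclic shift by $d'$ has size $|l-d'|$. To check this, split into two cases. If $0\le d'\le l$, the overlap is $[d',l)$, of size $l-d'$. If $l<d'<2l$, the shift is effectively $-(2l-d')$, so the overlap is $[0,d'-l)$, of size $d'-l$. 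In both cases the fraction equals $|d'-l|/(2l)=f(d)$. Since $|(-d)\bmod 2l-l|=|2l-d'-l|=|l-d'|$ when $d'\neq 0$, and both sides equal $1/2$ when $d'=0$, we get $f(d)=f(-d)$. The sign convention $i-j$ versus $j-i$ therefore does not matter. Combining the two steps gives $\Sigma_{ij}=f(i-j)-1/4$.

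The main obstacle is bookkeeping rather than substance. The one step needing care is justifying that the window $\{t-i+1:t=K,\dots,T\}$ is a union of complete residue classes mod $2l$, each hit equally often; this is exactly what the divisibility assumption provides. I would also state the cyclic overlap count carefully, including the boundary values $d'=0$ and $d'=l$, so that the absolute value formula in \eqref{eq:fq} covers every case.
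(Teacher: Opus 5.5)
Your proof is correct and follows the paper's argument. The paper also writes $\Sigma_{ij}$ as the average cross moment minus the product of means. It uses the divisibility of $T_\text{eff}$ by $2l$ to reduce everything to full periods, so each mean is $1/2$ and the cross moment equals the per-period overlap fraction $|(i-j)\bmod 2l - l|/(2l)=f(i-j)$. The paper obtains this as the zero-remainder case of a more general decomposition (Proposition~\ref{prop:sigma_hard_supp}) and only asserts the overlap count, which you verify explicitly.
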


Motivated by Proposition~\ref{prop:sigma_hard}, we define the following matrix for a general block length $l$ 
\begin{equation}\label{eq:tilde_sigma}
\widetilde{\Sigma}_{ij} = f(i-j) - \frac{1}{4}\;,
\end{equation}
and characterize the optimal design parameter under the corresponding design objective 
\[
w^\top \widetilde{\Sigma}^{-1} w\;.
\]
The matrix $\widetilde{\Sigma}$ coincides exactly with $\Sigma$ in the divisible case and serves as an approximation in nondivisible cases. In fact, as shown in Section~\ref{sec:proof_exact} of the Appendix, $\widetilde{\Sigma}$ is the limit of $\Sigma$ as $T\to\infty$.

Figure \ref{fig:sigma} visualizes the matrix $\widetilde{\Sigma}$ under different values of $K$ and $l$. The matrix exhibits a cyclic structure induced by the cyclic structure of the treatment assignments. Moreover, Figure~\ref{fig:sigma}(a) shows that $\widetilde{\Sigma}$ can be low rank and therefore noninvertible. In such settings, we apply the convention in Equation \eqref{eq:inv} and set the objective value to $\infty$ for non-identifiable estimands. Similar to Section \ref{sec:homog}, we first study the targeted design optimization \eqref{eq:target} under $\widetilde{\Sigma}$, and then consider the robust design optimization \eqref{eq:robust}.

\begin{figure}
  \centering
  \subfloat[$K = 20, l = 5$]{\includegraphics[width=0.4\linewidth]{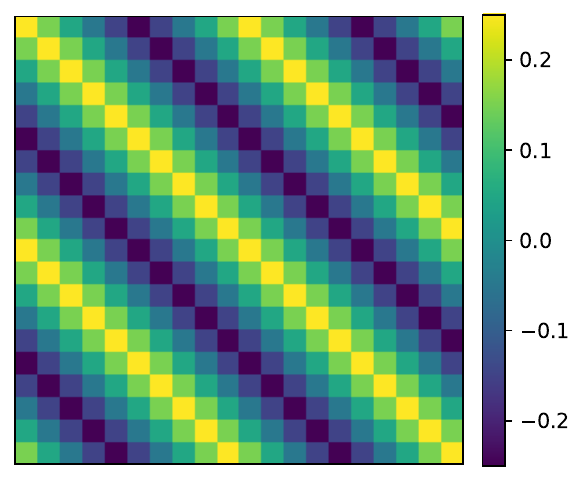}}
  \subfloat[$K = 20, l = 30$]{\includegraphics[width=0.4\linewidth]{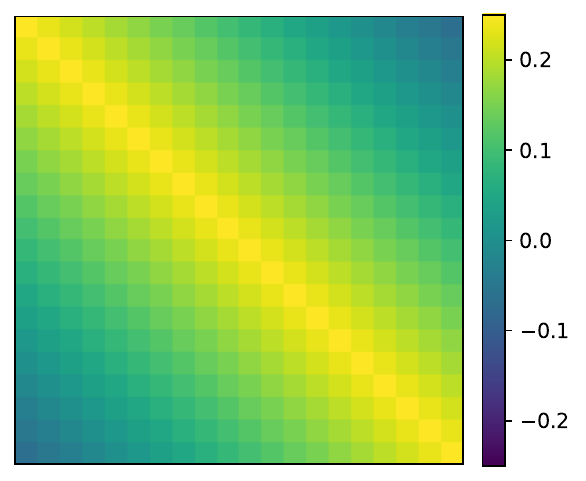}}
  \caption{Visualization of the matrix $\widetilde{\Sigma}$ under cycle-switch designs.}\label{fig:sigma}
\end{figure}

\subsection{Targeted Design Optimization}
In this section, we focus on $\cW = \{w\}$ and cycle-switch designs such that the targeted design optimization \eqref{eq:target} reduces to:
\[
\min_{l\in\bZ_{+}} w^\top \widetilde{\Sigma}^{-1} w\;.
\]
As discussed above, the matrix $\widetilde{\Sigma}$ may be rank-deficient, so our analysis proceeds in two parts. When $l<K$, $\widetilde{\Sigma}$ exhibits a cyclic structure that yields a rank-deficient covariance matrix; we characterize the class of identifiable estimands in this case. When $l\geq K$, $\widetilde{\Sigma}$ is always full rank, and we derive a concrete optimality condition for $l$. 

\paragraph{$l < K$ setups.}
Recall that $e_k$ denotes the $k$-th canonical basis vector in $\mathbb{R}^K$. The following result characterizes the set of identifiable effects. See Section \ref{sec:proof_hard_target} of the Appendix for the proof. 
Recall that, under the convention in Equation \eqref{eq:inv}, $w^\top \widetilde{\Sigma}^{-1} w < \infty$ implies that $w$ is estimable.

\begin{theorem}\label{thm:identification}
Suppose that $K > l$. Then $w^\top \widetilde{\Sigma}^{-1} w < \infty$ if and only if $w$ is a linear combination of $b_u$, $u = 1, \dots, l$, where 
\[
b_u = \sum_{r=0}^{\lfloor (K-u)/l\rfloor}(-1)^r e_{u+rl}\;.
\]
\end{theorem}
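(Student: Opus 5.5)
Since $\widetilde{\Sigma}$ is symmetric positive semidefinite, we have $\operatorname{Im}(\widetilde{\Sigma}) = \ker(\widetilde{\Sigma})^\perp$. Under the convention \eqref{eq:inv}, the statement therefore reduces to showing that $\operatorname{Im}(\widetilde{\Sigma}) = \operatorname{span}\{b_1,\dots,b_l\}$. The plan is to realize $\widetilde{\Sigma}$ as a Gram (covariance) matrix of a periodic vector process and compute its kernel directly.

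\emph{Step 1: Gram representation.} Let $U$ be uniform on $\{0,1,\dots,2l-1\}$. Define the $2l$-periodic square wave $s(n)=1$ if $n \bmod 2l < l$ and $0$ otherwise, and set $X=(s(U),s(U-1),\dots,s(U-K+1))^\top$. A direct count gives
\[
\Pr\bigl(s(U-i)\neq s(U-j)\bigr) = \frac{\min(d,2l-d)}{l}, \qquad d=(i-j)\bmod 2l .
\]
Hence $\E[s(U-i)s(U-j)] = \tfrac12 - \tfrac{\min(d,2l-d)}{2l}$, and therefore $\Cov(X_i,X_j) = \tfrac14 - \tfrac{\min(d,2l-d)}{2l}$. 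Note that $f(d) = |d-l|/(2l)$, so $f(d)-\tfrac14$ should equal this covariance up to a sign convention. I will check this carefully at $d=0$: $f(0)=\tfrac12$ gives $\tfrac14$, which matches $\var(X_1)=\tfrac14$. At $d=l$: $f(l)=0$ gives $-\tfrac14$, which matches because the coordinates are then antipodal. So $\widetilde{\Sigma}=\Cov(X)$, and $v\in\ker\widetilde{\Sigma}$ if and only if $v^\top X$ is almost surely constant, i.e. $\sum_k v_k s(n-k+1)$ does not depend on $n$. (Proposition~\ref{prop:sigma_hard} is consistent with this, since in the divisible case $\Sigma$ is exactly this empirical covariance.)

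\emph{Step 2: compute the kernel.} Write $g(n)=s(n)-\tfrac12$, which satisfies $g(n+l)=-g(n)$. Group the indices $k$ into residue classes $u\in\{1,\dots,l\}$ via $k=u+rl$. Then
\[
\sum_k v_k\, g(n-k+1) = \sum_{u=1}^l c_u\, g(n-u+1), \qquad c_u=\sum_r (-1)^r v_{u+rl} = v^\top b_u .
\]
This reduces the problem to showing that the $l$ shifted waves $g(n),g(n-1),\dots,g(n-l+1)$, viewed as functions of $n$ on $\mathbb{Z}_{2l}$, are linearly independent, and that none of their nontrivial combinations is constant. Taking differences $\Delta g(n)=g(n)-g(n-1)$, which is supported only at $n\equiv 0$ (value $+1$) and $n\equiv l$ (value $-1$), the combination $\sum_u c_u g(n-u+1)$ is constant only if $\sum_u c_u \Delta g(n-u+1)=0$ for all $n$. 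Evaluating at $n=u-1$ for $u=1,\dots,l$ isolates each $c_u$, because the supports of the shifts are disjoint. Thus $\ker\widetilde{\Sigma}=\{v: v^\top b_u=0 \text{ for all } u\}=\operatorname{span}\{b_u\}^\perp$, and consequently $\operatorname{Im}\widetilde{\Sigma}=\operatorname{span}\{b_u\}$.

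\emph{Main obstacle.} The delicate point is Step 1: verifying that $f(i-j)-\tfrac14$ equals the covariance for every residue $d$, including the wrap-around cases $d>l$ and the sign conventions for negative $i-j$. Since $f(d)=f(-d)$ by the symmetry $|(-d)\bmod 2l - l| = |d \bmod 2l - l|$, this should reduce to a short case check. If the identity instead holds only up to a positive factor, the argument is unaffected, because kernels and images are unchanged by positive scaling.
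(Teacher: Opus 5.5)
Your proof is correct, and it takes a genuinely different route from the paper's.

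\textbf{The paper's route.} The paper works purely with the matrix. The identity $f(d)+f(d+l)=\tfrac12$ gives the row sign-flip $\widetilde{\Sigma}_{i+l,j}=-\widetilde{\Sigma}_{ij}$. This yields the factorization $\widetilde{\Sigma}=AHA^\top$, with $A=[b_1,\dots,b_l]$ and $H$ the leading $l\times l$ block. The paper then shows $H$ is nonsingular by exhibiting $H^{-1}=lB$, where $B$ is the anti-periodic Laplacian (Lemma~\ref{lem:rank}). A dimension count finishes with $\operatorname{Im}\widetilde{\Sigma}=\operatorname{Im}A$.

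\textbf{Your route.} You realize $\widetilde{\Sigma}$ as the covariance of $K$ consecutive samples of a uniformly phased square wave, which is the stationary version of the cycle-switch path. This makes positive semidefiniteness automatic. It also identifies the kernel as the set of lag weights whose combination of lagged treatments is constant along the path, i.e.\ collinear with the intercept. The same anti-periodicity enters at the level of the signal, $g(n+l)=-g(n)$, and produces $c_u=v^\top b_u$. Your first-difference argument then replaces the explicit inversion of $H$: showing that no nontrivial combination of the $l$ shifted waves is constant is exactly the statement that the covariance matrix of the first $l$ coordinates of $X$, namely $H$, is nonsingular.

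The step you flagged as the main obstacle is immediate. For $r=d \bmod 2l$ one has $|r-l|=l-\min(r,2l-r)$, so $f(d)-\tfrac14=\tfrac14-\min(r,2l-r)/(2l)$ holds exactly for every residue, and the rescaling hedge is unnecessary.

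\textbf{What each route buys.} Your argument is more self-contained: it does not rely on the unshown ``direct calculation'' $H(lB)=I_l$. It gives a design-level explanation of non-identifiability. With no change it also shows full rank when $l\ge K$ (there $c_u=v_u$), so it covers both regimes of Lemma~\ref{lem:rank}. The paper's route, in exchange, produces the explicit inverse $H^{-1}=lB$. The paper reuses it in Corollary~\ref{cor:lag} to obtain the exact variance $e_k^\top\widetilde{\Sigma}^{-1}e_k=2l$ when $\max\{k,K-k+1\}\le l<K$. Your argument settles identifiability but does not deliver those variance values.
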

Theorem~\ref{thm:identification} shows that when the switching cycle $l$ is shorter than the lag horizon $K$, the design can only distinguish alternating contrasts across lag positions separated by $l$. Treatment effects that move together within these cyclic groups are not separately recoverable. Concretely, the example below characterizes the identifiable estimands in a simple setting.

\begin{example}
For $l=2$ and $K=4$, an estimand vector $w$ is identifiable if and only if there exist constants $a_1, a_2 \in \R$ such that
\[
w = a_1(1,0,-1,0)^\top + a_2(0,1,0,-1)^\top\;.
\]
Thus, the design can identify contrasts between treatment effects at lag positions separated by the block length $l=2$, such as the difference between $\beta_1$ and $\beta_3$ or between $\beta_2$ and $\beta_4$. By contrast, estimands with homogeneous structure across these paired lag positions, such as $(1,0,1,0)^\top$ or $(0,1,0,1)^\top$, are not identifiable under this design.
\end{example}

\paragraph{$l \ge K$ setups.}
When $l\ge K$, the design objective admits the following explicit characterization. The proof can be found in Section \ref{sec:proof_hard_target} of the Appendix.
\begin{theorem}\label{thm:exact_sol2}
Suppose that $K\ge 2$ and $l\ge K$. Then 
\begin{equation}\label{eq:exact_opt}
w^\top \widetilde{\Sigma}^{-1} w = l A(w) + \frac{l}{l-K+1} B(w)\;,
\end{equation}
where
\[
A(w) = \sum_{k=1}^{K-1} (w_k-w_{k+1})^2\;, \qquad
B(w) = (w_1+w_K)^2\;.
\]
\end{theorem}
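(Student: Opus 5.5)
The plan is to get a closed form for $\widetilde{\Sigma}^{-1}$ and then evaluate the quadratic form. First, make the entries of $\widetilde{\Sigma}$ explicit in the regime $l\ge K$. For $1\le i,j\le K$ the offset $d=i-j$ satisfies $|d|\le K-1<l$. If $0\le d<l$, then $d\bmod 2l=d$ and $f(d)=(l-d)/(2l)$. If $-l<d<0$, then $d\bmod 2l=2l+d$ and $f(d)=(l+d)/(2l)$. In both cases
\[
\widetilde{\Sigma}_{ij}=\tfrac14-\tfrac{|i-j|}{2l}.
\]
So $\widetilde{\Sigma}$ is a symmetric Toeplitz matrix whose entries are affine in $|i-j|$, namely $\widetilde{\Sigma}=\tfrac14\mathbf{1}\mathbf{1}^\top-\tfrac1{2l}D$ with $D_{ij}=|i-j|$ the distance matrix of a path on $K$ vertices.

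Next, guess the inverse in the form suggested by the target formula. Let $L$ be the Laplacian of the path $1-2-\cdots-K$, so that $w^\top Lw=\sum_{k=1}^{K-1}(w_k-w_{k+1})^2=A(w)$. Let $\tau=e_1+e_K$, so that $(\tau^\top w)^2=B(w)$. The claim to verify is
\[
\widetilde{\Sigma}^{-1}=l\,L+\frac{l}{l-K+1}\,\tau\tau^\top .
\]
Once this identity holds, \eqref{eq:exact_opt} follows immediately, and invertibility of $\widetilde{\Sigma}$ (hence finiteness of the objective) comes for free. This matches the classical Graham–Lovász formula for the inverse of a tree distance matrix, $D^{-1}=-\tfrac12L+\tfrac{1}{2(K-1)}\tau\tau^\top$, after a Sherman–Morrison correction for the $\mathbf{1}\mathbf{1}^\top$ term. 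The direct verification below is cleaner than that route.

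The verification computes $M\widetilde{\Sigma}$ column by column, where $M$ is the guessed inverse.

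\emph{Rank-one part.} The row vector $\tau^\top\widetilde{\Sigma}$ has $j$-th entry
\[
\widetilde{\Sigma}_{1j}+\widetilde{\Sigma}_{Kj}=\tfrac12-\tfrac{(j-1)+(K-j)}{2l}=\tfrac{l-K+1}{2l},
\]
which does not depend on $j$. Hence $\tfrac{l}{l-K+1}\tau\tau^\top\widetilde{\Sigma}=\tfrac12\tau\mathbf{1}^\top$.

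\emph{Laplacian part, interior rows.} Row $i$ of $L\widetilde{\Sigma}$ for $1<i<K$ is the discrete second difference $2g(i)-g(i-1)-g(i+1)$, where $g(x)=\tfrac14-|x-j|/(2l)$. Since $g$ is piecewise linear with slope $\pm\tfrac1{2l}$ and a single kink at $x=j$, this is $\tfrac1l\delta_{ij}$. So interior rows of $lL\widetilde{\Sigma}$ equal those of $I$, and $\tau$ vanishes there.

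\emph{Laplacian part, boundary rows.} Row $1$ of $L\widetilde{\Sigma}$ is $\widetilde{\Sigma}_{1j}-\widetilde{\Sigma}_{2j}$. This equals $\tfrac1{2l}$ for $j=1$ and $-\tfrac1{2l}$ for $j\ge2$, so row $1$ of $lL\widetilde{\Sigma}$ is $e_1^\top-\tfrac12\mathbf{1}^\top$. Row $K$ is symmetric: $e_K^\top-\tfrac12\mathbf{1}^\top$.

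Adding the rank-one contribution $\tfrac12\mathbf{1}^\top$ in rows $1$ and $K$ cancels the $-\tfrac12\mathbf{1}^\top$ terms, giving $M\widetilde{\Sigma}=I$.

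The only delicate step is this boundary bookkeeping: checking the sign conventions at $j=1$ and $j=K$, and that the constant $\tfrac{l-K+1}{2l}$ is exactly what forces the coefficient $\tfrac{l}{l-K+1}$. The case $K=2$ should be checked separately as a sanity check, since there both rows are boundary rows. Finally, note that $l\ge K$ guarantees $l-K+1\ge1>0$, so the coefficient is finite.
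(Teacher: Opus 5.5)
Your proof is correct. It reaches the same inverse $\widetilde{\Sigma}^{-1}=lL_K+\frac{l}{l-K+1}(e_1+e_K)(e_1+e_K)^\top$ that the paper uses (its Lemma~\ref{lem:inverse}), but by a different route.

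The paper derives this inverse constructively. It writes $\widetilde{\Sigma}=-\frac{1}{2l}D_K+\frac14\mathbf{1}\mathbf{1}^\top$ and imports the closed form of $D_K^{-1}$ for the path distance matrix (Bapat's Theorem 8.9, i.e.\ the Graham--Lov\'asz formula you mention). It then computes $\mathbf{1}^\top D_K^{-1}\mathbf{1}=2/(K-1)$ and $D_K^{-1}\mathbf{1}=(e_1+e_K)/(K-1)$, applies Sherman--Morrison, and treats $K=2$ separately.

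You instead take the guess supplied by the target formula and certify $M\widetilde{\Sigma}=I$ directly. All three of your computations check out:
\begin{itemize}
\item the second differences for interior rows;
\item the boundary rows $e_1^\top-\frac12\mathbf{1}^\top$ and $e_K^\top-\frac12\mathbf{1}^\top$;
\item the constant row $\tau^\top\widetilde{\Sigma}=\frac{l-K+1}{2l}\mathbf{1}^\top$, which is just the paper's fact $D_K^{-1}\mathbf{1}=\tau/(K-1)$ written as $\tau^\top D_K=(K-1)\mathbf{1}^\top$.
\end{itemize}
The argument is also uniform in $K\ge 2$, since rows $1$ and $K$ of $L_K$ have the same form for every $K\ge2$.

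What your version buys is self-containment. It needs no external distance-matrix result, and it proves invertibility outright. The paper, by contrast, silently relies on the Sherman--Morrison denominator $1-l/(K-1)$ being nonzero, which holds for $l\ge K$. What the paper's version buys is an explanation of where the formula comes from, rather than a verification of a given answer.
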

Intuitively, Theorem~\ref{thm:exact_sol2} decomposes the design objective into an interior variation term, $A(w)$, and a boundary term, $B(w)$. The optimal block length is determined by $w$ through $A(w)$ and $B(w)$. Combining Theorems \ref{thm:identification} and \ref{thm:exact_sol2}, we characterize the optimal design through the block length $l$ for cumulative and lag-specific treatment effects. The optimal block length $l^*$ below is validated numerically in Figure \ref{fig:loss_hard_switch}.

\begin{corollary}[Truncated cumulative treatment effect]\label{cor:cumulative}
For any cumulative effect satisfying $w_1 = \dots = w_m = 1$ and $w_{m+1} = \dots = w_K = 0$ with $1 < m < K$, we have
\[
w^\top \widetilde{\Sigma}^{-1} w = 
\begin{cases}
\infty, & l<K\;, \\[1ex]
l+\dfrac{l}{l-K+1}, & l\ge K\;.
\end{cases}
\]
\end{corollary}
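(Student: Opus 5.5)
The corollary splits along the two regimes already treated. For $l\ge K$ we plug into Theorem~\ref{thm:exact_sol2}. For $l<K$ we show, using Theorem~\ref{thm:identification}, that $w$ is not in the span of the $b_u$, so the convention \eqref{eq:inv} gives $\infty$.

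\emph{Case $l\ge K$.} We evaluate $A(w)$ and $B(w)$ for $w=(1,\dots,1,0,\dots,0)$ with $m$ ones and $1<m<K$.
\begin{itemize}
\item The consecutive differences $w_k-w_{k+1}$ vanish except at $k=m$, where $w_m-w_{m+1}=1$. Hence $A(w)=1$.
\item Since $m\ge 2$ gives $w_1=1$ and $m<K$ gives $w_K=0$, we get $B(w)=(1+0)^2=1$.
\end{itemize}
Equation \eqref{eq:exact_opt} then yields $l+l/(l-K+1)$ directly. Both restrictions $m>1$ and $m<K$ are used: $m<K$ ensures $w_K=0$, and $m>1$ keeps $w$ away from the lag-specific case.

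\emph{Case $l<K$.} By Theorem~\ref{thm:identification} it suffices to show $w\notin\mathrm{span}\{b_1,\dots,b_l\}$.
\begin{itemize}
\item The supports of the $b_u$ are disjoint: $b_u$ lives on the residue class $\{u,u+l,u+2l,\dots\}\cap\{1,\dots,K\}$. Its entries alternate in sign, with $b_u$ equal to $(-1)^r$ at position $u+rl$.
\item Suppose $w=\sum_u a_u b_u$. Then on each residue class, $w$ must be proportional to this alternating pattern. In particular, for positions $j$ and $j+l$ both in $\{1,\dots,K\}$, we need $w_{j+l}=-w_j$.
\end{itemize}
We now produce a pair that breaks this constraint. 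Take $j=m$ if $m+l\le K$; otherwise take some $j\le m$ with $j+l\in(m,K]$, which exists because $l<K$.
\begin{itemize}
\item Within the ones-block: if $m>l$, then $w_1=w_{1+l}=1$, which violates $w_{1+l}=-w_1$.
\item Straddling the boundary: if $m\le l$, consider $j=1$ with $1+l\le K$. Then $w_1=1$ and $w_{1+l}=0$, which violates $w_{1+l}=-w_1$ since $-1\ne0$.
\end{itemize}
In both subcases the constraint fails, so $w$ is not identifiable and the objective is $\infty$.

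The main obstacle is making this case analysis clean. A simpler way is to always use the pair $(1,1+l)$. Since $l<K$, position $1+l\le K$, and $w_1=1$. The required value is $w_{1+l}=-1$, but every entry of $w$ lies in $\{0,1\}$, so this is impossible. This one-line argument avoids the subcases entirely. The divisibility nuance does not arise because the corollary is stated for $\widetilde\Sigma$.
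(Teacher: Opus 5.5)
Your proposal is correct and follows the paper's proof. For $l\ge K$ the paper likewise evaluates Theorem~\ref{thm:exact_sol2} with $A(w)=B(w)=1$. For $l<K$ it uses exactly your closing one-line argument with the pair $(1,1+l)$; it gets the constraint $x_{1+l}=-x_1$ on $\operatorname{Im}(\widetilde{\Sigma})$ from the row sign-flip identity of Lemma~\ref{lem:rank} instead of the span of the $b_u$, which is equivalent, so your earlier subcase analysis can be dropped. As a minor aside, the hypothesis $m>1$ is never used: for $m=1$, i.e.\ $w=e_1$, both cases go through verbatim, so your remark that both restrictions on $m$ are needed overstates things.
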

In the corollary above, we focus on a truncated cumulative effect to simplify the discussion. The full cumulative effect corresponding to $w=(1,\dots, 1)$ can be handled similarly, up to some additional technical nuances. The first $l < K$ case follows from Theorem \ref{thm:identification}, and the second $l\ge K$ case is a result of Theorem \ref{thm:exact_sol2}. To solve the optimization problem, let $l = K-1 + h$. Then,
\[
w^\top \widetilde{\Sigma}^{-1} w = K + h + \frac{K - 1}{h}\;.
\]
The choice of window length therefore reflects a tradeoff. 
Balancing the two terms involving $h$ gives $h^\star \approx \sqrt{K-1}$, and hence $l^\star \approx K-1+\sqrt{K-1}$.

\begin{corollary}[Lag-specific treatment effect]\label{cor:lag}
For a lag-specific treatment effect $w = e_k$ with $k = 2, \dots, K-1$, we have
\[
w^\top \widetilde{\Sigma}^{-1} w = 
\begin{cases}
\infty, & l<\max\{k, K-k+1\}\;, \\[1ex]
2l, & l\ge \max\{k, K-k+1\}\;.
\end{cases}
\]
\end{corollary}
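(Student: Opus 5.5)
We combine Theorem~\ref{thm:identification} for short blocks with Theorem~\ref{thm:exact_sol2} for long blocks, treating separately the ranges $l\ge K$, $\max\{k,K-k+1\}\le l<K$, and $l<\max\{k,K-k+1\}$.

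\emph{Case $l\ge K$.} Plug $w=e_k$ with $2\le k\le K-1$ into \eqref{eq:exact_opt}. Among the differences $w_j-w_{j+1}$, exactly two are nonzero, namely $j=k-1$ and $j=k$, each equal to $\pm1$, so $A(e_k)=2$. Since $k\neq 1,K$, we have $w_1=w_K=0$ and hence $B(e_k)=0$. Theorem~\ref{thm:exact_sol2} then gives $w^\top\widetilde\Sigma^{-1}w=2l$.

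\emph{Case $l<K$: estimability.} By Theorem~\ref{thm:identification}, we need to decide when $e_k$ lies in the span of $b_1,\dots,b_l$. The supports of the $b_u$ are the residue classes $\{u,u+l,u+2l,\dots\}\cap\{1,\dots,K\}$, which partition $\{1,\dots,K\}$. Hence $e_k$ lies in the span if and only if $e_k$ is a multiple of the single $b_u$ whose class contains $k$. That happens exactly when this class is the singleton $\{k\}$, i.e.\ $k\le l$ and $k+l>K$. These two conditions are equivalent to $l\ge k$ and $l\ge K-k+1$. So for $l<\max\{k,K-k+1\}$ the objective is $\infty$, which is the first branch of the statement.

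\emph{Case $\max\{k,K-k+1\}\le l<K$ (main obstacle).} Theorem~\ref{thm:exact_sol2} does not apply here, so I would evaluate $e_k^\top\widetilde\Sigma^\dagger e_k$ directly, where $e_k=b_k$.

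\begin{itemize}
\item \emph{Verify the eigenvector claim.} I expect $e_k$ to be an eigenvector of $\widetilde\Sigma$ with eigenvalue $1/(2l)$. The $k$-th column of $\widetilde\Sigma$ is $f(i-k)-1/4$ for $i=1,\dots,K$. Because $|i-k|\le K-1<2l$ and $f$ is even and tent-shaped, one gets $f(d)-1/4=1/4-|d|/(2l)$ for $|d|\le l$. The entries with $i\ne k$ need not vanish, so eigenvector-ness must be checked rather than assumed.
\item \emph{Preferred route: reduce to the $l\ge K$ computation.} I would compute the quadratic form via the least-squares interpretation $w^\top\Sigma^\dagger w=\sup_v (w^\top v)^2/(v^\top\Sigma v)$. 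By Proposition~\ref{prop:sigma_hard}, $v^\top\widetilde\Sigma v$ is the time-average of $(v^\top(X_t-\bar X))^2$, i.e.\ the variance of the filtered sequence $\sum_j v_j Z_{t-j+1}$ over one period.
\item \emph{Key structural fact.} For $K\le 2l$, the lags $j$ and $j+l$ see opposite (complementary) treatments. So the filtered signal depends on $v$ only through $c_u=\sum_r(-1)^r v_{u+rl}$, with one $c_u$ for each residue class $u=1,\dots,l$.
\item \emph{Finish.} This reduces the problem to the $K'=l$ system with $\widetilde\Sigma$ built for lag length $l$ and block $l$. There $e_k$ maps to $e_k$ with $k$ an interior index, $1<k<l$. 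If $k=l$ or $k=1$ occurs (possible when $l$ equals a bound), a boundary term $B$ appears, and this must be handled by a separate check. Theorem~\ref{thm:exact_sol2} with $K=l$ then gives $l\cdot A+l\cdot B$ with $A=2$ and $B=0$ for interior $k$, yielding $2l$ again.
\end{itemize}

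Carefully justifying this reduction, and the endpoint indices, is where I expect the main difficulty.
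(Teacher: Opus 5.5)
Your argument is correct and is essentially the paper's. Your residue-class reduction $v^\top\widetilde\Sigma v = c^\top H c$, with $c = A^\top v$, $A=[b_1,\dots,b_l]$ and $H$ the leading $l\times l$ block of $\widetilde\Sigma$, is exactly the paper's factorization $\widetilde\Sigma = AHA^\top$. Applying Theorem~\ref{thm:exact_sol2} with lag length $l$ is the same as the paper's identity $H^{-1}=l\bigl(L_l+(e_1+e_l)(e_1+e_l)^\top\bigr)$, which is $l$ times the anti-periodic Laplacian from the proof of Lemma~\ref{lem:rank}. Your variational formula is a clean justification of the step $e_k^\top\widetilde\Sigma^{\dagger}e_k=e_k^\top H^{-1}e_k$, which the paper only asserts. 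It relies on two facts: $A^\top$ maps onto $\mathbb{R}^l$, and $c_k=v_k$ because the class of $k$ is a singleton.

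The endpoint case you left open is not an obstacle. The case $k=1$ is excluded by $k\ge 2$. The case $k=l$ does occur (e.g.\ $K=4$, $k=l=3$), and there the same theorem with lag length $l$ gives $l\,A(e_l)+\frac{l}{l-l+1}B(e_l)=l\cdot 1+l\cdot 1=2l$. Equivalently, the anti-periodic Laplacian has constant diagonal $2$, so every index yields $2l$.

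Finally, discard the eigenvector guess in your first bullet. It is false in general: for $K=4$, $l=3$, $k=2$ the second column of $\widetilde\Sigma$ is $(1/12,1/4,1/12,-1/12)^\top$. The value $2l$ is the diagonal entry $(H^{-1})_{kk}$, not the reciprocal of a single eigenvalue.
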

Similarly, Corollary~\ref{cor:lag} combines Theorems \ref{thm:identification} and \ref{thm:exact_sol2}, providing a complete characterization over all values of $l$. Intuitively, the choice of $l$ reflects a tradeoff between identifiability and variance. If $l<\max\{k,K-k+1\}$, then the lag-specific estimand $w=e_k$ is not identifiable, because the corresponding lag effect cannot be separated from other lagged treatment effects induced by the cycle-switch structure. Among identifiable designs, smaller values of $l$ reduce the variance from the measurement errors $\epsilon_t$. Therefore, for estimating the lag-specific effect given $w=e_k$, the optimal design is $l^\star = \max\{k,K-k+1\}$. More generally, if the goal is to minimize the worst-case variance over interior lag-specific effects $w=e_2,\dots,e_{K-1}$, corresponding to lags $1,\dots,K-2$, the optimal choice is $l^\star=K-1$.

\begin{figure}[th!]
\centering
\subfloat[Cumulative effect]{\includegraphics[width = 0.5\linewidth]{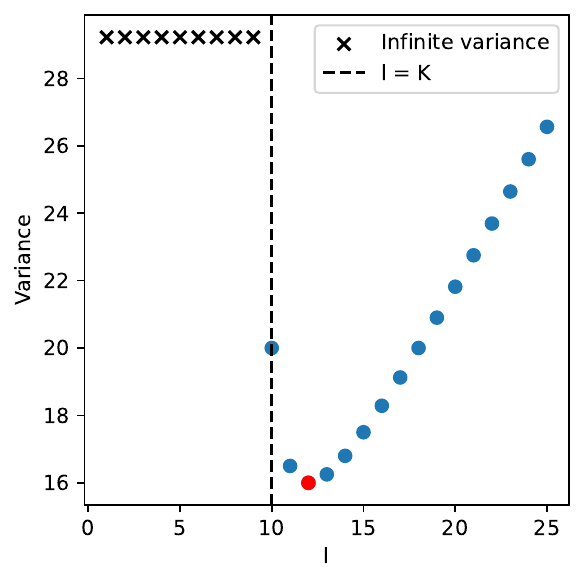}} 
\subfloat[Lag-specific effect]{\includegraphics[width = 0.5\linewidth]{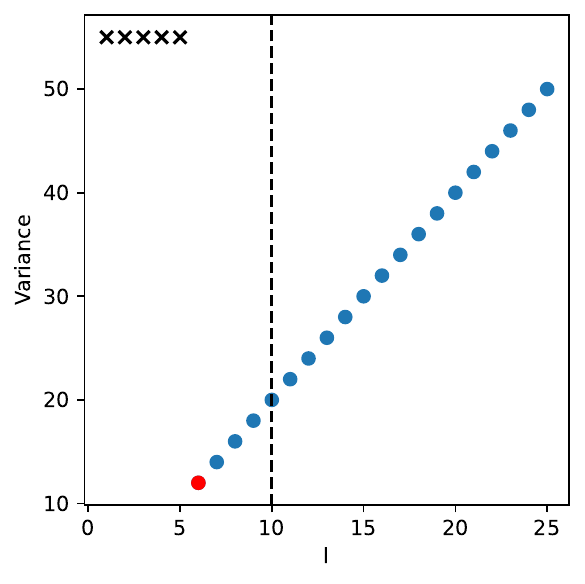}}
\caption{Variance curve of cycle-switch designs for $w = (1, \dots, 1, 0)$ (cumulative effect) and $w = (0,0,0,0,1,0,\dots, 0)$ (lag-specific effect $\beta_5$) with $K = 10$. The optimal solutions in (a) and (b) are highlighted in red, which validate Corollaries \ref{cor:cumulative} and \ref{cor:lag}, respectively.}
\label{fig:loss_hard_switch}
\end{figure}

\subsection{Robust Design Optimization}
We consider the robust design optimization problem over $\|w\|\leq 1$:
\begin{equation}\label{eq:robust_hard}
\min_{l \in \bZ_+} \max_{\|w\|\le 1} w^\top \widetilde{\Sigma}^{-1} w\;.
\end{equation}
When $\widetilde{\Sigma}$ is singular, we define the worst-case variance over $\|w\|\le 1$ to be $\infty$, since some estimand directions are not identifiable. The following theorem shows that the optimal cycle-switch design is $l^\star=K$. See Section \ref{sec:proof_hard_robust} of the Appendix for the proof.

\begin{theorem}\label{thm:robust2}
Suppose that $K$ is even. The optimizer of the robust design optimization \eqref{eq:robust_hard} is $l^\star = K$.
\end{theorem}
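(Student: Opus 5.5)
We will rewrite the robust objective in terms of an eigenvalue, show that every $l<K$ gives an infinite value, and then compare $l=K$ with $l\ge K+1$ using Theorem~\ref{thm:exact_sol2}. Since $\max_{\|w\|\le 1} w^\top\widetilde\Sigma^{-1}w=\lambda_{\max}(\widetilde\Sigma^{-1})$, this is what we minimize over $l$.

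\emph{The case $l<K$.} By Theorem~\ref{thm:identification}, the estimable directions are exactly $\operatorname{span}\{b_1,\dots,b_l\}$. This subspace has dimension at most $l<K$, so some unit $w$ is not estimable. By the convention adopted for singular $\widetilde\Sigma$, the robust objective is then $\infty$. Hence only $l\ge K$ needs to be considered.

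\emph{The case $l\ge K$.} Theorem~\ref{thm:exact_sol2} gives $w^\top\widetilde\Sigma^{-1}w$ for every $w$. Let $D\in\R^{(K-1)\times K}$ be the first-difference matrix, so that $L\coloneqq D^\top D$ is the path-graph Laplacian, and let $c=e_1+e_K$. A symmetric matrix is determined by its quadratic form, so
\[
\widetilde\Sigma^{-1}=l\,M_l,\qquad M_l \coloneqq L+s_l\,cc^\top,\qquad s_l=\frac{1}{l-K+1}.
\]
The robust value is therefore $V(l)=l\,\lambda_{\max}(M_l)$, and it suffices to show $V(K)<V(l)$ for every $l\ge K+1$.

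\emph{Lower bound for $l\ge K+1$.} Because $s_l\,cc^\top\succeq 0$, we have $V(l)\ge l\,\lambda_{\max}(L)\ge (K+1)\lambda_{\max}(L)$. The path Laplacian has the known spectrum $2-2\cos(\pi j/K)$ for $j=0,\dots,K-1$, so
\[
\lambda_{\max}(L)=2+2\cos(\pi/K)=4\cos^2\bigl(\pi/(2K)\bigr).
\]

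\emph{Upper bound at $l=K$.} Here $s_K=1/2$, and I plan to bound $\lambda_{\max}(M_K)$ by Gershgorin's theorem. Each interior row of $M_K$ has diagonal entry $2$ and two off-diagonal entries of size $1$, so its Gershgorin bound is $4$. Rows $1$ and $K$ have diagonal entry $3/2$, one neighbour entry of size $1$, and an extra entry of size $1/2$ in the corner from $\tfrac12 cc^\top$. This gives the bound $3/2+1+1/2=3$ for $K\ge 3$. Hence $\lambda_{\max}(M_K)\le 4$ and $V(K)\le 4K$.

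\emph{Comparison.} The claim now reduces to $4K<4(K+1)\cos^2(\pi/(2K))$, i.e. $\sin^2(\pi/(2K))<1/(K+1)$. Using $\sin x\le x$, it suffices that $\pi^2(K+1)<4K^2$, which holds for every even $K\ge 4$. The case $K=2$ is checked directly. There $L$ has eigenvector $(1,-1)$ with eigenvalue $2$ and $cc^\top$ has eigenvector $(1,1)$ with eigenvalue $2$, so $\lambda_{\max}(M_l)=\max\{2,2s_l\}=2$. This gives $V(l)=2l$, which is uniquely minimized over $l\ge 2$ at $l=2=K$.

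\emph{Main obstacle and the role of evenness.} The delicate step is the upper bound on $\lambda_{\max}(M_K)$: the comparison has very little slack, of order $1/K$. If Gershgorin fails to give a strict enough bound, I would fall back on reflection symmetry. Under the reversal $j\mapsto K+1-j$, the eigenvectors of $L$ are alternately symmetric and antisymmetric. The top eigenvector has alternating signs, so it satisfies $v_K=(-1)^{K-1}v_1$, which makes it antisymmetric when $K$ is even. It is then orthogonal to $c$, and the rank-one term $s\,cc^\top$ only perturbs the symmetric eigenspace. A secular-equation (interlacing) argument on that block would then show that its top eigenvalue stays at most $4$. This is where the hypothesis that $K$ is even enters.
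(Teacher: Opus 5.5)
Your proof is correct, but it takes a different route from the paper.

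\textbf{The paper's route.} The paper computes the robust value exactly. For even $K$ and every $l\ge K$ it shows $\|\widetilde\Sigma^{-1}\| = 4l\cos^2(\pi/(2K))$. The lower bound comes from the top cosine eigenvector $q_K$ of $L_K$, which is reversal-antisymmetric and hence orthogonal to $e_1+e_K$ precisely because $K$ is even (Lemma~\ref{lem:spectrum}). The upper bound comes from $\widetilde\Sigma^{-1}\preceq l(L_K+vv^\top)$. The right-hand side is $l$ times the anti-periodic Laplacian, whose top eigenvalue is $4\cos^2(\pi/(2K))$, again because $K$ is even. Linearity in $l$ then gives $l^\star=K$ immediately.

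\textbf{Your route.} You only compare $l=K$ against $l\ge K+1$, using two cruder bounds:
\begin{itemize}
\item dropping the rank-one term gives $V(l)\ge 4l\cos^2(\pi/(2K))$;
\item Gershgorin gives $V(K)\le 4K$.
\end{itemize}
The elementary inequality $\pi^2(K+1)<4K^2$ then closes the gap for $K\ge 4$, and you handle $K=2$ by hand.

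\textbf{Comparison.} The paper's approach buys a closed form for the worst-case variance at every $l\ge K$. Yours buys generality: evenness is never used for $K\ge 4$, so your argument also proves $l^\star=K$ for odd $K\ge 5$. For odd $K$ the paper's upper and lower bounds no longer coincide. Your reflection-symmetry fallback is essentially the paper's Lemma~\ref{lem:spectrum}, and your main line does not need it. So your closing remark that evenness enters there is moot.

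\textbf{A slip to fix.} At $l=K$ you have $s_K = 1/(K-K+1) = 1$, not $1/2$. So $M_K=L+cc^\top$ is exactly the paper's anti-periodic Laplacian. The Gershgorin bound for rows $1$ and $K$ is then $2+1+1=4$ rather than $3$. The conclusion $\lambda_{\max}(M_K)\le 4$ is unchanged, so nothing downstream is affected. Your $K=2$ computation is also fine, since it only needs $s_l\le 1$, which holds for all $l\ge K$.
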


The proof of Theorem~\ref{thm:robust2} rests on two key observations. First, if $l<K$, then $\widetilde{\Sigma}$ is rank-deficient, so the worst-case variance is infinite. Thus, block lengths shorter than $K$ are not robust enough to detect all possible estimand directions. Second, if $l\geq K$, then the worst-case variance grows linearly in $l$, yielding the corner solution $l^\star=K$. Intuitively, the optimal switching cycle should match the length of the treatment history that may influence the outcome.

\section{Simulations}\label{sec:simu}
In this section, we evaluate the performance of the random- and cycle-switch designs proposed in this paper and compare them with commonly used alternatives, including i.i.d. Bernoulli randomization and the optimal regular switchback design of \parencite{bojinov2023design}. We first consider data-generating processes that follow Model~\eqref{eq:main_model}, so that the working model is correctly specified. We then conduct a series of robustness checks under model misspecifications that violate Model \eqref{eq:main_model}. Overall, random-switch designs achieve the smallest estimation variance for large $T$ and remain robust across a range of model specifications. Cycle-switch designs attain the smallest estimation variance for estimating the cumulative treatment effects in finite samples, although they are less robust to certain forms of model misspecification.

\subsection{Design Performance under Different Objectives}
We consider Model~\eqref{eq:main_model} with $K=5$ and evaluate the worst-case estimation variance over several classes of estimands. We consider $T\in\{20,30,\dots,100\}$. For each value of $T$, we solve for the optimal random-switch design using Algorithm \ref{alg:design_opt}, with $(\rho,\gamma)$ selected from a $500\times 500$ grid on $[0,1]^2$. Similarly, we solve for the optimal cycle-switch design using Algorithm \ref{alg:design_opt}, with $l$ selected from all integers from $1$ to $T$.

We compare the optimal random- and cycle-switch designs with i.i.d. Bernoulli(1/2) randomization and the optimal regular switchback design (ORSB) of \textcite{bojinov2023design}, implemented with the lag horizon $K=5$. For each candidate design and each $w\in\cW$, we evaluate the estimation error $w^\top \E(\Sigma^{-1}) w / T_\text{eff}$ by Monte Carlo over 100 realized treatment paths, and then take the worst case over $w\in\cW$.

We first focus on $\cW=\{e_1,\dots,e_5\}$, so that the objective measures the worst-case variance for estimating lag-specific treatment effects. Figure~\ref{fig:compare}(a) shows that the random-switch design and i.i.d. Bernoulli randomization achieve the smallest worst-case variance across the range of $T$. In fact, the optimal random-switch design is approximately $(\rho^\star,\gamma^\star)=(0.5,0.5)$, which recovers i.i.d. Bernoulli randomization. This is consistent with the theory in Section~\ref{sec:homog} (Example \ref{ex:lag}). The cycle-switch design yields a larger worst-case variance, while ORSB performs substantially worse in this setting, reflecting that it is optimized for carryover-type effects rather than lag-specific treatment effects \parencite{bojinov2023design}.

In Figure~\ref{fig:compare}(b), we consider $\cW=\{\sum_{i=1}^k e_i \mid k=1,\dots,K\}$, so that the objective measures the worst-case variance for estimating cumulative effects over varying lag horizons. We observe that the cycle-switch design achieves the smallest worst-case variance, followed by the random-switch design and ORSB. This finding suggests that cycle-switch designs are particularly well suited for estimating cumulative effects. Notably, i.i.d. Bernoulli randomization exhibits the largest variance in this setting, reflecting the tradeoff between robustness and accuracy.

\begin{figure}[h]
  \centering
  \subfloat[Lag-specific effect]{\includegraphics[width=0.4\linewidth]{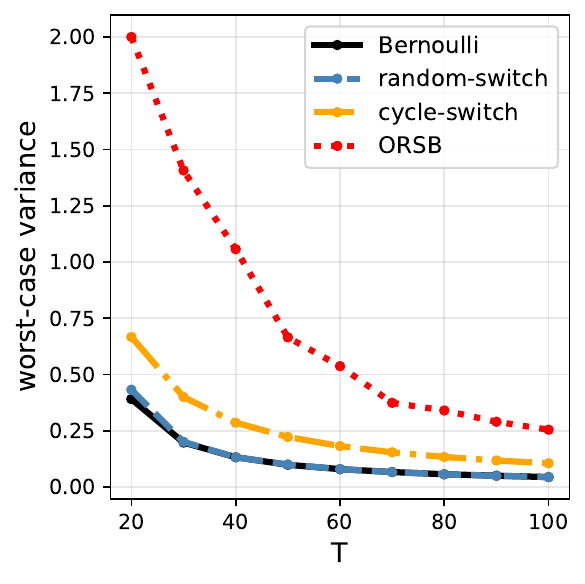}}
  \subfloat[Cumulative effect]{\includegraphics[width=0.4\linewidth]{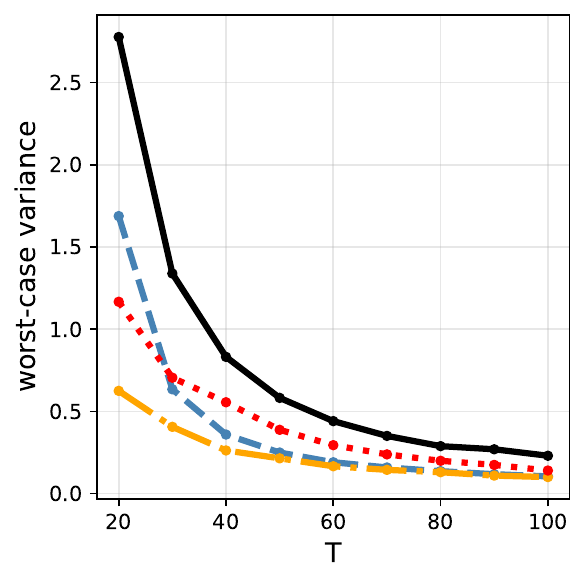}}
  \caption{Worst-case variances for different designs. In (a) and (b), we target the lag-specific and cumulative effects, respectively. In (a), for sufficiently large $T$, the optimal designs satisfy $(\rho^\star, \gamma^\star) \approx (0.5, 0.5)$ in the random-switch setting, and $l^\star = K = 5$ in the cycle-switch setting. In (b), for sufficiently large $T$, the optimal designs satisfy $(\rho^\star, \gamma^\star) \approx (0, 0)$ in the random-switch setting, and $l^\star = 7$ in the cycle-switch setting.}\label{fig:compare}
\end{figure}

\subsection{Misspecification}
We assess the robustness of different designs when the model is misspecified. In the estimation, we focus on $K = 3$ and the OLS estimate $w^\top \widehat{\beta}$ with $w= (1, 1, 1)$. That is, the target effect is the cumulative treatment effect. 

First, we focus on a linear additive outcome-generating process:
\[
Y_t  = \alpha + \log(t) + \sum_{k=1}^K \beta_k Z_{t-k+1} + \epsilon_t\;,
\qquad
\epsilon_t \stackrel{\mathrm{iid}}{\sim} \mathcal{N}(0,1)\;.
\]
Under the data-generating process above, our model is misspecified as the time trend $\log(t)$ is missing in Equation \eqref{eq:main_model}. We focus on $\beta_1 = \beta_2 = \beta_3 = 1$ and the target estimand satisfies $\tau = \sum_{k=1}^3 \beta_k = 3$.

We implement the designs as in the previous section and report the bias, variance, and mean squared error of the OLS estimator under the misspecified model in Table~\ref{tab:mis}. Panel~A shows that random-switch designs remain relatively robust, with MSE comparable to that of the Bernoulli design. By contrast, cycle-switch designs exhibit substantial bias under this form of misspecification. Thus, when unmodeled time trends may be present in the outcome process, random-switch designs can provide a more robust alternative to cycle-switch designs.

Next, we consider the same linear model as in Equation~\eqref{eq:main_model} with $\beta_1 = \beta_2 = \beta_3 = 1$ (i.e., the target estimand $\tau = 3$), but misspecify the error process. Specifically, we generate errors from a persistent AR(1) process, thereby violating the independence assumption and the variance objective. Panel~B shows that both random-switch and cycle-switch designs remain relatively robust under this form of error misspecification.

Finally, we consider lag misspecification: the true data-generating process follows Equation~\eqref{eq:main_model} with lag length $K=4$ and $\beta_1 = \beta_2 = \beta_3 = \beta_4 = 1$, whereas both the design and the estimator are constructed using $K=3$. In this case, the target cumulative effect $\tau$ is $4$. Panel~C shows that both random-switch and cycle-switch designs again exhibit robustness under this model violation.

\begin{table}[h!]
\centering
\begin{tabular}{lcccc}
\toprule
 & Bernoulli & Random-switch & Cycle-switch & ORSB \\
\midrule
\multicolumn{5}{l}{\textit{Panel A: Time fixed effects}} \\
\midrule
Bias     & -0.080 & 0.092  & -1.262 & 0.111 \\
Variance & 0.394  & 0.348  & 0.034  & 0.280 \\
MSE      & 0.397  & 0.353  & 1.626  & 0.289 \\
\midrule
\multicolumn{5}{l}{\textit{Panel B: Persistent errors}} \\
\midrule
Bias     & -0.024 & 0.047  & -0.001 & 0.050 \\
Variance & 0.301  & 0.113  & 0.129  & 0.171 \\
MSE      & 0.299  & 0.114  & 0.128  & 0.172 \\
\midrule
\multicolumn{5}{l}{\textit{Panel C: Lag misspecification}} \\
\midrule
Bias     & -1.027 & -0.169 & -0.037 & -0.482 \\
Variance & 0.182  & 0.041  & 0.034  & 0.072 \\
MSE      & 1.235  & 0.069  & 0.035  & 0.304 \\
\bottomrule
\end{tabular}
\caption{Bias, variance, and MSE for different designs under model misspecification.}\label{tab:mis}
\end{table}

\section{Conclusions}
In this paper, we study optimal experimental design for N-of-1 trials under linear impulse-response models. We introduce a broad class of designs induced by time-inhomogeneous Markov chains and propose a concrete procedure, Algorithm~\ref{alg:design_opt}, for selecting an optimal design. To obtain theoretical insight into temporal dependence in treatment assignments, we focus on two subclasses: random-switch designs and cycle-switch designs. For each class, we characterize the optimal design parameters and show how the target estimand shapes the preferred assignment structure. We evaluate the proposed minimax designs through simulations that reflect a range of practically relevant scenarios.

One promising direction for future work is to extend the framework to panel experiments, where treatments may be designed, potentially heterogeneously, across multiple experimental units. Studying optimal design in such panel settings would be of substantial interest.

\section*{Acknowledgements}
TL gratefully acknowledges support from the NSF Career Award (DMS-2042473)
and the Wallman Society of Fellows at the University of Chicago.

\ifbiblatex
\else
  \bibliographystyle{abbrvnat}
  \bibliography{reference}
\fi

\newpage
\appendix

\section{Proofs of Section \ref{sec:param}}\label{sec:proof}
\begin{proof}[(Proof of Proposition \ref{prop:moments})]
First, we prove by induction that for any $s < t$
\begin{equation}\label{eq:recursion}
\Pr(Z_t = 1 \mid Z_s = 1) = A(s, t-1) + \sum_{i=s}^{t-1} A(i+1, t-1) \rho_i\;.
\end{equation}
The summation is equal to zero if $s>t-1$. For $t = s+1$, we have
\[
\Pr(Z_{t} = 1 \mid Z_s = 1) = 1 - \gamma_s = (1 - \gamma_s - \rho_s) + \rho_s = A(s, t-1) + A(s+1, t-1) \rho_s\;,
\]
which satisfies Equation \eqref{eq:recursion}. Suppose Equation \eqref{eq:recursion} holds for a given $t$. Then, for $t+1$, we have
\begin{align*}
\Pr(Z_{t+1} = 1 \mid Z_s = 1) &= \Pr(Z_{t+1}=1\mid Z_t = 0) \Pr(Z_t = 0 \mid Z_s = 1) \\
&+ \Pr(Z_{t+1}=1 \mid Z_t = 1) \Pr(Z_t = 1 \mid Z_s = 1)\\
&= \rho_t \Pr(Z_t = 0 \mid Z_s = 1) + (1 - \gamma_t) \Pr(Z_t = 1 \mid Z_s = 1) \\
&= \rho_t + (1 - \gamma_t - \rho_t) \Pr(Z_t = 1 \mid Z_s = 1)\;.
\end{align*}
Since $\Pr(Z_t = 1 \mid Z_s = 1) = A(s, t-1) + \sum_{i=s}^{t-1} A(i+1, t-1) \rho_i$ according to Equation \eqref{eq:recursion}, we further obtain
\begin{align*}
\Pr(Z_{t+1} = 1 \mid Z_s = 1) &= \rho_t + (1 - \gamma_t - \rho_t) \bigl(A(s, t-1) + \sum_{i=s}^{t-1} A(i+1, t-1) \rho_i\bigr)\\
&= A(s, t) + \sum_{i=s}^{t-1} A(i+1, t) \rho_i + \rho_t\\
&= A(s, t) + \sum_{i=s}^{t} A(i+1, t) \rho_i\;.
\end{align*}
Hence, Equation \eqref{eq:recursion} is proved. Through a similar argument, we have
\begin{equation}\label{eq:recursion2}
\Pr(Z_t = 1 \mid Z_s = 0) = \sum_{i=s}^{t-1} A(i+1, t-1) \rho_i \;.
\end{equation}
In addition, Equation \eqref{eq:recursion} holds under the $s = t$ case, and we omit its proof for simplicity.

Based on recursions \eqref{eq:recursion}, \eqref{eq:recursion2}, we further obtain
\begin{align*}
\Pr(Z_s = 1) &= \Pr(Z_s = 1\mid Z_1 = 0) \Pr(Z_1 = 0) + \Pr(Z_s = 1\mid Z_1 = 1) \Pr(Z_1 = 1)\\
&= (\sum_{i=1}^{s-1} A(i+1, s-1) \rho_i) \Pr(Z_1 = 0) + \Bigl(A(1, s-1) + \sum_{i=1}^{s-1} A(i+1, s-1) \rho_i\Bigr) \Pr(Z_1 = 1)\\
&= \sum_{i=1}^{s-1} A(i+1, s-1) \rho_i + A(1, s-1) \Pr(Z_1 = 1)\;.
\end{align*}
This completes the proof. 
\end{proof}

\section{Proofs of Section \ref{sec:homog}}\label{sec:proof_homog}
To formulate a closed-form design objective, we need to simplify $\E \Sigma$ in $\cL_{asy}$. The following proposition records a useful observation that the entries of $\E \Sigma$ are linear functions of the first and second moments of treatment assignments. 
\begin{proposition}\label{prop:sigma_soft}
For any $i \le j$, we have
\begin{equation*}
\E \Sigma_{ij} = \frac{1}{T_\text{eff}}\sum_{t=K}^{T} \E Z_{t-i+1} Z_{t-j+1} - \frac{1}{T_\text{eff}^2} \sum_{s=K}^{T}\sum_{t=K}^{T} \E Z_{s-i+1} Z_{t-j+1}\;.
\end{equation*}
\end{proposition}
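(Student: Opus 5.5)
The plan is to expand the definition of $\Sigma$ entrywise and then take expectations using linearity alone. By construction, the $i$-th coordinate of $X_t$ is $Z_{t-i+1}$ and the $i$-th coordinate of $\bar X$ is $\bar X_i = \frac{1}{T_\text{eff}}\sum_{s=K}^T Z_{s-i+1}$. This gives
\[
\Sigma_{ij} = \frac{1}{T_\text{eff}}\sum_{t=K}^T (Z_{t-i+1}-\bar X_i)(Z_{t-j+1}-\bar X_j)\;.
\]
The first step is the usual centering identity for empirical covariances. Expanding the product and using $\sum_{t=K}^T Z_{t-i+1} = T_\text{eff}\bar X_i$, the two cross terms each equal $-\bar X_i\bar X_j$ and the constant term equals $+\bar X_i\bar X_j$. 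We therefore obtain
\[
\Sigma_{ij} = \frac{1}{T_\text{eff}}\sum_{t=K}^T Z_{t-i+1}Z_{t-j+1} - \bar X_i \bar X_j\;.
\]

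The second step is to write the correction term as a double sum:
\[
\bar X_i\bar X_j = \frac{1}{T_\text{eff}^2}\sum_{s=K}^T\sum_{t=K}^T Z_{s-i+1}Z_{t-j+1}\;.
\]
Taking expectations of both terms termwise then yields the displayed formula. This step needs no independence or stationarity, because every quantity is a finite sum of bounded random variables.

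There is no genuine obstacle here. The only points that need care are the index bookkeeping (all indices $t-i+1\ge 1$ because $t\ge K\ge i$) and the observation that the restriction $i\le j$ is inessential. That restriction is presumably stated so that the later use of Proposition~\ref{prop:moments} can order the two time indices. Within the double sum, however, $s-i+1$ may exceed $t-j+1$, so when evaluating it later one uses $\E Z_aZ_b = \Pr(Z_{\min(a,b)}=1)\Pr(Z_{\max(a,b)}=1\mid Z_{\min(a,b)}=1)$, with the convention $\E Z_aZ_a=\E Z_a$.
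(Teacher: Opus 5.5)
Your proposal is correct and is essentially the paper's argument. The paper likewise uses the centering identity to write $\Sigma_{ij}$ as the empirical second moment minus $\bar X_i\bar X_j$, expands the latter as a double sum, and takes expectations by linearity, writing it first with the shifted index $t+K-i$ before returning to the stated form. Your side remarks are accurate: the indices stay valid, and the restriction $i\le j$ is inessential.
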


\begin{proof}[(Proof of Proposition \ref{prop:sigma_soft})]
By definition of $\Sigma$, for any $i,j = 1,\dots,K$, we have
\[
\E \Sigma_{ij}
=
\frac{1}{T_\text{eff}} \sum_{t=1}^{T_\text{eff}} \E Z_{t + K - i} Z_{t + K - j}
-
\frac{1}{T_\text{eff}^2}
\sum_{t=1}^{T_\text{eff}}\sum_{s=1}^{T_\text{eff}}
\E Z_{t + K -i} Z_{s + K -j}\;.
\]
Equivalently, it can be expressed as
\begin{equation*}
\E \Sigma_{ij} = \frac{1}{T_\text{eff}}\sum_{t=K}^{T} \E Z_{t-i+1} Z_{t-j+1} - \frac{1}{T_\text{eff}^2} \sum_{s=K}^{T}\sum_{t=K}^{T} \E Z_{s-i+1} Z_{t-j+1}\;.
\end{equation*}
\end{proof}

Proposition \ref{prop:sigma_soft} and Proposition \ref{prop:moments} jointly provide the analytical formula for evaluating the expected Gram matrix $\E \Sigma$. To further simplify Proposition \ref{prop:moments}, under the random-switch design with $\rho_t = \rho$ and $\gamma_t = \gamma$, we write the product kernel $A(\cdot, \cdot)$ in Proposition \ref{prop:moments} as
\[
A(s, t) = (1 - \gamma - \rho)^{t-s+1} \mathbb{I}\{s \le t\} + \mathbb{I}\{s > t\}\;.
\]
We define $\alpha = \rho / (\gamma + \rho)$, which corresponds to the treatment probability under the stationary distribution $(\gamma / (\gamma + \rho), \rho / (\gamma + \rho))$. In addition, let $q = 1 - \gamma - \rho$. By Proposition \ref{prop:moments} and the formula of geometric sums, we have the following lemma.

\begin{lemma}\label{lem:homog_moments}
For any $1 \le s \le t \le T$, a random-switch design with $(\rho, \gamma)$ satisfies that
\begin{align*}
\E Z_s Z_t = \alpha^2 + \alpha (1 - \alpha) q^{t-s}\;.
\end{align*}
\end{lemma}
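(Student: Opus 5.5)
The plan is to write $\E Z_s Z_t = \Pr(Z_s=1)\Pr(Z_t=1\mid Z_s=1)$ and evaluate each factor with Proposition~\ref{prop:moments}, specialized to the homogeneous kernel $A(s,t)=q^{t-s+1}$ for $s\le t$ (and $A=1$ for $s>t$).

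\emph{Marginal.} The first step is to show that the marginal is stationary, i.e. $\Pr(Z_s=1)=\alpha$ for every $s$. By Proposition~\ref{prop:moments}, with $\Pr(Z_1=1)=\alpha$,
\[
\Pr(Z_s=1)=q^{s-1}\alpha+\rho\sum_{i=1}^{s-1}q^{s-1-i}=q^{s-1}\alpha+\rho\,\frac{1-q^{s-1}}{1-q}.
\]
Since $1-q=\rho+\gamma$, we have $\rho/(1-q)=\alpha$, so this equals $q^{s-1}\alpha+\alpha(1-q^{s-1})=\alpha$. (Alternatively, one can note directly that $(1-\alpha,\alpha)$ is invariant under $P$.)

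\emph{Conditional.} Next, for $s\le t$,
\[
\Pr(Z_t=1\mid Z_s=1)=q^{t-s}+\rho\sum_{i=s}^{t-1}q^{t-1-i}=q^{t-s}+\alpha(1-q^{t-s})=\alpha+(1-\alpha)q^{t-s}.
\]
This also covers $s=t$, where it gives $1$; the paper notes Equation~\eqref{eq:recursion} holds in that case.

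\emph{Combine.} Multiplying the two factors gives
\[
\E Z_sZ_t=\alpha\bigl(\alpha+(1-\alpha)q^{t-s}\bigr)=\alpha^2+\alpha(1-\alpha)q^{t-s}.
\]

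\emph{Degenerate cases.} The one step needing care is the geometric-sum identity $\sum_{j=0}^{n-1}q^j=(1-q^n)/(1-q)$, which requires $q\neq1$, i.e. $\rho+\gamma>0$.
\begin{itemize}
\item When $\rho=\gamma=0$ the chain is constant with $\Pr(Z_1=1)=1/2$ by the footnote convention. Then $\E Z_sZ_t=1/2$, and the formula gives $1/4+1/4\cdot1=1/2$, so it still holds.
\item When $q=0$, interpret $q^0=1$ so that the sums remain valid.
\end{itemize}
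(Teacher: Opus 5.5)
Your proof is correct and follows essentially the same route as the paper: specialize Proposition~\ref{prop:moments} to the homogeneous kernel, sum the geometric series using $\rho/(1-q)=\alpha$, multiply the marginal and conditional factors, and treat $\rho=\gamma=0$ separately. The only cosmetic difference is that you establish $\Pr(Z_s=1)=\alpha$ up front, whereas the paper carries a general $\Pr(Z_1=1)$ (splitting off the case $s=1$) and substitutes $\Pr(Z_1=1)=\alpha$ only at the end.
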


\begin{proof}[(Proof of Lemma \ref{lem:homog_moments})]
First, we assume $\gamma + \rho > 0$ such that $q = 1 - \gamma - \rho < 1$.
Based on Proposition \ref{prop:moments} and the kernel expression in the homogeneous case, we have that for $s>1$, 
\begin{align*}
  \E Z_s &= A(1, s-1) \Pr(Z_1 = 1)+ \sum_{i=1}^{s-1} A(i+1, s-1) \rho \\
  &= q^{s-1} \Pr(Z_1 = 1) + \sum_{i=1}^{s-1} q^{s - i - 1} \rho\;.
\end{align*}
Since $q = 1 - \rho - \gamma < 1$, by the formula of geometric sums, we obtain
\begin{align*}
  \E Z_s &= q^{s-1} \Pr(Z_1 = 1) + \frac{1 - q^{s-1}}{1 - q} \rho = q^{s-1} \Pr(Z_1 = 1) + (1 - q^{s-1}) \alpha \;.
\end{align*}
For $s = 1$ and $t > s$, we apply Proposition \ref{prop:moments} to obtain
\begin{align*}
  \E Z_1 Z_t &= \Pr(Z_1 = 1) \bigl(A(1, t-1) + \sum_{i=1}^{t-1} A(i+1, t-1) \rho\bigr) \\
  &= \Pr(Z_1 = 1) \bigl(q^{t-1} + \sum_{i=1}^{t-1} q^{t-i-1} \rho\bigr) = \Pr(Z_1 = 1) \bigl(\alpha + (1 - \alpha)q^{t-1} \bigr)\;.
\end{align*}
Similarly, for $t > s > 1$, 
\begin{align*}
  \E Z_s Z_t &= \Bigl(q^{s-1} \Pr(Z_1 = 1) + \sum_{i=1}^{s-1} A(i+1, s-1) \rho \Bigr) \bigl(q^{t-s} + \sum_{i=s}^{t-1} A(i+1, t-1) \rho\bigr) \\
  &= \Bigl(q^{s-1} \Pr(Z_1 = 1) + (1 - q^{s-1})\alpha \Bigr) \bigl(q^{t-s} + (1 - q^{t-s}) \alpha\bigr)\\
  &= \Bigl(\alpha + (\Pr(Z_1 = 1) - \alpha) q^{s-1} \Bigr) \bigl(\alpha + (1 - \alpha)q^{t-s} \bigr)\;.
\end{align*}

When $\rho = \gamma = 0$, the design satisfies $Z_T = Z_{T-1} = \dots = Z_1$, and therefore
\[
\E Z_s = \Pr(Z_1 = 1)\;, \quad \E Z_s Z_t = \Pr(Z_1 = 1)\;.
\]

Lastly, since $\Pr(Z_1 = 1) = \alpha$ by Definition \ref{def:soft}, one can easily verify that
\[
  \E Z_s Z_t =  \alpha (\alpha + (1 - \alpha)q^{t-s} )\;.
\]
\end{proof}

Then, the matrix $\E \Sigma$ in Proposition \ref{prop:sigma_soft} can be simplified as functions of $(\rho, \gamma)$. The following result establishes a closed-form expression for $\E \Sigma$ under random-switch designs, thereby leading to efficient evaluation of the design objective $\cL_{asy}(\rho, \gamma; w)$. In the statement, we use $\alpha = \rho / (\gamma + \rho)$ and $q = 1 - \gamma - \rho$. 

\begin{lemma}\label{lem:homog_M}
Suppose $T \ge 2K-1$. Then, for any $1\le i, j\le K$, the random-switch design with parameters $(\rho, \gamma)$ satisfies
\begin{equation}\label{eq:soft_obj}
\E \Sigma_{ij} = \alpha (1-\alpha) \Bigl(q^d - \frac{R_{ij}(q)}{T_\text{eff}^2}\Bigr)\;,
\end{equation}
where $d = |i - j|$ and
\[
R_{ij}(q) = \frac{(T_\text{eff} - d)(1-q^2) - 2q^{d+1} + q^{T_\text{eff}-d+1} + q^{T_\text{eff} + d + 1} }{(1 - q)^2}\;.
\]
In the extreme case of $\rho = \gamma = 0$, so that $q = 1$, Equation \eqref{eq:soft_obj} is defined in the left limit sense with $q \to 1^-$.
\end{lemma}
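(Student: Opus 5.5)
Substituting Lemma~\ref{lem:homog_moments} into Proposition~\ref{prop:sigma_soft} and summing the resulting geometric series gives the formula directly. Write $n=T_\text{eff}$ and assume first $\gamma+\rho>0$, so that $q<1$. By symmetry of $\Sigma$ we may take $i\le j$. Since $\E Z_sZ_t=\alpha^2+\alpha(1-\alpha)q^{|t-s|}$ for all $s,t$, both terms in Proposition~\ref{prop:sigma_soft} split into an $\alpha^2$ part and an $\alpha(1-\alpha)$ part.

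\emph{Constant parts.} The first term contributes $\frac1n\cdot n\alpha^2=\alpha^2$. The second contributes $\frac{1}{n^2}\cdot n^2\alpha^2=\alpha^2$. These cancel.

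\emph{Diagonal-in-time term.} In the first sum, $(t-i+1)-(t-j+1)=j-i=d$ for every $t$. This term is therefore exactly $\alpha(1-\alpha)q^d$.

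\emph{Double sum.} The remaining piece is
\[
-\frac{\alpha(1-\alpha)}{n^2}S,\qquad S=\sum_{s=K}^T\sum_{t=K}^T q^{|s-t+d|}.
\]
Reindex with $a=s-K$ and $b=t-K$, both ranging over $0,\dots,n-1$, so the exponent is $|a-b+d|$. Group the pairs by $m=a-b\in\{-(n-1),\dots,n-1\}$; each $m$ occurs $n-|m|$ times. Hence
\[
S=\sum_{m=-(n-1)}^{n-1}(n-|m|)\,q^{|m+d|}.
\]
The hypothesis $T\ge 2K-1$ gives $n\ge K>d$, so $0\le d\le n-1$. Substitute $u=m+d$ and split the range at $u=0$:
\begin{itemize}
\item for $u\ge0$ (with $u\le n-1+d$), the weight is $n-|u-d|$;
\item for $u<0$ (with $u\ge d-n+1$), the weight is $n-d+u$, i.e.\ $n-d-|u|$.
\end{itemize}

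Each piece is a finite sum of the form $\sum_k (c-k)q^k$. Evaluate these with the closed forms
\[
\sum_{k=0}^{N}q^k=\frac{1-q^{N+1}}{1-q},\qquad \sum_{k=0}^N kq^k=\frac{q-(N+1)q^{N+1}+Nq^{N+2}}{(1-q)^2}.
\]
For the $u\ge0$ range, handle the tent-shaped weight $n-|u-d|$ by splitting further at $u=d$.

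Put everything over the common denominator $(1-q)^2$. The claim is that the numerator collapses to
\[
(n-d)(1-q^2)-2q^{d+1}+q^{n-d+1}+q^{n+d+1}.
\]
This bookkeeping is the main obstacle: there are several boundary monomials, and the polynomial terms in $n$ must cancel correctly. I would organise it through a generating-function identity. Note that $S$ is a sum of $q^{|x|}$ over the pairwise differences of the two length-$n$ index windows, which are offset by $d$. Use the standard identity
\[
\sum_{a,b=0}^{n-1}q^{|a-b|}=\frac{n(1-q^2)-2q+2q^{n+1}}{(1-q)^2},
\]
and then account for the offset $d$ by adding and removing the boundary strips. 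As a sanity check on the final numerator, set $d=0$: the claimed formula then matches this identity exactly.

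\emph{Case $q=1$.} When $\rho=\gamma=0$, every $Z_t$ equals $Z_1$, so $\Sigma=0$ identically. The stated convention interprets the formula as a left limit. The prefactor $\alpha(1-\alpha)=1/4$ is finite, and $R_{ij}(q)$ has a finite limit as $q\to1^-$ (apply L'Hôpital twice, or simply note that $S\to n^2$ termwise). The bracket therefore tends to $1-n^2/n^2=0$. This agrees with $\E\Sigma=0$, so the convention is consistent, and the proof is complete.
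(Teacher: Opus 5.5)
Your proposal is correct and follows the paper's route: substitute the moments from Lemma~\ref{lem:homog_moments} into Proposition~\ref{prop:sigma_soft}, cancel the $\alpha^2$ terms, rewrite the double sum as $\sum_{m}(T_\text{eff}-|m|)\,q^{|m+d|}$, split it at its two kinks, and evaluate the resulting arithmetico-geometric sums. The one step you assert rather than verify is the collapse of the numerator; the paper settles it directly by writing the three pieces (with $n=T_\text{eff}$, each over $(1-q)^2$) as $q^{d+1}\{(n-1)-nq+q^{n}\}$, then $(n-d)-(n-d-1)q-(n+1)q^{d+1}+nq^{d+2}$, and then $(n-d-1)q-(n-d)q^2+q^{n-d+1}$, whose sum is immediately $(n-d)(1-q^2)-2q^{d+1}+q^{n-d+1}+q^{n+d+1}$, so the detour through the $d=0$ identity and boundary strips is unnecessary.
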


Lemma~\ref{lem:homog_M} characterizes the asymptotic objective $\cL_{asy}$
through the expected Gram matrix $\E\Sigma$, thereby reducing the design
optimization problem to an optimization over the Markov parameters
$(\rho,\gamma)$. The formula also reveals the structure of $\E\Sigma$. In
particular, the design performance separates into two components:
\[
\underbrace{\alpha(1-\alpha)}_{\text{marginal treatment variation}}
\underbrace{\Bigl(
q^d-\frac{R_{ij}(q)}{T_\text{eff}^2}
\Bigr)}_{\text{persistence}}\;.
\]
By definition, $\alpha$ is the marginal treatment probability under the homogeneous Markov chain. The parameter $q$ measures the persistence of treatment assignments: fewer switches, corresponding to smaller values of $\rho$ and $\gamma$, lead to larger values of $q$. Thus, for the asymptotic objective, the roles of the marginal treatment probability $\alpha$ and the persistence parameter $q$ can be analyzed separately. Moreover, Equation~\eqref{eq:soft_obj} decomposes the design information into a leading term and a finite-sample correction. When $|q|<1$ and $T_\text{eff}\to\infty$, the correction term is of smaller order, and the leading term $\alpha(1-\alpha)q^d$ dominates.

\begin{proof}[(Proof of Lemma \ref{lem:homog_M})]
Based on Lemma \ref{lem:homog_moments}, for any $1 \le s, t \le T$, we have
\[
\E Z_s Z_t = \alpha^2  + \alpha (1 - \alpha) q^{|s-t|}\;.
\]
Without loss of generality, we derive the expression of $\E \Sigma_{ij}$ with $i < j$. The $i > j$ case follows exactly the same argument, and the $i = j$ case can be verified similarly. 

For $i < j$, we apply the identity from Lemma \ref{lem:homog_moments} to obtain
\begin{align*}
\E \Sigma_{ij} &= \frac{1}{T_\text{eff}}\sum_{t=K}^{T} \E Z_{t-i+1} Z_{t-j+1} -
\frac{1}{T_\text{eff}^2} \sum_{s=K}^{T}\sum_{t=K}^{T} \E Z_{s-i+1} Z_{t-j+1} 
= \alpha (\alpha + (1-\alpha) q^{j-i} ) - \frac{1}{T_\text{eff}^2}
\sum_{s=K}^{T} \sum_{t=K}^{T} \alpha \bigl(\alpha + (1 - \alpha) q^{|t-s-j+i|}\bigr) \\
&= \alpha (\alpha + (1-\alpha) q^{j-i} ) - \alpha^2 - \frac{\alpha(1 - \alpha)}{T_\text{eff}^2}
\sum_{s=K}^{T} \sum_{t=K}^{T} q^{|t-s-j+i|} 
= \alpha (1-\alpha) q^{j-i} - \frac{\alpha(1 - \alpha)}{T_\text{eff}^2}
\sum_{s=K}^{T} \sum_{t=K}^{T} q^{|t-s-j+i|}\;.
\end{align*}
To simplify the last term, we apply the change of variable $r = t - s$ and $d = j - i$. By definition, $r$ ranges from $-(T_\text{eff}-1)$ to $T_\text{eff} - 1$. In particular, for each possible value of $r$, the summation contains $T_\text{eff} - |r|$ terms of $q^{|r - d|}$. This leads to 
\[
\sum_{s=K}^{T} \sum_{t=K}^{T} q^{|t-s-j+i|} = \sum_{r = -(T_\text{eff}-1)}^{T_\text{eff}-1} (T_\text{eff} - |r|) q^{|r - d|}\;. 
\]
Since $T\ge 2K - 1$, we have $T_\text{eff} - 1 = T - K \ge K-1 \ge d$. Therefore, we split the summation at $r = 0$ and $r = d$ to obtain
\[
\sum_{s=K}^{T} \sum_{t=K}^{T} q^{|t-s-j+i|} = \sum_{r = -(T_\text{eff}-1)}^{-1} (T_\text{eff} + r) q^{d - r} + \sum_{r = 0}^{d} (T_\text{eff} - r) q^{d - r} + \sum_{r = d+1}^{T_\text{eff}-1} (T_\text{eff} - r) q^{r - d}\;.
\]
For each term, we apply the formula for arithmetico-geometric sums and geometric sums to obtain
\begin{align*}
  \sum_{r = -(T_\text{eff}-1)}^{-1} (T_\text{eff} + r) q^{d - r} &= \frac{q^{d+1} ((T_\text{eff}-1)-T_\text{eff}q + q^{T_\text{eff}})}{(1 - q)^2}\;, \\
  \sum_{r = 0}^{d} (T_\text{eff} - r) q^{d - r} &= \frac{(T_\text{eff}-d) - (T_\text{eff}-d-1) q -(T_\text{eff}+1)q^{d+1} + T_\text{eff} q^{d+2} }{(1 - q)^2}\;, \\
  \sum_{r = d+1}^{T_\text{eff}-1} (T_\text{eff} - r) q^{r - d} &= \frac{(T_\text{eff}-d-1) q - (T_\text{eff}-d) q^2 + q^{T_\text{eff}-d+1}}{(1 - q)^2} \;. \\
\end{align*}
Summing up the three terms above, we have
\[
\sum_{s=K}^{T} \sum_{t=K}^{T} q^{|t-s-j+i|} = \frac{(T_\text{eff} - d)(1-q^2) - 2q^{d+1} + q^{T_\text{eff}-d+1} + q^{T_\text{eff}+d+1} }{(1 - q)^2}
\]
and for any $i < j$, 
\[
\E \Sigma_{ij} = \alpha (1-\alpha) q^{j-i} - \frac{\alpha(1 - \alpha)}{T_\text{eff}^2} \frac{(T_\text{eff} - d)(1-q^2) - 2q^{d+1} + q^{T_\text{eff}-d+1} + q^{T_\text{eff}+d+1} }{(1 - q)^2}\;, \quad d = j - i\;.
\]

In the $\rho = \gamma = 0$ case and $\Pr(Z_1 = 1) = \alpha$, we have $Z_1 = \dots = Z_T$ and 
\[
\E \Sigma_{ij} = \alpha - \frac{1}{T_\text{eff}^2} T_\text{eff}^2 \alpha = 0\;.
\]
One can verify by L'H\^{o}pital's rule that this value corresponds to the left limit of the expression derived above, since
\[
\lim_{q\to 1^-} \E \Sigma_{ij} = 0\;.
\]
\end{proof}

\begin{proof}[(Proof of Theorem \ref{thm:target_random})]
Define the feasible region in \eqref{eq:target_soft} to be
\[
\Theta_\delta=[\delta,1-\delta]^2\;.
\]
For any $(\rho, \gamma) \in \Theta_{\delta}$, one can verify that $\alpha = \rho / (\rho + \gamma)$ is uniformly bounded away from zero and one, and $q = 1 - \rho - \gamma$ is uniformly bounded away from $\pm 1$. 
Based on Lemma \ref{lem:homog_M}, for any $1\le i,j\le K$,
\[
\E \Sigma_{ij} = \alpha (1-\alpha) q^{|i - j|} - \alpha (1 - \alpha) \frac{R_{ij}(q)}{T_\text{eff}^2}\;.
\]
Therefore,
\[
\E \Sigma = \alpha(1-\alpha)\Sigma_K(q)+R_T(\rho,\gamma)\;, \qquad \Sigma_{K,ij}(q)=q^{|i-j|}\;.
\]
Here, $\Sigma_K(q)$ is the $K\times K$ Kac--Murdock--Szego (KMS) matrix \parencite{kac1953eigen}. By the uniform boundedness of $\alpha$ and $q$, the remainder term $R_T(\rho,\gamma)$ satisfies
\[
\sup_{(\rho,\gamma)\in \Theta_{\delta}}\|R_T(\rho,\gamma)\|\to 0\;.
\]

By the spectral theory of KMS matrices \parencite{dow2002explicit}, $\alpha(1-\alpha)\Sigma_K(q)$ is uniformly positive definite on the set $\Theta_{\delta}$. 
For all sufficiently large $T$, the uniform convergence of $R_T(\rho, \gamma)$ implies that $\E \Sigma$ is positive definite on $\Theta_{\delta}$. Hence, 
\[
\sup_{(\rho,\gamma)\in\Theta_{\delta}}
\Bigl\|
(\E \Sigma)^{-1} - \frac{1}{\alpha(1-\alpha)}\Sigma_K(q)^{-1}
\Bigr\| \to 0\;.
\]
Hence, for the targeted objective $l_T(\rho,\gamma; w) \coloneqq w^\top (\E\Sigma)^{-1}w$ and the limiting objective $l(\rho,\gamma;w) = \frac{1}{\alpha(1-\alpha)}w^\top\Sigma_K(q)^{-1}w$, we have the uniform convergence
\[
\sup_{(\rho,\gamma)\in\Theta_{\delta}}
|l_T(\rho, \gamma; w) - l(\rho, \gamma; w) | \to 0\;.
\]

The limiting objective separates into an $\alpha$-part and a $q$-part. For $\alpha$, the factor $1/\{\alpha(1-\alpha)\}$ is uniquely minimized at $\alpha^\star=1/2$. Thus the limiting optimizer must satisfy $\rho^\star=\gamma^\star$. For $q$, since $\Sigma_K(q)$ is the Kac-Murdock-Szego matrix, for $|q|<1$,
\[
\Sigma_K(q)^{-1} = \frac{1}{1-q^2}
\begin{pmatrix}
  1 & -q &  & & \\
  -q & 1+q^2 & -q & & \\
     & \ddots & \ddots & \ddots & \\
  && -q & 1+q^2 & -q \\
  &&& -q & 1
\end{pmatrix}\;.
\]
Therefore,
\[
w^\top \Sigma_K(q)^{-1} w = \frac{
\|w\|^2 + q^2\sum_{k=2}^{K-1}w_k^2 - 2q\sum_{k=1}^{K-1}w_kw_{k+1}
}{1-q^2}\;.
\]
Let
\[
B=\sum_{k=1}^{K-1}w_kw_{k+1}\;,
\qquad
D=w_1^2+w_K^2+2\sum_{k=2}^{K-1}w_k^2\;.
\]
The first-order condition with respect to $q$ is
\[
Bq^2-Dq+B=0 \;.
\]

We now show that the quadratic equation has a unique solution in $(-1,1)$ under the regularity assumption on $w$. First,
\[
D+2B=\sum_{k=1}^{K-1}(w_k+w_{k+1})^2\ge0\;,
\qquad
D-2B=\sum_{k=1}^{K-1}(w_k-w_{k+1})^2\ge0\;.
\]
Under the regularity conditions on $w$, both quantities are strictly positive, so
\[
D>2|B|\;.
\]
If $B=0$, the first-order condition reduces to $-Dq=0$, and hence
$q^\star=0$. If $B\neq0$, the two roots satisfy
\[
q_1q_2=1
\]
by Vieta's formula. Since $D>2|B|$, the roots are real and exactly one of them
lies in $(-1,1)$. Denote this unique root by $q^\star$. Because
$w^\top\Sigma_K(q)^{-1}w\to\infty$ as $q\to\pm1$ in the interior case,
this unique stationary point is the unique minimizer of the $q$-part.

Thus the limiting objective $l(\rho,\gamma;w)$ has the unique minimizer
\[
\alpha^\star=\frac12\;,
\qquad
q^\star\in(-1,1)\;,
\]
where $q^\star$ is the unique solution in $(-1,1)$ to $Bq^2-Dq+B=0$. Equivalently, $\rho^\star=\gamma^\star$, and substituting
$q^\star=1-2\rho^\star$ into the first-order condition gives
\[
4(\rho^\star)^2\sum_{k=1}^{K-1}w_kw_{k+1} + (2\rho^\star-1)\sum_{k=1}^{K-1}(w_k-w_{k+1})^2 = 0\;.
\]

It remains to justify the convergence of the finite-$T$ minimizers. Since $(\rho^\star, \gamma^\star) \in (\delta, 1 - \delta)^2$ and 
\[
\sup_{(\rho,\gamma)\in\Theta_{\delta}}
|l_T(\rho, \gamma; w) - l(\rho, \gamma; w) | \to 0\;,
\]
we apply the argmin theorem \parencite[Section 3.2.1]{wellner1996weak} to obtain $(\rho_T^\star, \gamma_T^\star) \to (\rho^\star, \gamma^\star)$.
\end{proof}

\begin{proof}[(Proof of Theorem \ref{thm:robust_random})]
Define the feasible region in \eqref{eq:robust_soft} to be
\[
\Theta_\delta=[\delta,1-\delta]^2\;,
\]
so that the robust design optimization \eqref{eq:robust_soft} reduces to 
\[
\min_{(\rho, \gamma) \in \Theta_\delta} \max_{\|w\|\le 1}  w^\top (\E \Sigma)^{-1} w \;.
\]

By Lemma \ref{lem:homog_M}, for $d=|i-j|$,
\[
\E\Sigma_{ij}
=
\alpha(1-\alpha)
\left(
q^d-\frac{R_{ij,T}(q)}{T_{\mathrm{eff}}^2}
\right)\;,\quad R_{ij,T}(q)
=
\frac{(T_{\mathrm{eff}}-d)(1-q^2)-2q^{d+1}
+q^{T_{\mathrm{eff}}-d+1}+q^{T_{\mathrm{eff}}+d+1}}
{(1-q)^2}\;.
\]
Since $q$ is uniformly bounded away from $1$ on $\Theta_\delta$, there exists a constant $C<\infty$ such that, uniformly over
$(\rho,\gamma)\in\Theta_\delta$ and $1\le i,j\le K$,
\[
|R_{ij,T}(q)|\le C T_{\mathrm{eff}}\;.
\]
Therefore,
\[
\sup_{(\rho,\gamma)\in\Theta_\delta}
\left\| \E\Sigma - \alpha(1-\alpha)\Sigma_K(q) \right\| \to 0\;,
\]
where $\Sigma_K(q)$ is the $K\times K$ Kac--Murdock--Szego (KMS) matrix \parencite{kac1953eigen} with $\Sigma_{K,ij}(q)=q^{|i-j|}$. By the spectral theory of KMS matrices \parencite{dow2002explicit}, $\Sigma_K(q)$ is positive definite uniformly on $\Theta_\delta$, which leads to
\[
\sup_{(\rho,\gamma)\in\Theta_\delta}
\left\| (\E\Sigma)^{-1} - \frac{1}{\alpha(1-\alpha)}\Sigma_K(q)^{-1} \right\| \to 0\;.
\]
Let $f_T(\rho, \gamma) = \max_{\|w\|\le 1} w^\top (\E\Sigma)^{-1} w$ and $f(\rho, \gamma) = \max_{\|w\|\le 1} w^\top\Sigma_K(q)^{-1} w/\alpha(1-\alpha)$. By Weyl's inequality \parencite{horn2012matrix} and the convergence result above, we have
\[
\sup_{(\rho,\gamma)\in\Theta_\delta}
\left|
f_T(\rho,\gamma)
-
f(\rho,\gamma)
\right|
\to 0\;.
\]

It remains to characterize the unique minimizer of $f$. First, by definition of the spectral norm, we have
\[
f(\rho, \gamma) = \frac{1}{\alpha(1 - \alpha)} \|\Sigma_K(q)^{-1}\| = \frac{1}{\alpha (1 - \alpha) \lambda_{\min}(\Sigma_K(q))}\;.
\]
For the term involving $\alpha$, observe that 
\[
\frac{1}{\alpha(1 - \alpha)} \ge 4\;,
\]
with equality if and only if $\alpha=1/2$. Second,
\[
\frac{1}{\lambda_{\min}(\Sigma_K(q))}
\ge
\frac{K}{ \operatorname{tr}(\Sigma_K(q))} = 1\;.
\]
For $K\ge2$, equality holds if and only if $\Sigma_K(q)=I_K$, which is equivalent to $q=0$. Hence the unique minimizer of $f$ on $\Theta_\delta$ is given by
\[
\alpha^\star=\frac12,
\qquad
q^\star=0.
\]
Using
\[
\rho=\alpha(1-q),
\qquad
\gamma=(1-\alpha)(1-q),
\]
we obtain
\[
\rho^\star=\gamma^\star=\frac12.
\]

Now let $(\rho_T^\star,\gamma_T^\star)$ be any minimizer of $f_T$ over $\Theta_\delta$. Since $f_T\to f$ uniformly and $f$ has the unique minimizer $(1/2,1/2)$, the argmin theorem \parencite[Section~3.2.1]{wellner1996weak} implies
\[
(\rho_T^\star,\gamma_T^\star)\to(1/2,1/2)\;.
\]
\end{proof}

\section{Proofs of Section \ref{sec:exact}}\label{sec:proof_exact}
To prove Proposition \ref{prop:sigma_hard}, we introduce the stronger result below and give the proof.
\begin{proposition}\label{prop:sigma_hard_supp}
Under the cycle-switch design with parameter $l$ and $T$, we have
\[
\Sigma_{ij} = f(i-j) - \frac{1}{4} + M_{ij}\;,
\]
where $M_{ij}$ converges to zero as $T$ goes to infinity uniformly over all $i, j$. Moreover, $M_{ij} = 0$ when $T_\text{eff}$ is divisible by $2l$. 
\end{proposition}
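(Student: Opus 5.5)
The sequence $Z_t$ under the cycle-switch design is deterministic and $2l$-periodic: writing $g(t)=\mathbb{I}\{(t-1)\bmod 2l<l\}$, we have $Z_t=g(t)$. Hence $\Sigma$ is an empirical covariance of lagged copies of a periodic square wave. The plan is to compute the first and second empirical moments separately over full periods, and to show that leftover partial periods contribute $O(l/T_\text{eff})$.

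\textbf{Decomposition.} Write
\[
\Sigma_{ij}=\frac{1}{T_\text{eff}}\sum_{t=K}^T Z_{t-i+1}Z_{t-j+1}-\bar X_i\bar X_j,
\qquad
\bar X_i=\frac{1}{T_\text{eff}}\sum_{t=K}^T Z_{t-i+1}.
\]
Each sum runs over $T_\text{eff}$ consecutive indices. Any $2l$ consecutive values of $g$ contain exactly $l$ ones, so if $T_\text{eff}$ is divisible by $2l$ then $\bar X_i=1/2$ exactly for every $i$. In general, split the range into $\lfloor T_\text{eff}/2l\rfloor$ full periods plus a remainder of fewer than $2l$ terms, each bounded by $1$. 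This gives $|\bar X_i-1/2|\le 2l/T_\text{eff}$, uniformly in $i$.

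\textbf{Cross moment.} The main step is the periodic autocorrelation of the square wave. For $d=i-j$, over one full period,
\[
\frac{1}{2l}\sum_{t=1}^{2l} g(t)\,g(t+d)
\]
equals the overlap length of the on-window $[1,l]$ with its shift by $d$ (taken mod $2l$), divided by $2l$. That overlap is $|d\bmod 2l - l|$, so the average is $f(d)$. I would check the case $r=d\bmod 2l\in[0,l]$, where the overlap is $l-r$, and the case $r\in[l,2l)$, where the overlap is $r-l$. Together these give $|r-l|$, and the formula is symmetric in the sign of $d$ because $f(-d)=f(d)$.

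Since the index pairs $(t-i+1,t-j+1)$ for consecutive $t$ sweep consecutive periods, the cross term equals $f(i-j)$ exactly when $2l\mid T_\text{eff}$. Otherwise it is off by at most $2l/T_\text{eff}$.

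\textbf{Conclusion.} Set $M_{ij}=\Sigma_{ij}-f(i-j)+\tfrac14$. From the two estimates,
\[
|M_{ij}|\le \frac{2l}{T_\text{eff}}+\Bigl|\bar X_i\bar X_j-\tfrac14\Bigr|\le \frac{6l}{T_\text{eff}},
\]
which tends to zero uniformly in $i,j\le K$ for fixed $l$. In the divisible case $M_{ij}=0$, which is Proposition~\ref{prop:sigma_hard}. The only delicate point is the mod bookkeeping in the overlap computation, including getting the sign convention of $f$ right; the rest is routine.
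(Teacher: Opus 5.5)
Your proof is correct and follows essentially the same route as the paper: you write $\Sigma_{ij}$ as an empirical second moment minus a product of empirical means, evaluate both over full $2l$-periods (obtaining $f(i-j)$ and $1/2$ from the square-wave overlap count), and control the leftover partial period of fewer than $2l$ terms. Your version is slightly more explicit than the paper's, since you verify the overlap formula $|d \bmod 2l - l|$ case by case and give a concrete uniform rate $|M_{ij}|\le 6l/T_\text{eff}$, whereas the paper only states that the remainder terms vanish as $T\to\infty$.
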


\begin{proof}[(Proof of Proposition \ref{prop:sigma_hard_supp})]
First, for any $i, j$, we have
\[
\Sigma_{ij} = \frac{1}{T_\text{eff}} \sum_{t=K}^{T} Z_{t-i+1} Z_{t-j+1} - \Bigl(\frac{1}{T_\text{eff}} \sum_{s=K}^{T}Z_{s-i+1} \Bigr) \Bigl(\frac{1}{T_\text{eff}} \sum_{t=K}^{T} Z_{t-j+1}\Bigr)\;.
\]
Therefore, it suffices to figure out the summations in the expression above. 
Let $n = l \lfloor \frac{T_\text{eff}}{2l} \rfloor$ denote the total number of treatments in full cycles of the cycle-switch design. We have
\[
\sum_{t=K}^{T} Z_{t-j+1} = n + R_j\;, \quad R_j = \sum_{t=K}^{K+r-1} Z_{t-j+1}\;, \quad r = T_\text{eff} \bmod 2l\;.
\]
Similarly, for the first term in the formula of $\Sigma_{ij}$, we have
\[
\sum_{t=K}^{T} Z_{t-i+1} Z_{t-j+1} = n \frac{|(i-j)\bmod 2l - l|}{l} + R_{ij}\;, \quad R_{ij}  = \sum_{t=K}^{K+r-1} Z_{t-j+1} Z_{t-i+1} \;.
\]
Summing up the derivations above, we have
\begin{align*}
  \Sigma_{ij} &= \frac{2n}{T_\text{eff}} f(i-j) - \Bigl(\frac{n}{T_\text{eff}}\Bigr)^2 + \frac{R_{ij}}{T_\text{eff}} - \frac{R_i+R_j}{T_\text{eff}} \frac{n}{T_\text{eff}} - \frac{R_i R_j}{T_\text{eff}^2}\;.
\end{align*}
When $T_\text{eff}$ is divisible by $2l$, we obtain $n = T_\text{eff} / 2$, $R_i = R_{ij} = 0$, and hence
\[
\Sigma_{ij} = f(i - j) - \frac{1}{4} \;.
\]
More generally, given $T_\text{eff}\to \infty$ and fixed $K, l$, one can verify that 
\[
\max_{i, j} \Bigl|\Sigma_{ij} - f(i - j) + \frac{1}{4}\Bigr| \to 0\;.
\]
\end{proof}

Let $\widetilde{\Sigma}$ be the limiting covariance matrix defined in Equation \eqref{eq:tilde_sigma}. The following result characterizes the inverse of $\widetilde{\Sigma}$, which is used in the analysis of design optimization below. 
\begin{lemma}\label{lem:inverse}
For $l \ge K$, it holds that
\[
\widetilde{\Sigma}^{-1} = l L_K + \frac{l}{l - K + 1} vv^\top\;,
\]
where $v = e_1 + e_K = (1, 0, \dots, 0, 1) \in \R^K$, and $L_K$ is the Laplacian defined by 
\[
L_K =
\begin{bmatrix}
1 & -1 & 0 & 0 & \cdots & 0 & 0\\
-1 & 2 & -1 & 0 & \cdots & 0 & 0\\
0 & -1 & 2 & -1 & \ddots & \vdots & \vdots\\
0 & 0 & -1 & 2 & \ddots & 0 & 0\\
\vdots & \vdots & \ddots & \ddots & \ddots & -1 & 0\\
0 & 0 & \cdots & 0 & -1 & 2 & -1\\
0 & 0 & \cdots & 0 & 0 & -1 & 1
\end{bmatrix}\;.
\]
\end{lemma}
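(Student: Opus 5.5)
The plan is to verify the claimed inverse directly: I will check that $\widetilde{\Sigma}\bigl(lL_K+\tfrac{l}{l-K+1}vv^\top\bigr)=I_K$. First I simplify the entries of $\widetilde{\Sigma}$ using the standing assumption $l\ge K$. All indices satisfy $|i-j|\le K-1<l$, so only $d=i-j\in\{-(K-1),\dots,K-1\}$ occurs. For $0\le d<l$ we have $d\bmod 2l=d$, hence $f(d)=(l-d)/(2l)$. For $-l<d<0$ we have $d\bmod 2l=2l+d$, hence $|2l+d-l|=l-|d|$. In both cases
\[
f(i-j)=\frac12-\frac{|i-j|}{2l}\;,\qquad\text{so}\qquad \widetilde{\Sigma}=\frac14\,\mathbf{1}\mathbf{1}^\top-\frac{1}{2l}D\;,
\]
where $\mathbf 1\in\R^K$ is the all-ones vector and $D_{ij}=|i-j|$ is the path distance matrix.

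Second, I compute $\widetilde{\Sigma}L_K$. Every column of the Laplacian sums to zero, so $\mathbf 1^\top L_K=0$ and the rank-one part drops out, leaving $\widetilde\Sigma L_K=-\frac1{2l}DL_K$. I compute $DL_K$ column by column.
\begin{itemize}
\item For an interior column $2\le j\le K-1$, $(DL_K)_{ij}=-|i-j+1|+2|i-j|-|i-j-1|$. This is a discrete second difference of $|\cdot|$, so it equals $-2\delta_{ij}$.
\item For $j=1$, $(DL_K)_{i1}=|i-1|-|i-2|$, which is $-1$ at $i=1$ and $+1$ for $i\ge2$. That is, it equals $1-2\delta_{i1}$.
\item Symmetrically, for $j=K$ the column equals $1-2\delta_{iK}$.
\end{itemize}
Hence $DL_K=-2I_K+\mathbf 1 v^\top$ with $v=e_1+e_K$, and therefore
\[
l\,\widetilde{\Sigma}L_K=I_K-\tfrac12\mathbf 1v^\top\;.
\]

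Third, I handle the rank-one correction. Entrywise,
\[
(\widetilde\Sigma v)_i=\tfrac12-\frac{(i-1)+(K-i)}{2l}=\frac{l-K+1}{2l}\;,
\]
which is a constant vector. Consequently $\frac{l}{l-K+1}\widetilde\Sigma vv^\top=\frac12\mathbf 1v^\top$. Adding this to the previous step gives exactly $I_K$. This also shows that $\widetilde\Sigma$ is invertible when $l\ge K$; note that $l-K+1\ge1$, so the coefficient is well defined.

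The only delicate point is the boundary columns $j=1$ and $j=K$ of $DL_K$. They are what produce the $\mathbf 1v^\top$ term, and its exact cancellation against the $vv^\top$ correction is the crux of the argument. The reduction of $f$ to a linear function of $|i-j|$ relies on $l\ge K$, and I will check it carefully for negative $d$.
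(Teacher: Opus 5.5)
Your proof is correct, but it takes a different route from the paper's. The paper starts from the same decomposition $\widetilde{\Sigma} = -\frac{1}{2l}D_K + \frac14\mathbf{1}\mathbf{1}^\top$. It then inverts this with Sherman--Morrison, importing the closed-form inverse of the path distance matrix $D_K$ from Bapat (displayed for $K\ge 3$). From that it computes $D_K^{-1}\mathbf{1} = v/(K-1)$ and $\mathbf{1}^\top D_K^{-1}\mathbf{1} = 2/(K-1)$, and it handles $K=2$ separately.

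You instead guess and verify. Your key identity $D_K L_K = -2I_K + \mathbf{1}v^\top$ is essentially the fact behind the cited result: together with $D_K v = (K-1)\mathbf{1}$ it yields $D_K^{-1} = -\frac12 L_K + \frac{1}{2(K-1)}vv^\top$, which is exactly the formula the paper imports. You apply the identity directly to $\widetilde{\Sigma}$, alongside $\widetilde{\Sigma}v = \frac{l-K+1}{2l}\mathbf{1}$, so you never need $D_K^{-1}$ explicitly. You also never need to check that the Sherman--Morrison denominator $4 - 4l/(K-1)$ is nonzero. And you treat $K=2$ uniformly, needing only $K\ge 2$ so that columns $1$ and $K$ are distinct.

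Your argument is shorter, self-contained, and gives invertibility of $\widetilde{\Sigma}$ for free. The paper's argument is derivational: it shows where the coefficient $l/(l-K+1)$ comes from rather than confirming it after the fact, at the cost of relying on an external result.
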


\begin{proof}[(Proof of Lemma \ref{lem:inverse})]
By defining a matrix $D_K \in \R^{K\times K}$ via $(D_{K})_{ij} = |i - j|$, we have 
\[
\widetilde{\Sigma} = - \frac{1}{2l} D_K + \frac{1}{4} \bone_K \bone_K^\top\;.
\]
We first invoke the Sherman-Morrison-Woodbury identity to obtain
\[
\widetilde{\Sigma}^{-1} = \Bigl(- \frac{1}{2l} D_K + \frac{1}{4} \bone_K \bone_K^\top\Bigr)^{-1} = - 2l D_K^{-1} - \frac{1}{4 - 2l \bone_K^\top D_K^{-1} \bone_K} (2l)^2 D_K^{-1} \bone_K \bone_K^\top D_K^{-1}\;.
\]
Next, the distance matrix $D_K$ has a closed-form expression for its inverse. Specifically, for $K\ge 3$, we apply Theorem 8.9 of \cite{bapat2010graphs} to obtain
\[
D_K^{-1} = 
\begin{pmatrix}
- \frac{K-2}{2K - 2} & \frac{1}{2} & 0 & \dots & 0 & \frac{1}{2K-2} \\
\frac{1}{2} & -1 & \frac{1}{2} & \ddots &  & 0 \\  
0 & \frac{1}{2} & -1 & \frac{1}{2} & \ddots & 0 \\
\vdots & \ddots & \ddots & \ddots & \ddots & \vdots \\
0 & & \ddots & \frac{1}{2} & -1 & \frac{1}{2} \\
\frac{1}{2K-2} & 0 & \dots & 0 & \frac{1}{2} & - \frac{K-2}{2K-2} \\
\end{pmatrix}\;.
\]
Therefore, we have
\[
\bone_K^\top D_K^{-1} \bone_K = \frac{2}{K-1}\;, \quad D_K^{-1} \bone_K = \frac{1}{(K-1)} v\;.
\]
and the inverse of $\widetilde{\Sigma}$ simplifies to
\[
l L_K + \frac{l}{l - K + 1} vv^\top\;.
\]
The $K = 2$ case can be analyzed in a similar way using the fact that $D_2 = \begin{bmatrix}
  0 & 1 \\
  1 & 0
\end{bmatrix} = D_2^{-1}$.
\end{proof}

\subsection{Analysis of Targeted Design Optimization}\label{sec:proof_hard_target}
We introduce a few key lemmas that are used in the main proofs. The lemma below shows that $\widetilde{\Sigma} \in \R^{K\times K}$ has rank $\min\{l, K\}$. 

\begin{lemma}\label{lem:rank}
The covariance matrix $\widetilde{\Sigma}$ satisfies $\mathrm{rank}(\widetilde{\Sigma}) = \min\{l, K\}$.
\end{lemma}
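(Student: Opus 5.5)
The plan is to exhibit $\widetilde{\Sigma}$ as a Gram matrix of $K$ explicit vectors and read off the rank from their linear span.

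\textbf{Step 1: Gram representation.} Take the $2l$-periodic sequence $z(t)$, equal to $1$ for $t \bmod 2l\in\{0,\dots,l-1\}$ and $0$ otherwise. Define the centered vectors $u_j\in\R^{2l}$ by $u_j(t)=z(t-j+1)-\tfrac12$ for $t=0,\dots,2l-1$. A direct count over one period gives
\[
\frac{1}{2l}\sum_{t=0}^{2l-1} z(t-i+1)\,z(t-j+1)=f(i-j).
\]
Since each shift has mean $1/2$, this yields
\[
\widetilde{\Sigma}_{ij}=\frac{1}{2l}\langle u_i,u_j\rangle .
\]
This identity is exactly the computation behind Proposition~\ref{prop:sigma_hard_supp} in the divisible case. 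Hence $\mathrm{rank}(\widetilde{\Sigma})=\dim\operatorname{span}\{u_1,\dots,u_K\}$.

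\textbf{Step 2: structure of the shifts.} Each $u_j$ is a $\pm\tfrac12$ square wave, and cyclic shift by $l$ negates it. So $u_{j+l}=-u_j$, and every $u_j$ lies in $\operatorname{span}\{u_1,\dots,u_l\}$. This gives $\mathrm{rank}\le\min\{l,K\}$.

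\textbf{Step 3: the lower bound.} It suffices to show that $u_1,\dots,u_m$ are linearly independent for $m=\min\{l,K\}\le l$. I would use first differences. Write $\Delta u(t)=u(t)-u(t-1)$, taken cyclically. Then $\Delta u_j$ is supported on exactly two positions, with values $+1$ at the up-jump and $-1$ at the down-jump. The up-jump of $u_j$ sits at $t\equiv j-1\pmod{2l}$, and the down-jump sits $l$ later.

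For $j=1,\dots,l$, the up-jump positions $j-1\in\{0,\dots,l-1\}$ are distinct. The down-jump positions lie in $\{l,\dots,2l-1\}$, so they are also distinct and never collide with any up-jump. The vectors $\Delta u_1,\dots,\Delta u_l$ therefore have pairwise disjoint supports, so they are independent. Because $\Delta$ is linear, any relation $\sum c_ju_j=0$ implies $\sum c_j\Delta u_j=0$, which forces every $c_j=0$. This gives rank exactly $m$.

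\textbf{Main obstacle.} The main difficulty is bookkeeping rather than substance. One must check that the indexing convention for $z(t-j+1)$ matches $f(d)=|d\bmod 2l-l|/(2l)$, including sign conventions, since $f$ is even. One must also check that the covariance normalization is $1/(2l)$ over a full period. I would verify the identity in Step 1 for $d=0$, which gives $f=1/2$ and $\widetilde\Sigma_{ii}=1/4$, and for $d=l$, which gives $f=0$ and $\widetilde\Sigma=-1/4$. These two cases confirm that the representation is consistent.
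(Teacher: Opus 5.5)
Your proof is correct, and it takes a genuinely different route from the paper's.

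\textbf{The paper's route.} The paper works entrywise. It proves $f(d)+f(d+l)=\tfrac12$ and deduces that row $i+l$ of $\widetilde{\Sigma}$ is the negative of row $i$, which gives the upper bound. For the lower bound it exhibits the inverse of the leading $l\times l$ block, $H^{-1}=lB$ with $B$ the anti-periodic Laplacian, asserting this by ``direct calculation.'' The case $l\ge K$ is left to a ``similar argument.''

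\textbf{Your route.} You realize $\widetilde{\Sigma}$ as $\frac{1}{2l}$ times the Gram matrix of one period of centered, shifted square waves. The sign-flip identity then becomes $u_{j+l}=-u_j$ at the vector level. The lower bound reduces to linear independence of $u_1,\dots,u_l$. Your first-difference argument settles this without any matrix inversion, since the $\Delta u_j$ have disjoint supports.

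\textbf{What your route buys.} First, positive semidefiniteness comes for free. Second, one argument covers $l<K$ and $l\ge K$ at once: for $l\ge K$ you simply take a subset of an independent family. By contrast, the paper's omitted case really needs invertibility of $\tfrac14\mathbf{1}\mathbf{1}^\top-D_K/(2l)$, for example via Lemma~\ref{lem:inverse}. Third, the factorization $\widetilde{\Sigma}=AHA^\top$ used in Theorem~\ref{thm:identification} falls out directly. Writing $u_j=(-1)^r u_m$ for $j=m+rl$ with $1\le m\le l$, you get $[u_1,\dots,u_K]=[u_1,\dots,u_l]A^\top$.

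\textbf{What the paper's route buys.} It produces the explicit inverse $H^{-1}=lB$. This is reused in Corollary~\ref{cor:lag} to compute $e_k^\top\widetilde{\Sigma}^{-1}e_k=2l$. Your argument shows $H$ is invertible but does not produce its inverse, so that later computation would still need the explicit formula.
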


Lemma \ref{lem:rank} entails two different settings. When $l < K$, $\widetilde{\Sigma}$ is rank-deficient and we can only detect some directions of the signal. This implies that the worst-case variance $\max_w w^\top \widetilde{\Sigma}^{-1} w$ is $\infty$. When $l \ge K$, the $\widetilde{\Sigma}$ is full-rank, and $l$ could potentially be the optimizer to the robust design optimization \eqref{eq:robust_hard}.

\begin{proof}[(Proof of Lemma \ref{lem:rank})]
In the following, we prove that $\mathrm{rank}(\widetilde{\Sigma}) = l$ when $l < K$. The $l \ge K$ setting can be proved via a similar argument. For $\widetilde{\Sigma}_{ij} = f(i - j) - \frac14$, we show that for all integers $d$, 
\[
f(d + l) + f(d) = \frac12\;.
\]
Let $r = d \bmod 2l \in \{0,\dots,2l-1\}$. If $0 \le r < l$, then
\[
f(d) = \frac{l-r}{2l}\;,
\qquad
f(d+l) = \frac{r}{2l}\;,
\]
leading to $f(d) + f(d+l) = 1/2$. Similarly, if $l \le r < 2l$, then
\[
f(d) = \frac{r-l}{2l}\;, \qquad
f(d+l) = \frac{2l-r}{2l}\;.
\]
Hence in all cases, we have $f(d) + f(d+l) = 1/2$. Since $K > l$, we set $i$ to be a positive integer such that $1 \le i \le K-l$. Then, for any $j = 1, \dots, K$, the identity above implies the following sign-flip identity: 
\[
\widetilde{\Sigma}_{i + l, j} = f(i + l - j) - \frac{1}{4} = - f(i - j) + \frac{1}{4} = - \widetilde{\Sigma}_{i, j}\;.
\]
Thus the $(i+l)$-th row is exactly the negative of the $i$-th row. It follows that all rows of
$\widetilde{\Sigma}$ are generated by the first $l$ rows, and hence we derive the upper bound on the rank of $\widetilde{\Sigma}$, i.e., $\mathrm{rank}(\widetilde{\Sigma}) \le l$.

For the reverse inequality, consider the leading $l\times l$
principal submatrix $H=(\widetilde{\Sigma}_{ij})_{1\le i,j\le l}$. Since
$|i-j|\le l-1$ on this range, we have
\[
H_{ij}=\frac14-\frac{|i-j|}{2l}.
\]
Now define the $l\times l$ matrix
\[
B=
\begin{bmatrix}
2 & -1 & 0 & \cdots & 0 & 1\\
-1 & 2 & -1 & \ddots & & 0\\
0 & -1 & 2 & \ddots & \ddots & \vdots\\
\vdots & \ddots & \ddots & \ddots & -1 & 0\\
0 & & \ddots & -1 & 2 & -1\\
1 & 0 & \cdots & 0 & -1 & 2
\end{bmatrix}.
\]
A direct calculation shows that $H(lB)=I_l$, and hence $H^{-1}=lB$. Therefore $H$ is invertible, so $\mathrm{rank}(H)=l$. Since $H$ is a principal
submatrix of $\widetilde{\Sigma}$, this implies $\mathrm{rank}(\widetilde{\Sigma})\ge l$. Combining the two bounds yields
\[
\mathrm{rank}(\widetilde{\Sigma}) = l\;.
\]
\end{proof}

\begin{proof}[(Proof of Theorem \ref{thm:identification})]
Define the matrix $A\in\mathbb R^{K\times l}$ by
\[
A_{u+rl,u}=(-1)^r
\]
for $u=1,\dots,l$ and $r=0,\dots,\lfloor (K-u)/l\rfloor$, and set
$A_{ij}=0$ otherwise. Equivalently,
\[
A=[b_1,\dots,b_l]\;,
\]
where $b_u$ is defined in Theorem \ref{thm:identification}. 

Let
\[
H=(\widetilde{\Sigma}_{ij})_{i,j\in[l]}\;,
\qquad [l]=\{1,\dots,l\}\;.
\]
By the sign-flip identity established in the proof of Lemma \ref{lem:rank}, one can verify that 
\[
\widetilde{\Sigma}=AHA^\top\;.
\]

The columns of $A$ have disjoint supports and are therefore linearly
independent, i.e., $\mathrm{rank}(A) = l$.
Moreover, $H$ is invertible by the proof of Lemma \ref{lem:rank}. It follows
that
\[
\mathrm{rank}(\widetilde{\Sigma}) = \mathrm{rank}(AHA^\top) = l\;.
\]

Since $\widetilde{\Sigma}=AHA^\top$, we have $\mathrm{Im}(\widetilde{\Sigma})\subseteq \mathrm{Im}(A)$. Since both image spaces have dimension $l$, we have
\[
\mathrm{Im}(\widetilde{\Sigma}) = \mathrm{Im}(A) = \mathrm{span}\{b_u:u=1,\dots,l\}\;.
\]

Finally, an estimand vector $w$ is estimable if and only if
\[
w\in\operatorname{Im}(\widetilde{\Sigma}).
\]
Therefore, $w$ is estimable if and only if
\[
w\in\operatorname{span}\{b_u:u=1,\dots,l\}\;.
\]
\end{proof}

\begin{proof}[(Proof of Theorem \ref{thm:exact_sol2})]
Based on Proposition \ref{prop:sigma_hard}, for any $l$, we have
\[
\widetilde{\Sigma}_{ij} = f(i - j) - \frac{1}{4}\;,
\] 
with the function $f$ defined in Equation \eqref{eq:fq}. Moreover, since $l \ge K$, we obtain
\[
f(i - j) = \frac{l - |i - j|}{2l} = \frac{1}{2} - \frac{|i - j|}{2l}\;.
\]

By Lemma \ref{lem:inverse}, we have
\[
\widetilde{\Sigma}^{-1} = l L_K + \frac{l}{l - K + 1} vv^\top\;,
\]
where $v = e_1 + e_K$, and $L_K$ is the Laplacian defined by 
\[
L_K =
\begin{bmatrix}
1 & -1 & 0 & 0 & \cdots & 0 & 0\\
-1 & 2 & -1 & 0 & \cdots & 0 & 0\\
0 & -1 & 2 & -1 & \ddots & \vdots & \vdots\\
0 & 0 & -1 & 2 & \ddots & 0 & 0\\
\vdots & \vdots & \ddots & \ddots & \ddots & -1 & 0\\
0 & 0 & \cdots & 0 & -1 & 2 & -1\\
0 & 0 & \cdots & 0 & 0 & -1 & 1
\end{bmatrix}\;.
\]
Based on the matrix inverse, the objective reduces to
\[
w^\top \widetilde{\Sigma}^{-1} w = l w^\top L_K w + \frac{l}{l - K + 1} (w^\top v)^2 \eqqcolon l A(w) + \frac{l}{l - K + 1} B(w)\;,
\]
where 
\begin{align*}
  A(w) &= w^\top L_K w = \sum_{k=1}^{K-1} (w_k - w_{k+1})^2\;, \\
  B(w) &= (w_1 + w_K)^2\;.
\end{align*}
This completes the proof.
\end{proof}

Lastly, we prove Corollaries \ref{cor:cumulative} and \ref{cor:lag}, which characterize the optimal designs under specific classes of estimands.

\begin{proof}[(Proof of Corollary \ref{cor:cumulative})]
The $l \ge K$ setting follows from a direct application of Theorem \ref{thm:exact_sol2}, and therefore it suffices to prove that $w$ is not estimable when $l < K$. First consider the case $l<K$. By the sign-flip identity for $\widetilde{\Sigma}$ (Lemma \ref{lem:rank}), we have
\[
\widetilde{\Sigma}_{i+l,j}=-\widetilde{\Sigma}_{ij}\;,
\qquad i=1,\dots,K-l\;.
\]
For any vector $a\in\mathbb R^K$,
\[
(\widetilde{\Sigma}a)_{i+l} = -(\widetilde{\Sigma}a)_i\;,
\qquad i=1,\dots,K-l\;.
\]
Therefore every vector in $\mathrm{Im}(\widetilde{\Sigma})$ satisfies the alternating constraint
\[
x_{i+l}=-x_i\;, \qquad i=1,\dots,K-l\;.
\]
In particular, since $l<K$, the coordinate $1+l$ exists. For the cumulative effect vector $w$, we have $w_1=1$, while
\[
w_{1+l}\in\{0,1\}\;.
\]
Thus
\[
w_{1+l}\neq -w_1\;.
\]
Hence $w\notin\mathrm{Im}(\widetilde{\Sigma})$. Therefore the cumulative effect is not estimable when $l<K$, and under our convention the corresponding variance is infinite:
\[
w^\top\widetilde{\Sigma}^{-1}w=\infty\;.
\]
\end{proof}

\begin{proof}[(Proof of Corollary \ref{cor:lag})]
First consider the case $l<K$. By the sign-flip identity for the cycle-switch limiting covariance matrix (Lemma \ref{lem:rank}),
\[
\widetilde{\Sigma}_{i+l,j}=-\widetilde{\Sigma}_{ij}\;,
\qquad i=1,\dots,K-l\;.
\]
Since $\widetilde{\Sigma}$ is symmetric, every vector
$x\in\operatorname{Im}(\widetilde{\Sigma})$ satisfies
\[
x_{i+l}=-x_i\;,
\qquad i=1,\dots,K-l\;.
\]
Next, we focus on $w = e_k$ and study its estimation variance. First, if $k\le K-l$, then the constraint with $i=k$
gives $(e_k)_{k+l}=-(e_k)_k$, which implies $0=-1$, a contradiction. Hence we need $k>K-l$, or equivalently,
\[
l\ge K-k+1\;.
\]
Second, if $k>l$, then the constraint with $i=k-l$ gives $(e_k)_k=-(e_k)_{k-l}$, which implies $1=0$, again a contradiction. Hence we need
\[
k\le l\;.
\]
Combining the two necessary conditions, $e_k$ is estimable only if
\[
l\ge \max\{k,K-k+1\}\;.
\]
If this condition fails, then $e_k\notin\operatorname{Im}(\widetilde{\Sigma})$, and under our convention the corresponding variance is infinite:
\[
e_k^\top\widetilde{\Sigma}^{-1}e_k=\infty\;.
\]

Now suppose
\[
l\ge \max\{k,K-k+1\}\;.
\]
If $l\ge K$, then Theorem \ref{thm:exact_sol2} gives
\[
e_k^\top\widetilde{\Sigma}^{-1}e_k
=
lA(e_k)+\frac{l}{l-K+1}B(e_k)\;.
\]
Since $k=2,\dots,K-1$, we have
\[
A(e_k)=2\;, \qquad B(e_k)=0\;.
\]
Therefore,
\[
e_k^\top\widetilde{\Sigma}^{-1}e_k = 2l\;.
\]

If $\max\{k,K-k+1\}\le l<K$, by Theorem \ref{thm:identification}, $e_k\in\operatorname{Im}(\widetilde{\Sigma})$, so the lag-specific effect is estimable. Moreover, the factorization used in
the proof of Theorem \ref{thm:identification} gives
\[
\widetilde{\Sigma}=AHA^\top\;,
\]
where $A=[b_1,\dots,b_l]$ and
\[
H=(\widetilde{\Sigma}_{ij})_{1\le i,j\le l}\;.
\]
Because $l\ge k$ and $l\ge K-k+1$, the coordinate $k$ is not paired with
any other coordinate through the sign-flip pattern. Hence the $k$-th column of
$A$ is exactly $e_k$. Therefore,
\[
e_k^\top\widetilde{\Sigma}^{-1}e_k=e_k^\top H^{-1}e_k\;.
\]
Here, the expression on the left hand side follows the convention of matrix inverse in Equation \eqref{eq:inv}. With an abuse of notation, $e_k$ on the left hand side above denotes the $k$-th basis vector in $\R^K$, whereas $e_k$ on the right hand side denotes the $k$-th basis vector in $\R^l$. From the proof of Lemma \ref{lem:rank}, we have
\[
H^{-1}=lB\;,
\]
where $B$ is defined in the proof of Lemma \ref{lem:rank} and has diagonal entries equal to $2$. Hence we have $e_k^\top H^{-1}e_k=2l$, which implies
\[
e_k^\top\widetilde{\Sigma}^{-1} e_k=2l\;.
\]
Combining the cases above proves the result.
\end{proof}

\subsection{Analysis of Robust Design Optimization}\label{sec:proof_hard_robust}

The following lemma characterizes a subset of eigenvalues of $\widetilde{\Sigma}^{-1}$, which will be used to derive the closed-form expression for the robust design objective. 

\begin{lemma}\label{lem:spectrum}
Suppose $l \ge K$. The matrix $\widetilde{\Sigma}^{-1}$ has $\lfloor \frac{K}{2} \rfloor$ eigenvalues defined by
\[
4l \sin^2\Bigl(\frac{(2r-1)\pi}{2K}\Bigr)\;, \quad r = 1, \dots, \lfloor \frac{K}{2} \rfloor\;.
\]
\end{lemma}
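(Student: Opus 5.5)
Since $l\ge K$, Lemma~\ref{lem:inverse} gives
\[
\widetilde{\Sigma}^{-1} = l L_K + \frac{l}{l-K+1}\, vv^\top,\qquad v=e_1+e_K .
\]
The plan is to exhibit $\lfloor K/2\rfloor$ explicit eigenvectors of $L_K$ that are orthogonal to $v$. Every such vector $x$ satisfies $v^\top x=0$, so the rank-one term drops out and
\[
\widetilde{\Sigma}^{-1}x = l L_K x .
\]
It then suffices to find eigenvectors of $L_K$ with eigenvalues $4\sin^2\bigl((2r-1)\pi/(2K)\bigr)$ and $x_1+x_K=0$.

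\textbf{Spectrum of $L_K$.} The matrix $L_K$ is the Laplacian of the path on $K$ vertices, whose spectrum is classical. Its eigenvectors are
\[
x^{(m)}_k = \cos\Bigl(\frac{m\pi(k-1/2)}{K}\Bigr),\qquad k=1,\dots,K,\quad m=0,\dots,K-1,
\]
with eigenvalues $2-2\cos(m\pi/K)=4\sin^2\bigl(m\pi/(2K)\bigr)$. I will check this directly:
\begin{itemize}
\item For interior rows, $-x_{k-1}+2x_k-x_{k+1}=(2-2\cos\theta)x_k$ with $\theta=m\pi/K$, by the sum-to-product identity.
\item For the first row, the relation $x_1-x_2=(2-2\cos\theta)x_1$ is equivalent to the ghost condition $x_0=x_1$. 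This holds because $\cos(-\theta/2)=\cos(\theta/2)$.
\item The last row is symmetric: $x_{K+1}=x_K$ follows from $\cos(m\pi(K+1/2)/K)=\cos(m\pi(K-1/2)/K)$, which holds since $m\pi+\phi$ and $m\pi-\phi$ have equal cosine.
\end{itemize}

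\textbf{Selecting the eigenvectors.} Next I compute $v^\top x^{(m)} = x^{(m)}_1+x^{(m)}_K$. Because
\[
x^{(m)}_K=\cos\bigl(m\pi-\tfrac{m\pi}{2K}\bigr)=(-1)^m\cos\bigl(\tfrac{m\pi}{2K}\bigr)=(-1)^m x^{(m)}_1,
\]
the sum vanishes exactly when $m$ is odd. Writing $m=2r-1$, the range $m\le K-1$ allows $r=1,\dots,\lfloor K/2\rfloor$. For each such $r$, $x^{(2r-1)}\perp v$, so
\[
\widetilde{\Sigma}^{-1}x^{(2r-1)}=4l\sin^2\Bigl(\frac{(2r-1)\pi}{2K}\Bigr)\,x^{(2r-1)} .
\]
These eigenvectors are nonzero and, being eigenvectors of the symmetric $L_K$ with distinct eigenvalues, mutually orthogonal. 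This gives the $\lfloor K/2\rfloor$ claimed eigenvalues, counted with multiplicity.

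\textbf{Main obstacle.} The only delicate step is verifying the boundary rows of $L_K$ and the parity computation for $x_1+x_K$. Both reduce to the cosine identities above, so I expect no real difficulty.
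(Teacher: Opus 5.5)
Your proposal is correct and follows the paper's proof essentially verbatim. Both arguments use Lemma~\ref{lem:inverse} to write $\widetilde{\Sigma}^{-1} = lL_K + \frac{l}{l-K+1}vv^\top$, then take the cosine eigenvectors of the path Laplacian $L_K$ with odd frequency $m=2r-1$ (the paper's even index $k=2r$), since these are orthogonal to $v=e_1+e_K$. Your explicit checks of the boundary rows and of the distinctness of the eigenvalues fill in the ``direct algebra'' that the paper leaves implicit.
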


\begin{proof}[(Proof of Lemma \ref{lem:spectrum})]
We prove this lemma by way of construction. First, by Lemma \ref{lem:inverse}, we have
\[
\widetilde{\Sigma}^{-1} = l L_K + \frac{l}{l - K + 1} vv^\top\;,
\]
where $v = e_1 + e_K$, and $L_K$ is the Laplacian
\[
L_K =
\begin{bmatrix}
1 & -1 & 0 & 0 & \cdots & 0 & 0\\
-1 & 2 & -1 & 0 & \cdots & 0 & 0\\
0 & -1 & 2 & -1 & \ddots & \vdots & \vdots\\
0 & 0 & -1 & 2 & \ddots & 0 & 0\\
\vdots & \vdots & \ddots & \ddots & \ddots & -1 & 0\\
0 & 0 & \cdots & 0 & -1 & 2 & -1\\
0 & 0 & \cdots & 0 & 0 & -1 & 1
\end{bmatrix}\;.
\]

The matrix $L_K$ has explicit eigenvalues and eigenvectors as below. Define $q_1 = \bone_K / \sqrt{K}$ and 
\[
q_{kj} = \sqrt{\frac{2}{K}} \cos\Bigl( \frac{(k-1)(j-1/2)}{K} \pi \Bigr)\;.
\]
One can verify by trigonometric identities that $\{q_k\}_{k=1}^{K}$ forms an orthonormal basis of $\R^{K}$. Moreover, by some direct algebra, one can also verify that 
\[
L_K q_k = \mu_k q_k\;, \quad \mu_1 = 0\;, \quad \mu_k = 4 \sin^2\Bigl(\frac{(k-1)\pi}{2K}\Bigr)\;.
\]
By definition of $q_k$, we have $v^\top q_k = 0$ for all even $k$, i.e., $k = 2r$. Therefore, for any $r = 1, \dots, \lfloor \frac{K}{2} \rfloor$, the vector $q_{2r}$ remains an eigenvector for the matrix $\widetilde{\Sigma}^{-1}$ in the sense that
\[
\widetilde{\Sigma}^{-1} q_{2r} = l \mu_{2r} q_{2r} + 0 = l \mu_{2r} q_{2r}\;.
\]
Therefore, $\widetilde{\Sigma}^{-1}$ has $\lfloor \frac{K}{2} \rfloor$ eigenvalues defined by $l \mu_{2r}$, i.e., 
\[
4 l \sin^2\Bigl(\frac{(2r-1)\pi}{2K}\Bigr)\;, \quad r = 1, \dots, \lfloor \frac{K}{2} \rfloor\;.
\]
\end{proof}

Next, we prove Theorem \ref{thm:robust2} based on the lemmas above.
\begin{proof}[(Proof of Theorem \ref{thm:robust2})]
By Lemma \ref{lem:rank}, it suffices to focus on the case $l \ge K$. Based on Lemma \ref{lem:spectrum}, for even $K$, the matrix $\widetilde{\Sigma}^{-1}$ has eigenvalues
\[
4l \sin^2\Bigl(\frac{(2r-1)\pi}{2K}\Bigr)\;, \quad r = 1, \dots, \frac{K}{2} \;.
\]
By setting $r = K/2$, we obtain the eigenvalue 
\[
4l \sin^2\Bigl(\frac{(K-1)\pi}{2K}\Bigr) = 4l \cos^2\Bigl(\frac{\pi}{2K}\Bigr)\;.
\]
Therefore, we have 
\[
\max_{\|w\| \le 1} w^\top \widetilde{\Sigma}^{-1} w = \|\widetilde{\Sigma}^{-1}\| \ge 4l \cos^2\Bigl(\frac{\pi}{2K}\Bigr)\;.
\]

Next, we show that this eigenvalue is in fact an upper bound, i.e., $\|\widetilde{\Sigma}^{-1}\| \le 4l \cos^2\Bigl(\frac{\pi}{2K}\Bigr)$, and therefore it is exactly the spectral norm. From the proof of Lemma \ref{lem:spectrum}, we have
\[
\widetilde{\Sigma}^{-1} = l L_K + \frac{l}{l-K+1} vv^\top\;, \quad v = e_1 + e_K\;.
\]
Since $l \ge K$, we have $l / (l - K + 1) \le l$, and therefore
\[
\widetilde{\Sigma}^{-1} \preceq  l L_K + l vv^\top = l M_K\;, \quad M_K = L_K + vv^\top\;.
\]
By definition of $L_K$, we have 
\[
L_K + vv^\top = 
\begin{bmatrix}
2 & -1 & 0 & 0 & \cdots & 0 & 1\\
-1 & 2 & -1 & 0 & \cdots & 0 & 0\\
0 & -1 & 2 & -1 & \ddots & \vdots & \vdots\\
0 & 0 & -1 & 2 & \ddots & 0 & 0\\
\vdots & \vdots & \ddots & \ddots & \ddots & -1 & 0\\
0 & 0 & \cdots & 0 & -1 & 2 & -1\\
1 & 0 & \cdots & 0 & 0 & -1 & 2
\end{bmatrix}\;.
\]
Note that it is the anti-periodic discrete Laplacian, and its eigenvalues can be expressed as 
\[
4 \sin^2\Bigl(\frac{(2r-1)\pi}{2K}\Bigr)\;, \quad r = 1, \dots, K\;.
\]
Therefore, we have
\[
\|\widetilde{\Sigma}^{-1}\| \le l \|M_K\| = 4l \cos^2\Bigl(\frac{\pi}{2K}\Bigr)\;.
\]

Summarizing the derivations above, we have
\[
\max_{\|w\| \le 1} w^\top \widetilde{\Sigma}^{-1} w = \|\widetilde{\Sigma}^{-1}\| = 4l \cos^2\Bigl(\frac{\pi}{2K}\Bigr)\;.
\]
Since the worst-case variance grows linearly in $l$ for $l \ge K$, the optimizer is $l^\star = K$.
\end{proof}

\section{Connections to Design-based Inference}\label{sec:connection}
In this section, we provide a design-based interpretation of the OLS estimator as summarized in Remark \ref{rmk:est}. We first derive the design-based limit of the OLS estimator under random-switch designs as $T\to\infty$. This limit can be interpreted as a linear projection target determined by design-induced conditional-outcome contrasts. We then discuss two concrete settings under which this design-based limit recovers a causal interpretation studied by \textcite{lin2025unifying}.

\begin{assumption}\label{asmp:regular_rand}
The treatment assignments $\{Z_t\}_{t=1}^T$ follow a random-switch design with $\rho, \gamma \in (0, 1)$.
\end{assumption}
By Assumption \ref{asmp:regular_rand}, we focus on random-switch designs studied in Section \ref{sec:homog}, where $\rho_t = \rho$ and $\gamma_t = \gamma$ for $t = 1, \dots, T-1$. Furthermore, we assume that the potential outcomes satisfy fixed lag and non-anticipation, and that they are uniformly bounded.
\begin{assumption}\label{asmp:po}
The potential outcomes satisfy the following conditions:
\begin{enumerate}
  \item \emph{Non-anticipation and fixed lag:} for every $t\ge K$,
  \[
  Y_t(z_1,\dots,z_T)=Y_t(z_{t-K+1},\dots,z_t).
  \]
  \item \emph{Uniform boundedness:} there exists a constant $M>0$ such that
  \[
  |Y_t(z_{t-K+1},\dots,z_t)|\le M
  \]
  for every $t\ge K$ and every $(z_{t-K+1},\dots,z_t)\in\{0,1\}^K$.
\end{enumerate}
\end{assumption}

Let $Y_t^{\obs} = Y_t(Z_{t-K+1}, \dots, Z_t)$ be the observed outcome at time $t$, where $Z_{t-K+1}, \dots, Z_t$ are the treatment assignments from a random-switch design. For a fixed lag length \(K\), recall the lag-treatment vector
\[
X_t=(Z_t,Z_{t-1},\dots,Z_{t-K+1})^\top\in\R^K,
\qquad t=K,\dots,T\;,
\]
and $\bar X=\frac1{T_\text{eff}}\sum_{t=K}^T X_t$. The OLS estimate can be expressed as
\begin{equation*}
\widehat\beta =
\Bigl( \frac1{T_\text{eff}}\sum_{t=K}^T (X_t-\bar X)(X_t-\bar X)^\top \Bigr)^{-1} \Bigl( \frac1{T_\text{eff}}\sum_{t=K}^T (X_t-\bar X)Y_t^{\mathrm{obs}} \Bigr)\;.
\end{equation*}
The goal of this section is to derive the limit of $\widehat{\beta}$ under design-based inference.

We establish the following convergence results for the design-based moments, which follows from the ergodic theorem for Markov chains. For a sequence of random variables $\{X_t\}_{t=1}^\infty$, we write $X_t = o_p(1)$ to mean that $X_t$ converges to zero in probability as $t\to\infty$.
\begin{lemma}
\label{lem:LLN}
Suppose Assumptions \ref{asmp:regular_rand} and \ref{asmp:po} hold. We have
\begin{gather}
\frac1{T_\text{eff}}\sum_{t=K}^T
(X_t-\bar{X})(X_t-\bar{X})^\top = \alpha (1 - \alpha) \Sigma_K(q) + o_p(1)\;,\label{eq:LLN1}\\
\frac1{T_\text{eff}}\sum_{t=K}^T
 (X_t-\bar{X}) Y_t^{\obs} = \alpha (1 - \alpha) C_T + o_p(1) \label{eq:LLN2}\;.
\end{gather}
\end{lemma}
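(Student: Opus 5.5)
The plan is to reduce both displays to a weak law of large numbers for averages of uniformly bounded, possibly time-varying, functions of the lag vector $X_t$. Variances of such averages will be controlled through the geometric mixing of the two-state chain. Throughout, $\mu \coloneqq \alpha\bone_K$ denotes the stationary mean. I read $C_T$ as the design-based target
\[
C_T \coloneqq \frac{1}{\alpha(1-\alpha)}\,\frac{1}{T_\text{eff}}\sum_{t=K}^T \E\bigl[(X_t-\mu)Y_t^{\obs}\bigr],
\]
which is bounded uniformly in $T$ by Assumption~\ref{asmp:po}; this is the only definition making \eqref{eq:LLN2} meaningful.

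\textbf{Step 1 (exact moments).} Under Assumption~\ref{asmp:regular_rand} the chain starts at stationarity, so Lemma~\ref{lem:homog_moments} gives $\E X_t=\mu$ and $\Cov(X_t)=\alpha(1-\alpha)\Sigma_K(q)$ for every $t\ge K$, with $|q|<1$.

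\textbf{Step 2 (centering at $\mu$).} I write
\[
\frac1{T_\text{eff}}\sum_{t}(X_t-\bar X)(X_t-\bar X)^\top=\frac1{T_\text{eff}}\sum_t (X_t-\mu)(X_t-\mu)^\top-(\bar X-\mu)(\bar X-\mu)^\top,
\]
\[
\frac1{T_\text{eff}}\sum_t (X_t-\bar X)Y_t^{\obs}=\frac1{T_\text{eff}}\sum_t (X_t-\mu)Y_t^{\obs}-(\bar X-\mu)\,\frac1{T_\text{eff}}\sum_t Y_t^{\obs}.
\]
Since $|Y_t^{\obs}|\le M$, both identities reduce to three claims:
\[
\bar X-\mu=o_p(1),\qquad \frac1{T_\text{eff}}\sum_t (X_t-\mu)(X_t-\mu)^\top-\alpha(1-\alpha)\Sigma_K(q)=o_p(1),
\]
\[
\frac1{T_\text{eff}}\sum_t\Bigl((X_t-\mu)Y_t^{\obs}-\E\bigl[(X_t-\mu)Y_t^{\obs}\bigr]\Bigr)=o_p(1).
\]
The first two are instances of the third type of statement, with a constant outcome function.

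\textbf{Step 3 (a general LLN).} Each summand has the form $g_t(X_t)$, where $g_t:\{0,1\}^K\to\R$ satisfies $|g_t|\le G$ uniformly in $t$. By Assumption~\ref{asmp:po}, $Y_t^{\obs}$ is a deterministic function of $X_t$, so this covers the outcome term as well. It therefore suffices to show
\[
\var\Bigl(\frac1{T_\text{eff}}\sum_t g_t(X_t)\Bigr)\le \frac{C}{T_\text{eff}},
\]
after which Chebyshev's inequality, applied coordinatewise, finishes the argument.

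The key estimate, and the main obstacle, is a covariance bound for these time-varying functions:
\[
|\Cov(g_s(X_s),g_t(X_t))|\le 2G^2\,|q|^{\,t-s-K+1}\qquad\text{for } t-s\ge K.
\]
To prove it, I condition on $\mathcal F_s=\sigma(Z_1,\dots,Z_s)$. By the Markov property, the conditional law of $(Z_{t-K+1},\dots,Z_t)$ given $\mathcal F_s$ depends only on $Z_s$. It equals the transition from $Z_s$ to $Z_{t-K+1}$ followed by the fixed kernel $P^{\otimes}$ over the next $K-1$ steps. By Proposition~\ref{prop:moments} in the homogeneous form, $\Pr(Z_{s+m}=1\mid Z_s)-\alpha=\pm(\cdot)\,q^m$, so this conditional law is within total variation $|q|^{t-K+1-s}$ of the stationary law of $X_t$. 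Hence
\[
|\E[g_t(X_t)\mid\mathcal F_s]-\E g_t(X_t)|\le 2G|q|^{t-s-K+1},
\]
and the covariance bound follows.

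For $|t-s|<K$ I use the trivial bound $G^2$. Summing over pairs then gives $\sum_{s,t}|\Cov|\le T_\text{eff}\bigl(KG^2+2G^2/(1-|q|)\bigr)$, which is exactly the variance bound above. Combining Steps 1–3 yields \eqref{eq:LLN1} and \eqref{eq:LLN2}.
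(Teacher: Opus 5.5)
Your proof is correct and has the same structure as the paper's. Both arguments center at the stationary mean $\alpha\bone_K$, get an $O(1/T_\text{eff})$ variance bound from geometric decay of covariances, apply Chebyshev, and identify the limits via Lemma~\ref{lem:homog_moments}. The only real difference is that you derive the decay $|\Cov(g_s(X_s),g_t(X_t))|\le 2G^2|q|^{t-s-K+1}$ yourself from the explicit two-state transition probabilities. That makes the role of $|q|<1$ explicit and lets one argument cover $\bar X$, the Gram term and the outcome term. The paper instead uses the Markov-chain ergodic theorem for the first two and cites strong-mixing covariance inequalities for the outcome term.

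One small addition is needed. The paper defines $C_{t,k}=\E(Y_t^{\obs}\mid Z_{t-k+1}=1)-\E(Y_t^{\obs}\mid Z_{t-k+1}=0)$, not through the cross-moment. To match your definition to theirs, add the identity $\E[(Z_{t-k+1}-\alpha)Y_t^{\obs}]=\alpha(1-\alpha)C_{t,k}$. It follows by conditioning on $Z_{t-k+1}$ and using $\Pr(Z_{t-k+1}=1)=\alpha$, and it is the paper's step \eqref{eq:ct_eval}.
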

In Lemma \ref{lem:LLN}, we define the $K$-by-$K$ Toeplitz matrix $\Sigma_K(q)$ by
\[
\Sigma_{K, ij}(q) = q^{|i-j|}\;, \quad i, j = 1, \dots, K\;,
\]
where $q = 1 - \rho - \gamma$ as defined in Section \ref{sec:homog}. For each \(t\) and \(k=1,\dots,K\), define the design-induced conditional-outcome contrast
\[
C_{t,k} =
\E \Bigl(Y_t^{\mathrm{obs}}\mid Z_{t-k+1}=1\Bigr)
-
\E \Bigl(Y_t^{\mathrm{obs}}\mid Z_{t-k+1}=0\Bigr)\;.
\]
Then, the vector $C_T$ is defined as
\[
C_t=(C_{t,1},\dots,C_{t,K})^\top,
\qquad
C_T=
\frac1{T_\text{eff}}\sum_{t=K}^T C_t\;.
\]
Intuitively, $\Sigma_K(q)$ captures the limiting sample covariance of the lagged treatment vector, as analyzed in Section~\ref{sec:proof_homog}. The vector $C_T$ collects time-averaged contrasts between conditional outcomes. Throughout this section, expectations are taken only over the randomization design $\{Z_t\}_{t = K}^T$ with the potential outcomes held fixed.

Based on Lemma \ref{lem:LLN}, we derive the limit of the OLS estimator $\widehat{\beta}$ under the design-based framework.
\begin{theorem}[Design-based limit of OLS]
\label{thm:ols}
Suppose Assumptions \ref{asmp:regular_rand} and \ref{asmp:po} hold. Then
\[
\widehat\beta - \Sigma_K(q)^{-1}C_T=o_p(1) \;.
\]
\end{theorem}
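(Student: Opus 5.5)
The plan is to combine the two convergence statements of Lemma \ref{lem:LLN} through a continuous-mapping argument, using the invertibility of the KMS matrix. First, under Assumption \ref{asmp:regular_rand}, we have $\rho,\gamma\in(0,1)$. Hence $\alpha=\rho/(\rho+\gamma)\in(0,1)$ and $q=1-\rho-\gamma\in(-1,1)$. By the spectral theory of KMS matrices invoked in the proof of Theorem \ref{thm:target_random}, $\Sigma_K(q)$ is positive definite, and its inverse is the explicit tridiagonal matrix given there. So the limit $\Sigma_*:=\alpha(1-\alpha)\Sigma_K(q)$ of the sample Gram matrix $\Sigma$ in \eqref{eq:LLN1} is a fixed, invertible matrix.

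Second, the matrix inverse is continuous at $\Sigma_*$. By \eqref{eq:LLN1} and the continuous mapping theorem, the event $\{\Sigma \text{ invertible}\}$ has probability tending to one, and on it
\[
\Sigma^{-1} = \frac{1}{\alpha(1-\alpha)}\Sigma_K(q)^{-1} + o_p(1)\;.
\]
Off this event $\widehat\beta$ may be defined arbitrarily, since the event's probability vanishes and this does not affect an $o_p(1)$ statement.

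Third, I would write $\widehat\beta=\Sigma^{-1}S$ with $S=\frac1{T_\text{eff}}\sum_t (X_t-\bar X)Y_t^{\obs}$. By \eqref{eq:LLN2}, $S=\alpha(1-\alpha)C_T+o_p(1)$. The key point is that $C_T$ is deterministic (expectations are over the design only) and uniformly bounded. Indeed, each conditional expectation is bounded by $M$ from Assumption \ref{asmp:po}, so $|C_{t,k}|\le 2M$ and $\|C_T\|\le 2M\sqrt K$ for all $T$. Hence $S=O_p(1)$, although $C_T$ itself need not converge. Expanding,
\[
\widehat\beta-\Sigma_K(q)^{-1}C_T = \bigl(\Sigma^{-1}-\Sigma_*^{-1}\bigr)S + \Sigma_*^{-1}\bigl(S-\alpha(1-\alpha)C_T\bigr)\;.
\]
The first term is $o_p(1)\cdot O_p(1)=o_p(1)$. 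The second is a fixed matrix times $o_p(1)$. This gives the claim.

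The main obstacle is only bookkeeping: $C_T$ is a non-convergent, $T$-dependent centering, so the argument cannot be "plug in a limit". It instead relies on the uniform boundedness of $C_T$ to control the cross term. Everything substantive, namely the ergodic laws of large numbers for the Markov chain, is already packaged in Lemma \ref{lem:LLN}, which I take as given.
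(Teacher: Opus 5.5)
Your proposal is correct and follows the paper's proof: you combine the two limits in Lemma~\ref{lem:LLN}, invert the Gram matrix via the continuous mapping theorem using nonsingularity of $\alpha(1-\alpha)\Sigma_K(q)$, and close with the boundedness of $C_T$ from Assumption~\ref{asmp:po}. Your explicit decomposition and your handling of the non-invertibility event simply spell out the ``standard stochastic order algebra'' step the paper leaves implicit.
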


\begin{proof}
By Lemma~\ref{lem:LLN},
\[
\frac1{T_\text{eff}}\sum_{t=K}^T
(X_t-\bar{X})(X_t-\bar{X})^\top = \alpha (1 - \alpha) \Sigma_K(q) + o_p(1)\;,\quad 
\frac1{T_\text{eff}}\sum_{t=K}^T (X_t-\bar{X}) Y_t^{\obs} = \alpha (1 - \alpha) C_T + o_p(1)\;.
\]
Since \(\alpha(1-\alpha)\Sigma_K(q)\) is nonsingular from the analysis of Section \ref{sec:proof_homog}, the continuous mapping theorem gives
\[
\Bigl( \frac1{T_\text{eff}}\sum_{t=K}^T
(X_t-\bar{X})(X_t-\bar{X})^\top \Bigr)^{-1} =
\frac{1}{\alpha(1-\alpha)} \Sigma_K(q)^{-1}+o_p(1)\;.
\]
Moreover, we have \(C_T=O(1)\) by Assumption~\ref{asmp:po}. Therefore, by standard stochastic order algebra,
\[
\widehat\beta = \Sigma_K(q)^{-1} C_T+o_p(1)\;.
\]
\end{proof}

Theorem~\ref{thm:ols} shows that, without imposing a linear outcome model, OLS converges to \(\Sigma_K(q)^{-1} C_T\). The vector \(C_T\) consists of conditional-outcome contrasts induced by the assignment design. These contrasts are generally different from marginal potential-outcome contrasts, because under a Markov design, conditioning on one treatment assignment changes the distribution of nearby treatment assignments.

Next, we discuss two concrete setups, where the design-based OLS limit in Theorem \ref{thm:ols} reduces to a lagged treatment-effect estimand $\beta^{\mathrm{LD}}$. Specifically, we consider the causal estimand studied in \textcite{lin2025unifying}. Under Assumption~\ref{asmp:po}, for lag \(k=1,\dots,K\), define
\[
\beta_{k}^{\mathrm{LD}}
=
\frac1{T_\text{eff}}\sum_{t=K}^T
\E \Bigl(
Y_t\bigl(Z_{t-K+1:t}^{(t-k+1)=1}\bigr)
-
Y_t\bigl(Z_{t-K+1:t}^{(t-k+1)=0}\bigr)
\Bigr)\;,\footnote{The superscript \(\mathrm{LD}\) refers to Lin and Ding.}
\]
where \(Z_{t-K+1:t}^{(s)=a}\) denotes the treatment path $(Z_{t-K+1}, \dots, Z_t)$ obtained by replacing \(Z_s\) with \(a\), while leaving all other assignments unchanged. The expectation is taken over the treatment distribution. Let
\[
\beta^{\mathrm{LD}}
=
(\beta_{1}^{\mathrm{LD}},\dots,\beta_{K}^{\mathrm{LD}})^\top.
\]
Intuitively, \(\beta^{\mathrm{LD}}\) summarizes the design-marginal lagged treatment effects under the given assignment design. 

\paragraph{Setup 1: i.i.d. Bernoulli randomization.}
The following result shows that, under i.i.d. Bernoulli randomization, our OLS projection target coincides with \(\beta^{\mathrm{LD}}\).

\begin{corollary}\label{cor:iid}
Suppose Assumptions \ref{asmp:regular_rand} and \ref{asmp:po} hold with $\rho + \gamma = 1$. Then
\[
\widehat\beta - \beta^{\mathrm{LD}} = o_p(1) \;.
\]
\end{corollary}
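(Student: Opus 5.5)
The plan is to apply Theorem~\ref{thm:ols}, which gives $\widehat\beta - \Sigma_K(q)^{-1}C_T = o_p(1)$. Under $\rho+\gamma=1$ we have $q = 1-\rho-\gamma = 0$, so $\Sigma_K(0)=I_K$ and $\widehat\beta - C_T = o_p(1)$. It therefore suffices to show that $C_T = \beta^{\mathrm{LD}}$ exactly, and in fact that $C_{t,k}$ equals the $t$-th summand of $\beta^{\mathrm{LD}}_k$ for every $t$ and $k$.

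When $\rho+\gamma=1$, Example~\ref{ex:indep} shows that the chain forgets its past after one step. Combined with the stationary initialization $\Pr(Z_1=1)=\alpha=\rho$, this makes $Z_1,\dots,Z_T$ i.i.d. Bernoulli($\alpha$). (Lemma~\ref{lem:homog_moments} with $q=0$ gives the pairwise version, but I would argue full independence directly from the Markov structure: the transition row does not depend on the current state.)

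Fix $t$ and $k$, and write $s=t-k+1$. By Assumption~\ref{asmp:po}, $Y_t^{\mathrm{obs}} = Y_t(Z_{t-K+1},\dots,Z_t)$. By independence, conditioning on $Z_s=a$ leaves the law of the other coordinates $Z_{-s}=(Z_r)_{r\ne s,\, t-K+1\le r\le t}$ unchanged. Hence
\[
\E\bigl(Y_t^{\mathrm{obs}}\mid Z_s=a\bigr)=\E\, Y_t\bigl(Z_{t-K+1:t}^{(s)=a}\bigr),
\]
where the right-hand expectation is over the unconditional law of $Z_{-s}$. The right-hand side is finite by the uniform boundedness in Assumption~\ref{asmp:po}. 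Subtracting the cases $a=1$ and $a=0$ shows that $C_{t,k}$ equals the $t$-th summand in $\beta_k^{\mathrm{LD}}$. Averaging over $t=K,\dots,T$ then gives $C_T=\beta^{\mathrm{LD}}$, and the corollary follows.

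The only delicate point is the conditioning step: it requires joint independence of $Z_s$ from the whole vector $Z_{-s}$, not merely uncorrelatedness. I would verify this by factorizing the Markov path probability, $\Pr(z_1)\prod_r P(z_r,z_{r+1})$. With $P(z,1)=\rho$ for both values of $z$, this product equals $\prod_r \rho^{z_r}(1-\rho)^{1-z_r}$.
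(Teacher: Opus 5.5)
Your proposal is correct and follows the paper's own proof. You invoke Theorem~\ref{thm:ols}, note that $q=0$ gives $\Sigma_K(0)=I_K$, and use independence of the assignments to identify $C_{t,k}$ with the $t$-th summand of $\beta^{\mathrm{LD}}_k$. Your explicit path-probability factorization, which establishes joint rather than merely pairwise independence, supplies rigor that the paper leaves implicit by citing Example~\ref{ex:indep}.
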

By definition of the random-switch design, $\rho+ \gamma = 1$ recovers the i.i.d. Bernoulli randomization with treatment probability $\rho$. That is, the OLS estimate consistently captures the design-based estimand $\beta^{\mathrm{LD}}$ under i.i.d. Bernoulli randomization. 

\paragraph{Setup 2: Linear additive potential outcomes.}
The first setup imposes independence on the design, whereas our second setup instead allows Markov dependence in the design but imposes additive linear structure on the potential outcomes.
\begin{assumption}
\label{asmp:additive-po}
For \(t=K,\dots,T\), the potential outcomes satisfy
\[
Y_t(z_{1}, \dots, z_t) = \mu_{t} + \sum_{j=1}^K \delta_{t,j}z_{t-j+1}\;,
\]
where \(\mu_{t}\) and \(\delta_{t,j}\) are fixed quantities satisfying
\[
\sup_t|\mu_t|<\infty\;,
\qquad
\sup_{t,j}|\delta_{t,j}|<\infty\;.
\]
\end{assumption}

Notably, Assumption \ref{asmp:additive-po} allows time-varying treatment effects as well as time fixed effects. In this sense, Assumption \ref{asmp:additive-po} can be viewed as a relaxation of the impulse-response model \eqref{eq:main_model}. In addition, Assumption~\ref{asmp:additive-po} implies Assumption~\ref{asmp:po}, since $K$ is fixed and the coefficients are uniformly bounded.

\begin{corollary}
\label{cor:additive}
Suppose Assumptions \ref{asmp:regular_rand} and \ref{asmp:additive-po} hold. Then,
\[
\widehat\beta-\beta^{\mathrm{LD}} = o_p(1)\;.
\]
Moreover, we have 
\[
\beta^{\mathrm{LD}}_k = \frac{1}{T_\text{eff}} \sum_{t=K}^T \delta_{t, k}\;.
\]
\end{corollary}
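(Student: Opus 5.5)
Under Assumption~\ref{asmp:additive-po}, I would compute the contrast vector $C_T$ explicitly and then show that $\Sigma_K(q)^{-1}C_T$ equals $\beta^{\mathrm{LD}}$. The conclusion then follows from Theorem~\ref{thm:ols}.

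\textbf{Step 1: identify $\beta^{\mathrm{LD}}$.} Because the potential outcomes are additive, replacing $Z_{t-k+1}$ by $1$ and by $0$ changes $Y_t$ by exactly $\delta_{t,k}$, whatever the other assignments are. Taking the expectation and averaging over $t$ gives $\beta^{\mathrm{LD}}_k = T_\text{eff}^{-1}\sum_{t=K}^T \delta_{t,k}$, which is the second claim. Write $\bar\delta = (\beta^{\mathrm{LD}}_1,\dots,\beta^{\mathrm{LD}}_K)^\top$.

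\textbf{Step 2: compute $C_{t,k}$.} From $Y_t^{\obs} = \mu_t + \sum_j \delta_{t,j} Z_{t-j+1}$ we get
\[
C_{t,k} = \sum_{j=1}^K \delta_{t,j}\bigl(\E(Z_{t-j+1}\mid Z_{t-k+1}=1) - \E(Z_{t-j+1}\mid Z_{t-k+1}=0)\bigr),
\]
since the $\mu_t$ terms cancel. The chain is stationary and reversible (every two-state chain is reversible), so in both time directions
\[
\Pr(Z_{s}=1\mid Z_{u}=1) - \Pr(Z_{s}=1\mid Z_{u}=0) = q^{|s-u|}.
\]
In the forward direction this is Proposition~\ref{prop:moments} (equivalently Lemma~\ref{lem:homog_moments}): the difference is $A(u,s-1)=q^{s-u}$. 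In the backward direction I would use Bayes' rule together with Lemma~\ref{lem:homog_moments}. Conditioning on $Z_u=1$ gives $\E Z_sZ_u/\alpha = \alpha + (1-\alpha)q^{|s-u|}$. Conditioning on $Z_u=0$ gives $(\alpha - \E Z_sZ_u)/(1-\alpha) = \alpha - \alpha q^{|s-u|}$. Their difference is $q^{|s-u|}$. Hence $C_{t,k} = \sum_j \delta_{t,j} q^{|j-k|}$, so $C_t = \Sigma_K(q)\,\delta_t$ with $\delta_t=(\delta_{t,1},\dots,\delta_{t,K})^\top$.

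\textbf{Step 3: conclude.} Averaging over $t$ gives $C_T = \Sigma_K(q)\bar\delta$. Since $\Sigma_K(q)$ is nonsingular for $|q|<1$, which Assumption~\ref{asmp:regular_rand} guarantees, we get $\Sigma_K(q)^{-1}C_T = \bar\delta = \beta^{\mathrm{LD}}$. Assumption~\ref{asmp:additive-po} implies Assumption~\ref{asmp:po}, so Theorem~\ref{thm:ols} applies and yields $\widehat\beta - \beta^{\mathrm{LD}} = o_p(1)$.

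The main obstacle is Step 2: checking that the conditional-difference identity holds for lags on both sides of the conditioning time. Reversibility together with the symmetric formula for $\E Z_sZ_t$ in Lemma~\ref{lem:homog_moments} should settle this. The identity also needs $t-K+1\ge 1$, which holds for $t\ge K$, so every index stays in range.
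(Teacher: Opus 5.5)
Your proof is correct and follows the paper's argument essentially step for step: identify $\beta^{\mathrm{LD}}_k$ by additivity, write $C_{t,i}$ as $\sum_j \delta_{t,j}$ times a conditional-probability difference, evaluate that difference as $q^{|i-j|}$ via Bayes' rule and Lemma~\ref{lem:homog_moments}, and conclude $C_T=\Sigma_K(q)\beta^{\mathrm{LD}}$ before invoking Theorem~\ref{thm:ols}. Your appeal to reversibility is harmless but unnecessary: the Bayes computation with the symmetric formula $\E Z_sZ_u=\alpha^2+\alpha(1-\alpha)q^{|s-u|}$ already covers both time directions, which is how the paper handles it.
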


The corollary above shows that the OLS estimator retains a causal interpretation under additive linear potential outcomes, again recovering the causal estimand $\beta^{\mathrm{LD}}$ studied in \textcite{lin2025unifying}.

\subsection{Proofs}
\begin{proof}[(Proof of Lemma \ref{lem:LLN})]
\noindent \underline{Proof of Equation \eqref{eq:LLN1}. }
We first establish the limit of $\bar{X}$. Note that by definition, 
\[
\bar{X} = \frac{1}{T_\text{eff}} \sum_{t=K}^T X_t\;.
\]
By Definition \ref{def:soft}, a random-switch design $\{Z_t\}_{t=1}^T$ is initialized from the stationary distribution, and therefore $\E X_t = \alpha \bone_K$ for all $t$. To apply the law of large numbers, we note that $\rho, \gamma \in (0, 1)$ under Assumption \ref{asmp:regular_rand} and hence $\{Z_t\}_{t=1}^T$ is a two-state irreducible and aperiodic Markov chain \parencite{norris1998markov}. As a result, $\{X_t\}_{t=K}^T = \{(Z_{t}, \dots, Z_{t-K+1})\}_{t=K}^T$ is also an irreducible and aperiodic Markov chain. Therefore, by the law of large numbers for Markov chains (see, e.g., Theorem 17.0.1 of \textcite{meyn2012markov}), we have
\[
 \frac{1}{T_\text{eff}} \sum_{t=K}^T (X_t - \E X_t) = \frac{1}{T_\text{eff}} \sum_{t=K}^T X_t - \alpha \bone_K = o_p(1)\;.
\]
Here, $\bone_n$ is a length-$n$ all-one vector. 
Since $\bar X-\alpha\bone_K=o_p(1)$, replacing $\bar X$ by $\alpha\bone_K$ in Equation \eqref{eq:LLN1} only changes the left hand side by $o_p(1)$.
To prove Equation \eqref{eq:LLN1}, it then suffices to show the following two results: 
\begin{align}
\frac1{T_\text{eff}}\sum_{t=K}^T
(X_t-\alpha\bone_K )(X_t-\alpha \bone_K)^\top &= 
\E (X_T-\alpha\bone_K )(X_T-\alpha \bone_K)^\top + o_p(1)\;,\label{eq:LLN3} \\
\E (X_T-\alpha\bone_K )(X_T-\alpha \bone_K)^\top &= 
\alpha (1 - \alpha) \Sigma_K(q)\;.\label{eq:Sigma_eval}
\end{align}
Equation \eqref{eq:LLN3} also follows from the law of large numbers for Markov chains. Specifically, let 
\begin{align*}
& \frac1{T_\text{eff}} \sum_{t=K}^T
\left[
(X_t-\alpha\bone_K)(X_t-\alpha\bone_K)^\top -
\E \Bigl((X_t-\alpha\bone_K)(X_t-\alpha\bone_K)^\top\Bigr)
\right] = \frac{1}{T_\text{eff}} \sum_{t=K}^T g(X_t) - \E g(X_t)\;, \\
g(X_t) & = (X_t-\alpha\bone_K)(X_t-\alpha\bone_K)^\top\;.
\end{align*}
As explained above, $X_t$ is an irreducible and aperiodic Markov chain. In addition, the function $g$ is uniformly bounded on the support of $X_t$, i.e., $\{0, 1\}^K$. Then, we apply the law of large numbers (Theorem 17.0.1 of \textcite{meyn2012markov}) componentwise to obtain
\[
\frac{1}{T_\text{eff}} \sum_{t=K}^T \bigl(g(X_t) - \E g(X_t)\bigr) = o_p(1)\;.
\]
Since $X_t$ is stationary, $\E g(X_t)$ does not depend on $t$. Therefore, the law of large numbers implies Equation~\eqref{eq:LLN3}.
Additionally, applying Lemma \ref{lem:homog_moments} to the left hand side of Equation \eqref{eq:Sigma_eval} verifies Equation~\eqref{eq:Sigma_eval}.

\noindent \underline{Proof of Equation \eqref{eq:LLN2}.}
For Equation \eqref{eq:LLN2}, it again suffices to prove the following two results: 
\begin{align}
\frac1{T_\text{eff}}\sum_{t=K}^T
 (X_t-\alpha \bone_K) Y_t^{\obs} &= \frac1{T_\text{eff}}\sum_{t=K}^T
 \E (X_t-\alpha \bone_K) Y_t^{\obs} + o_p(1)\;, \label{eq:LLN4}\\
 \frac1{T_\text{eff}}\sum_{t=K}^T
 \E (X_t-\alpha \bone_K) Y_t^{\obs} &= \alpha (1 - \alpha) C_T \label{eq:ct_eval}\;.
\end{align}

As explained above, $X_t$ is an irreducible and aperiodic Markov chain. By standard mixing results for finite-state Markov chains and standard covariance inequalities for strongly mixing sequences, there exist constants \(C<\infty\) and \(r\in(0,1)\) such that, for any uniformly bounded functions \(f_t\) and \(f_s\),
\begin{equation}\label{eq:mixing}
\left|
\operatorname{Cov}\{f_t(X_t),f_s(X_s)\}
\right| \le C r^{|t-s|}\;.
\end{equation}
See, for example, Theorem 3.1 of \textcite{bradley2005basic} for exponential strong mixing of finite-state irreducible aperiodic Markov chains and \textcite{rio1993covariance} for the corresponding covariance inequality.

Next, under Assumption \ref{asmp:po}, $Y_t^{\obs}$ is a function of $X_t$, and therefore we may define 
\[
f_t(X_t) \coloneqq (X_t-\alpha \bone_K) Y_t^{\obs}\;,
\]
and 
\[
\frac1{T_\text{eff}}\sum_{t=K}^T
 (X_t-\alpha \bone_K) Y_t^{\obs} = \frac1{T_\text{eff}}\sum_{t=K}^T
 f_t(X_t)\;.
\]
The function satisfies $\sup_{t \ge K} \sup_{x \in \{0, 1\}^K} \|f_t(x)\| < \infty$ due to the uniform boundedness in Assumption \ref{asmp:po}, and hence the covariance bound \eqref{eq:mixing} holds for $f_{t, j}$ for a given index $j$. For any $j = 1, \dots, K$, the variance of the $j$-th component satisfies
\begin{align*}
\var\Bigl(\frac1{T_\text{eff}}\sum_{t=K}^T
 f_{t, j}(X_t)\Bigr) &= \frac{1}{T_\text{eff}^2} \sum_{t=K}^T \sum_{s = K}^T \Cov(f_{t, j}(X_t), f_{s, j}(X_s))\\
 &\le \frac{1}{T_\text{eff}^2} \sum_{t=K}^T \sum_{s = K}^T |\Cov(f_{t, j}(X_t), f_{s, j}(X_s))|\\
 &\le \frac{C}{T_\text{eff}^2} \sum_{t=K}^T \sum_{s = K}^T r^{|t-s|} \le \frac{C^\prime}{T_\text{eff}}\;,
\end{align*}
for some constant $C^\prime > 0$. Since the right hand side converges to zero, by Chebyshev's inequality, we prove Equation \eqref{eq:LLN4}. 

To prove Equation \eqref{eq:ct_eval}, we apply Assumption \ref{asmp:po} to obtain that for each entry $Z_{t-j+1}$ in $X_t$, 
\begin{align*}
\E (Z_{t-j+1}-\alpha ) Y_t^{\obs} &= \E \Bigl((1 -\alpha ) Y_t^{\obs} \mid Z_{t-j+1} = 1\Bigr) \Pr(Z_{t-j+1} = 1) + \E \Bigl((0 -\alpha ) Y_t^{\obs} \mid Z_{t-j+1} = 0\Bigr) \Pr(Z_{t-j+1} = 0)  \\
&= \alpha (1 - \alpha) \Bigl(\E (Y_t^{\obs} \mid Z_{t-j+1} = 1) - \E (Y_t^{\obs} \mid Z_{t-j+1} = 0)\Bigr) \\
&= \alpha (1 - \alpha) \Bigl(\E (Y_t(Z_{t-K+1}, \dots, Z_{t-j}, 1, Z_{t-j+2}, \dots, Z_t) \mid Z_{t-j+1} = 1) \\
& - \E (Y_t(Z_{t-K+1}, \dots, Z_{t-j}, 0, Z_{t-j+2}, \dots, Z_t) \mid Z_{t-j+1} = 0)\Bigr) \\
&= \alpha (1 - \alpha) C_{t, j}\;.
\end{align*}
For notational simplicity, the display above is written for the interior case $1<j<K$; the boundary cases $j=1$ and $j=K$ follow by the same argument. Lastly, averaging the expression above over $t = K, \dots, T$ proves Equation \eqref{eq:ct_eval}.
\end{proof}

\begin{proof}[(Proof of Corollary \ref{cor:iid})]
Since $\rho + \gamma = 1$, we have $q = 0$ and hence $\Sigma_K(q) = I_K$. Since $\rho + \gamma = 1$ implies independent treatments (Example \ref{ex:indep}), we have
\[
C_{t, k} = \E \Bigl(Y_t^{\mathrm{obs}}\mid Z_{t-k+1}=1\Bigr) - \E \Bigl(Y_t^{\mathrm{obs}}\mid Z_{t-k+1}=0\Bigr) = \E \Bigl(
Y_t\bigl(Z_{t-K+1:t}^{(t-k+1)=1}\bigr) -
Y_t\bigl(Z_{t-K+1:t}^{(t-k+1)=0}\bigr)
\Bigr)\;.
\]
Therefore, $\Sigma_K(q)^{-1} C_T = \beta^{\mathrm{LD}}$, and the claimed result holds due to Theorem \ref{thm:ols}. 
\end{proof}

\begin{proof}[(Proof of Corollary \ref{cor:additive})]
Under Assumption~\ref{asmp:additive-po}, one can verify from the definition that 
\[
\beta_k^{\mathrm{LD}} = \frac{1}{T_\text{eff}} \sum_{t=K}^T \delta_{t, k}\;.
\]
Therefore, it suffices to prove $C_T = \Sigma_K(q) \beta^{\mathrm{LD}}$. For $t = K, \dots, T$ and $i = 1, \dots, K$, we have
\[
C_{t,i}
=
\E (Y_t^{\mathrm{obs}}\mid Z_{t-i+1}=1) -
\E (Y_t^{\mathrm{obs}}\mid Z_{t-i+1}=0)\;.
\]
Using the additive linear form in Assumption \ref{asmp:additive-po}, we have
\begin{align*}
C_{t,i}
&=
\sum_{j=1}^K
\delta_{t,j}
\Bigl(
\E(Z_{t-j+1}\mid Z_{t-i+1}=1)- \E(Z_{t-j+1}\mid Z_{t-i+1}=0) \Bigr)\;.
\end{align*}
For any $i, j = 1, \dots, K$, we have 
\begin{align*}
& \E(Z_{t-j+1}\mid Z_{t-i+1}=1)- \E(Z_{t-j+1}\mid Z_{t-i+1}=0) \\
&= \Pr(Z_{t-j+1} = 1\mid Z_{t-i+1}=1) - \Pr(Z_{t-j+1} = 1\mid Z_{t-i+1}=0) \\
&= \frac{\Pr(Z_{t-j+1} = 1, Z_{t-i+1}=1)}{\Pr(Z_{t-i+1}=1)} - \frac{\Pr(Z_{t-j+1} = 1, Z_{t-i+1}=0)}{\Pr(Z_{t-i+1}=0)} \\
&= \frac{\Pr(Z_{t-j+1} = 1, Z_{t-i+1}=1)\Pr(Z_{t-i+1}=0) - \Pr(Z_{t-j+1} = 1, Z_{t-i+1}=0) \Pr(Z_{t-i+1}=1)}{\Pr(Z_{t-i+1}=1)\Pr(Z_{t-i+1}=0)} \\
&= \frac{\Pr(Z_{t-j+1} = 1, Z_{t-i+1}=1) - \Pr(Z_{t-j+1} = 1) \Pr(Z_{t-i+1}=1)}{\Pr(Z_{t-i+1}=1)\Pr(Z_{t-i+1}=0)} = q^{|i - j|}\;,
\end{align*}
where the last equality follows from Lemma \ref{lem:homog_moments}. Note that $q^{|i - j|}$ is the $(i, j)$-th entry of the matrix $\Sigma_K(q)$. We have 
\[
C_T = \Sigma_K(q) \beta^{\mathrm{LD}}\;.
\]
This completes the proof.
\end{proof}

\end{document}